\newcommand{\yadi}{\nomenclature}
\newenvironment{proof}{\noindent{\sc Proof.}}{\qed}
\newtheorem{theorem}{Theorem}[section]
\newtheorem{lemma}{Lemma}[section]
\newtheorem{remark}{Remark}[section]
\newtheorem{definition}{Definition}[section]
\newcommand{\qed}{$\blacksquare$}
\def\bhag#1{\noindent
\setcounter{equation}{0}
\section{#1}
}
\def\RR{{\mathbb R}}
\def\CC{{\mathbb C}}
\def\ZZ{{\mathbb Z}}
\def\TT{\mathbb T}
\definecolor{dkgreen}{rgb}{0,0.6,0}
\definecolor{gray}{rgb}{0.5,0.5,0.5}
\definecolor{mauve}{rgb}{0.58,0,0.82}
\tiny\color{gray},
\def\be{\begin{equation}}
\def\ee{\end{equation}}
\def\bea{\begin{eqnarray}}
\def\eea{\end{eqnarray}}
\def\disp{\displaystyle}
\def\donchitre#1#2{\vskip 6.5cm\noindent
\parbox[t]{1in}{\special{eps:#1.eps x=6.5cm y=5.5cm}}
\hbox to 7cm{}\parbox[t]{0.0cm}{\special{eps:#2.eps x=6.5cm y=5.5cm}}}
\def\gs{\gtrsim}
\def\ls{\lesssim}
\def\hati{\hat{\mathbf{i}}}
\title{Localized kernel method for separation of linear chirps}
\author{
S.~Kitimoon\thanks{Institute of Mathematical Sciences, Claremont Graduate University, Claremont, CA 91711, U.S.A..\\
\textsf{email:} sippanon.kitimoon@cgu.edu},\
E. Mason\thanks{
Hawkeye 360, Herndon, VA 20170, U.S.A.
\textsf{email:} eric.mason@he360.com},\
 H.~N.~Mhaskar\thanks{
Institute of Mathematical Sciences, Claremont Graduate University, Claremont, CA 91711, U.S.A.. The research of HNM was supported in part by  ONR grants N00014-23-1-2394, N00014-23-1-2790.
\textsf{email:} hrushikesh.mhaskar@cgu.edu}
}
\begin{document}
\maketitle
\begin{abstract}
The task of separating a superposition of signals into its individual components is a common challenge encountered in various signal processing applications, especially in domains such as audio  and radar signals. 
A previous paper by Chui and Mhaskar proposes a method called Signal Separation Operator (SSO) to find the instantaneous frequencies and amplitudes of such superpositions where both of these change continuously and slowly over time.
In this paper, we amplify and modify this method in order to separate chirp signals in the presence of crossovers, a very low SNR, and discontinuities.
We give a theoretical analysis of the behavior of SSO in the presence of noise to examine the relationship between the minimal separation, minimal amplitude, SNR, and sampling frequency.
Our method is illustrated with a few examples, and numerical results are reported on a simulated dataset comprising 7 simulated signals.
 
\end{abstract}

\noindent\textbf{Keywords:} Chirp separation, signal separation operator

\vspace{1em}

\section{Introduction}\label{bhag:intro}

The task of separating a superposition of signals into its individual components is a common challenge encountered in various signal processing applications, especially in domains such as audio \cite{makino2018,Nugraha2016,yuan2025} and radar \cite{Joshil2020,Filice2013,Dutt2019,Wang2022,Nuhoglu2023,Yanchao2018}. Due to the extensive set of applications and the frequent occurrence of signal source separation challenges, methodologies are often tailored to the specific field in question, as well as the level of prior knowledge of both the component signals and the mixing mechanism.  

In these applications, the RF system makes measurements of the electromagnetic spectrum with a large instantaneous bandwidth, typically above $1$ GHz and continuing to increase with continual improvements in hardware capabilities.
The resulting data may consist of tens to hundreds of signals mixed together after just a few seconds to minutes of recording, many overlapping in time and/or frequency and measured at low signal-to-noise (SNR).
Most radars transmit a signal that consists of a sequence of linear frequency modulated (LFM) pulses, commonly referred to as \emph{chirp signals}.

A linear chirp is a signal (pulse) of the form $A(t)\exp(\hati\phi(t))$, where $\hati=\sqrt{-1}$ and $\phi$ is a quadratic polynomial, so that the instantaneous frequency $\phi'(t)$ is a linear function.
If $\gamma$ is the starting time of the pulse, $d$ is its duration, $\omega$ is the initial frequency, $B$ is the bandwidth, then the phase of the pulse is given by
\be\label{eq:pulsephasedef}
\phi(t)=\phi(\gamma,  \omega, B, d; t)=\omega(t-\gamma)+(B/d)(t-\gamma)^2.
\ee
\yadi{$\omega$, $\gamma$, $B$, $d$}{Chirp parameters} The pulse is assumed to repeat periodically in a sequence of $m$ pulses with a pulse repetition interval PRI being the time between the subsequent pulses or bursts of energy.
Thus, for a signal $s$ starting at $t=t_0^*$ its $n$-th pulse is given by
\be\label{eq:pulsebursts}
s_n(t)=s_n(A,  \omega, B, t_0^*, d, m, \mbox{PRI}; t)= \begin{cases}
\begin{aligned}
&\disp A(t)\exp\left(\hati (\omega(t-n*\mbox{PRI})+(B/d)(t-n*\mbox{PRI})^2)\right),&\\
&\mbox{ if $t_0^*+n*\mbox{PRI}\le t\le t_0^*+n*\mbox{PRI}+d$, \ $n=0,\cdots, m-1$,}\\
&0, &\mbox{otherwise}.
\end{aligned}
\end{cases}
\ee
\yadi{PRI}{pulse repetition interval}
\yadi{$\hati$}{$\sqrt{-1}$}
 Naturally, the chirp signal has the form $s=\sum_n s_n$.

The first step in processing the electromagnetic spectrum is the detection and parameterization step, where the time and frequency locations of individual pulses are determined, and their parameters estimated. 
The second step in the EW receiver is deinterleaving, where groups of pulses are associated with each other to obtain the different radar signals, from which further insights are extracted.  
Here, we focus on the first problem of the detection, separation, and parameter estimation of individual radar pulses.

Thus, the problem which we intend to solve is the following. 
We observe a signal of the form
\be\label{eq:observations}
F(t)=f(t)+\epsilon(t),
\ee
where $\epsilon$ is a sub-Gaussian random noise, and $f$ is the ground truth signal
\be\label{eq:groundtruth}
f(t)=\sum_{j=1}^K s(A_j, \omega_j, B_j, t_{0,j}^*, d_j, m_j, \mbox{PRI}_j;t), \qquad t\in [0,T].
\ee
From equidistant observations $F(\ell\delta)$, $\ell=0,\cdots, N$, we wish to determine the number of pulses at any time $t$ as well as the parameters $A_j, \omega_j, B_j, t_{0,j}^*, d_j, m_j, \mbox{PRI}_j$ of the pulses.
In this paper, we will assume that all $A_j$'s are $\equiv 1$. 

The instantaneous frequency of the chirp $s_n$ in \eqref{eq:pulsebursts} at time $t$ is defined by $\omega+(2B/d)(t-n*\mbox{PRI})$.
Our approach is to assume (following \cite{ingrid2011, bspaper}) that on a sufficiently small interval of the form $[t-\Delta,t+\Delta]$, this frequency is approximated by a constant frequency at the center of the interval. 
We will then use specially constructed localized kernels to solve the problem for each of these intervals as described in \cite{loctrigwave, bspaper}.
A more elaborate subdivision is necessary when the instantaneous frequencies of different pulses cross each other.
We will demonstrate experimentally that our method is very stable, capable of working with a very small SNR if the sampling rate is sufficiently high.
Thus, the only limitation is the hardware capability, not a theoretical limitation.
We will analyze theoretically the connection between $\Delta$ and the various parameters of the pulses and the accuracy.

\color{black}
Even though the problem is formulated in terms of a continuous variable, we only have discrete samples of the form $F(\ell \delta)$, $|\ell|<N$.
Therefore, we cannot use techniques based on the Fourier transform, including short term Fourier transform. 
This would require a discretization, resulting in many errors: discretization error, aliasing, etc. 
One could think of extending $F$ as a $2N\delta$ periodic function, and using harmonic analysis on the group $\ZZ_N$.
This would restrict the instantaneous frequencies to be lattice points.

Our approach is simpler.
First,  we will show in Theorem~\ref{theo:reduction} that for any sufficiently small interval of the form $[t^*-\Delta, t^*+\Delta]$, $F(\ell\delta)$ can be approximated by a quantity of the form $\tilde{\mu}(\ell)=\sum_{k=1}^{K(t^*)}A_k(t^*)\exp(-\hati \lambda_k(t^*)\ell)+\mbox{noise}$, where $\lambda_k(t^*)$'s are (unitless) instantaneous frequencies at $t^*$, scaled by the sampling parameter $\delta$.
For a sufficiently large sampling rate (corresponding to a small $\delta$), we can ensure that each $\lambda_k(t^*)\in [0,\pi]$. 

We then revert our thinking about the physical notions of time and frequency. 
The points $t^*$ and the end points $t^*\pm \Delta$ need to be chosen so as to be among the grid points $\ell\delta$ at which the signal is sampled.
In our reverse thinking, we now identify the sampling point $\ell\delta$ in the interval $[t^*-\Delta, t^*+\Delta]$ with the  ``frequency'' $\ell$ in the sense of trigonometric Fourier series.
This has the advantage that we can use the samples directly without having to go through any transformations to convert them into the ``Fourier domain''. 
In particular, this bypasses problems like discretization, aliasing, circularity, etc. which would arise if we thought in terms of Fourier transform.

The instantaneous frequencies $\lambda_k(t^*)$ are now viewed as unknown points in the ``time domain'' $\RR/(2\pi\ZZ)$ to be determined.
In particular, we no longer need to require $\lambda_k(t^*)$'s to be grid points.

Then $\tilde{\mu}(\ell)$ are seen as Fourier coefficients of a $2\pi$-periodic (complex) measure $\mu$ associating the mass $A_k(t^*)$ with each point $\lambda_k(t^*)$. 
If the samples $F(\ell\delta)$ were available without noise for all $\ell\in\ZZ$, we would have formally: 
$$
\mu(x)=\sum_{\ell\in\ZZ}\tilde{\mu}(\ell)\exp(i\ell x)=\sum_{\ell\in\ZZ}F(\ell\delta)\exp(i\ell x), \qquad x\in\RR/(2\pi\ZZ).
$$
Even then, this Fourier series will not converge, and in any case, we have the values of $F(\ell\delta)$ only for $|\ell|<N$ for some $N$.
In the classical theory of trigonometric approximation, certain summability factors are introduced to obtain an approximation of a function given finitely many of its Fourier coefficients.
We  have constructed a highly localized summability kernel for this purpose.
This enables us both to give quantitative estimates on the error in the  estimation of the instantaneous frequencies, as well achieve a substantial noise reduction.

\color{black}
                 
The main contributions of this paper are the following.
\begin{enumerate}
\item This paper is the first of its kind that illustrates the use of SSO in the presence of a very high noise, crossover frequencies, and various components of the signal having different start and end points.
\item In particular, we describe how the SSO algorithm needs to be modified to detect the start and stop points of the various signal components and crossover frequencies.
\item We examine the theory behind the SSO to highlight the connection between the noise level, the sampling frequency, and the length of the snippet required for SSO to work properly.
\item {Our algorithms are fully automated, assuming only a few hyperparameters, listed in Table~\ref{tab:tunable_parameters}, and discussed in detail in Section~\ref{bhag:tunable}.}
\item Our algorithm utilizes FFT, and hence, is very fast.
\end{enumerate}

We will describe some related work in Section~\ref{bhag:related}. 
The mathematical foundation behind our work is explained in Section~\ref{bhag:foundations}, including the intuition and definition of the Signal Separation Operator (SSO) introduced in \cite{bspaper}. 
The algorithms and methodology to use SSO for the chirp separation problem are described in Section~\ref{bhag:algorithm} with step-by-step illustrations on one example.
In particular, we demonstrate the connection between the sampling rate, noise level, minimal separation among the instantaneous frequencies.
We have tested our algorithm on a database of 7 examples. 
The process to generate this data is described in Section~\ref{bhag:datagen}.
Section~\ref{bhag:tunable} describes some heuristics on the choice of the tunable parameters listed in Table~\ref{tab:tunable_parameters}.
A full report on   the entire dataset which we worked on is discussed in Section~\ref{bhag:numerical_results}, and the results summarized in the appendix.

\bhag{Related works}\label{bhag:related}

There are several general ways to approach the problem of radar signal detection and separation.  

In the case when the frequencies are constant, the problem is known as parameter estimation in exponential sums. 
This problem apparently goes back to the 1795 paper by Prony \cite{prony_original}.
Mathematically, the problem can be stated as follows.
Let $K\ge 1$ be an integer, $\{\lambda_j\}_{k=1}^K\subset (-\pi,\pi]$, $\mu=\sum_{j=1}^K A_j\delta_{\lambda_j}$ for some complex numbers $\{A_j\}_{k=1}^K$, where $\delta_x$ denotes the Dirac delta supported at $x$. 
Given noisy variants of the Fourier coefficients
\be\label{eq:stationary}
\hat{\mu}(\ell)=\sum_{j=1}^K A_j\exp(-\hati \ell \lambda_j), \qquad |\ell|<N,
\ee
for some integer $N\ge 1$ (where $\hati=\sqrt{-1}$), determine the locations $\lambda_j$ and the coefficients $A_j$.
\yadi{$\lambda_j$}{unitless, constant frequencies}
Of course, finding $\lambda_j$'s is a bigger challenge; once we obtain these, the $A_j$'s can be found using a least square fit for the data \eqref{eq:stationary}.
We will not go into a survey of the very large amount of literature on this problem, except to point out that our current work is based on ideas developed in \cite{loctrigwave, bspaper, mhaskar2024robust}.

If the radar pulses have sufficiently high SNR, then they can be detected using an energy-based envelope detector operating on the complex valued time-series data, known as In-phase and Quadrature (IQ) data \cite{lehtomaki2005,Juliano2020,Mello2018}.
It is actually more common for modern solid state radars to transmit low probability of intercept (LPI) waveforms that utilize LFM pulses at low and negative SNR \cite{Xie2025, pace2009detecting}; in this case, the envelope detector will no longer be suitable.
In the low power setting, there are two broad classes of methods, the first operates on the IQ data directly, while the second utilizes time-frequency transforms approaching detection and separation as a 2D signal processing problem. 
In the time domain, there are analytic methods that operate on the IQ data directly, such as variations of independent components algorithm (ICA) \cite{Joshil2020,Lin2020,Jiang2015,Shao2020}, with extensions to nonlinear chirps \cite{PENG2024,zhou2018}, and incorporation of simple neural networks \cite{pang2022}.
A more common approach is to utilize time-frequency transforms and process the result as an image instead of operating on the IQ data directly \cite{Jin2022,Swiercz2020,Hou2020,Gongming2019,Liu2015,Su2022}.

A general framework that includes both the problems of chirp separation and parameter estimation in exponential sums is the following.
One observes a signal of the form
 \be\label{eq:nonstationary}
f(t)=\sum_{j=1}^K f_j(t)=\sum_{j=1}^K A_j(t)\exp(\hati \phi_j(t)), \qquad 0\le t\le T,
\ee
where the instantaneous amplitudes (IA) $A_j$'s are complex valued, and $\phi_j$'s are differentiable functions of time.
\yadi{$A_j(t)$, $\phi_j(t)$}{Instantaneous complex amplitudes and phases}

The quantity $\phi_j'(t)$ is known as the instantaneous frequency (IF) of the component $f_j$.
The problem is to estimate the IFs and the IAs.
This is also a very important and old problem, starting with a paper by a Nobel Laureate, Gabor \cite{gabor1946}.
One of the first approaches to solve the problem is the Hilbert-transform based empirical mode decomposition (EMD) by Huang \cite{huang1998empirical}. 
However, there is no mathematical theory behind this approach.
A mathematically rigorous approach was given by Daubeschies and her collaborators \cite{daubechies1996nonlinear, ingrid2011}.
These seminal papers have given rise to a lot of variations and improvements.
A complete survey is both beyond the scope of this paper as well as not relevant here.
We refer to \cite{li2022chirplet} for  more references on this topic.
In our paper, we will use the algorithm given in \cite{thakur2013synchrosqueezing} for comparing our method with SST.

In \cite{bspaper}, we have applied the technique based on localized trigonometric kernels to define a signal separation operator (SSO) to solve this problem under conditions weaker than those in \cite{ingrid2011}.
Our conditions do assume that all the components are present in the interval $[0,T]$; i.e., the number of signals $f_j$ is the same throughout the interval, and there is no discontinuity in any of these.
Our method finds the IFs by looking at the peaks of a power spectrum rather than an elaborate reference curve mechanism of SST. 
Likewise, the IAs are estimated by a simple substitution compared to the limiting process required in SST. 
Finally, our method can be implemented very fast using FFT.

The paper \cite{li2022chirplet} deals specifically with the separation of chirplets, based on a modification of a filter described in \cite{bspaper}.
It is not clear when the various conditions in this paper are satisfied.
Also, the method requires a three dimensional search.
In this paper, we will demonstrate how the methods developed in \cite{bspaper} can be adapted for the chirp separation problem, including the case of crossover frequencies.



\section{Mathematical foundations}\label{bhag:foundations}

In this section, we start in Section~\ref{bhag:stationary} by explaining the basic idea and theorem behind our methods in the case of signals with constant amplitudes and frequencies. 
In Section~\ref{bhag:reduction}, we explain  how the general problem of blind source signal separation can be reduced to this case at each point under certain assumptions as in \cite{ingrid2011, bspaper}. 
A theoretical analysis of the  performance of SSO in the presence of noise is given in Section~\ref{bhag:sso}.

\subsection{Constant amplitudes and frequencies}\label{bhag:stationary}
In the sequel, we write $\TT=\RR/(2\pi\ZZ)$; i.e., $\TT$ is the quotient space of $\RR$ where $x$ and $y$ are identified if $x-y$ is an integer multiple of $2\pi$.
We describe first the case when the amplitudes and frequencies are constant; i.e., the problem of parameter estimation with the information of the form
\be\label{eq:stationarynoisy}
\tilde{\mu}(\ell)=\sum_{j=1}^K A_j\exp(-\hati \ell \lambda_j)+\epsilon_\ell, \qquad |\ell|<N,
\ee
where $\epsilon_\ell$'s are independent realizations of a sub-Gaussian random variable.
This discussion summarizes the corresponding results in \cite{mhaskar2024robust}.
Following \cite[Section~2.3]{boucheron2013concentration}, we recall that a real-valued random variable $X$ is called sub-Gaussian with parameter $V$ ($X\in \mathcal{G}(V)$) if $\mathbb{E}(X)=0$ and $\log \mathbb{E}(\exp(tX))\le tV^2/2$ for all $t>0$; e.g., if $X$ is a mean zero normal random variable with variance $V$, then $X\in \mathcal{G}(V)$. 
Likewise, any bounded random variable is sub-Gaussian.
We define a complex valued random variable $X$ to be in $\mathcal{G}(V)$ if its real and imaginary parts are in $\mathcal{G}(V)$.
\yadi{$V$}{Variance parameter of a sub-Gaussian random variable}
It is not difficult to see that if $X_1,\cdots,X_n$ are i.i.d., complex valued variables all in $\mathcal{G}(V)$, $\mathbf{a}=(a_1,\cdots,a_n)\in\CC^n$, $|\mathbf{a}|_n^2=\sum_{\ell=1}^n |a_\ell|^2$, then $\sum_{\ell=1}^n a_\ell X_\ell \in \mathcal{G}(|\mathbf{a}|_nV)$, and hence, that
\begin{equation}\label{eq:subgaussian_sum_tail}
\mathsf{Prob}\left(\left|\sum_{\ell=1}^n a_\ell X_\ell\right| >t\right)\le 4\exp\left(-\frac{t^2}{4|\mathbf{a}|_n^2V^2}\right), \qquad t>0.
\end{equation}

A crucial role in our theory is played by certain localized trigonometric polynomial kernels, which we now define.
Let $H:\RR\to [0,1]$ such that $H(t)=H(-t)$ for all $t\in\RR$, $H(t)$ is a constant if $|t|\le 1/2$, and $H(t)=0$ if $|t|\ge 1$. 
The actual choice is not important for our theory, but we will take $H$ to be fixed in the following discussion.
\yadi{$H$}{low pass filter}
\yadi{$\Phi_n$}{Localized trigonometric kernel}
\yadi{$\sigma_n$}{Reconstruction operator}

We define for 
\be\label{eq:lockerndef}
\hbar_n=\left\{\sum_{|\ell|<n}H\left(\frac{|\ell|}{n}\right)\right\}^{-1}\!\!, \quad\ \Phi_n(x)=\hbar_n\sum_{|\ell|<n}H\left(\frac{|\ell|}{n}\right)e^{\hati\ell x}, \qquad x\in\TT, \ n>0.
\ee
We note that the normalizing factor $\hbar_n$ is chosen so that
\begin{equation}\label{eq:phimax}
\max_{x\in \TT}|\Phi_n(x)|=\Phi_n(0)=1.
\end{equation}
An important property of $\Phi_n$ is the following localization estimate: For every $S\ge 2$, there exists $L=L(H,S)\ge 1$ such that 
\begin{equation}\label{eq:locest}
|\Phi_n(x)| \le \frac{L}{\max(1, (n|x|)^S)}, \qquad x\in\TT,\ n>0.
\end{equation}
Explicit expressions for $L$ in terms of $H$ and $S$ are given in \cite{singdet}.
The estimate \eqref{eq:locest} implies that $\hbar_n^{-1}\Phi_n(x)$ is an approximation to the Dirac delta $\delta_0$.

The fundamental idea behind our approach is the following. 
We define 
\be\label{eq:sigmadef}
\sigma_n(\tilde{\mu})(x)=\hbar_n\sum_{|\ell|<n}H\left(\frac{|\ell|}{n}\right)\tilde{\mu}(\ell)e^{\hati\ell x}, \quad \sigma_n(\mu)(x)=\hbar_n\sum_{|\ell|<n}H\left(\frac{|\ell|}{n}\right)\hat{\mu}(\ell)e^{\hati\ell x}, \qquad x\in\TT.
\ee
Then it is not difficult to verify that
\be\label{eq:funda_idea}
\sigma_n(\tilde{\mu})(x)=\sum_{j=1}^K A_j\Phi_n(x-\lambda_j)+E_n(x)=\sigma_n(\mu)(x)+E_n(x),
\ee
where 
\be\label{eq:noisesigma}
E_n(x)=\hbar_n\sum_{|\ell|<n}H\left(\frac{|\ell|}{n}\right)\epsilon(\ell)e^{\hati\ell x}, \qquad x\in\TT.
\ee
Because of the localization of $\Phi_n$, one can deduce that $\sigma_n(\mu)(x)$ is small when $x$ is away from each of the $\lambda_j$'s, and when it is large then $x$ must be close to exactly one of the $\lambda_j$'s. 
Moreover, $E_n(x)$ being a weighted average of the noise realizations is small for large enough $n$.
These observations are made precise in the following Theorem~\ref{theo:stationary_main}, proved in \cite{mhaskar2024robust}.
\color{black}
\begin{remark}\label{rem:stft}
{\rm
In analogy to Section~\ref{bhag:intro}, it is customary to define  the signal in the form
\be\label{eq:constsignal}
f(t)=\sum_{j=1}^K A_j\exp(-\hati t \nu_j)+\epsilon(t), \qquad t\in\RR,
\ee
and consider $\tilde{\mu}(\ell)$ to be samples of this signal at $\ell/R$, where $R$ is the sampling rate, and $\lambda_j=\nu_j/R$. 
It is then tempting to use the short term Fourier transform (STFT) with different windows, and think of \eqref{eq:sigmadef} to be discretization of the same.
From a purely mathematical point of view, this strategy ignores several errors. 
STFT is defined as an integral over $\RR$.
So, this approach would involve a discretization error as well as an aliasing error.
In \eqref{eq:constsignal}, the frequencies $\nu_j$ can be any real numbers, in \eqref{eq:sigmadef} with the notation as just described, one requires the $\lambda_j$'s to be in $(-\pi,\pi]$ - two $\lambda_j$'s which are equal modulo $2\pi$ cannot be distinguished. 
We may think of the procedure called periodization to get from the STFT to \eqref{eq:sigmadef}.
However, rather than getting into these complications, we have worked directly in the context of the theory of Fourier series, so that both of these errors simply don't arise.
\qed}
\end{remark}
\color{black}

In this paper, we will encounter several constants whose values are hard to estimate, but not important to describe the results. 
We make the following convention.

\vspace{2ex}

\noindent\textbf{Constant convention}

\vspace{2ex}

\textit{
The letters $c, c_1,\cdots$ will denote generic positive constants depending on $H$ and $S$ alone. Their values might be different at different occurrences, even within a single formula. 
The notation $A\lesssim B$ means $A\le c B$, $A\gtrsim B$ means $B\lesssim A$, and $A\sim B$ means $A\lesssim B\lesssim A$. 
Constants denoted by capital letters, such as $L$, $C$, etc. will retain their values.
}
\vspace{2ex}

Before stating the main theorem, we estimate the noise term $E_n(x)$ in the following lemma \cite[Lemma~4.1]{mhaskar2024robust}.
\begin{lemma}\label{lemma:noiselemma}
Let $\delta\in (0,1)$. 
There exist positive constants $C_1, C_2, C_3$, depending only on $H$ such that for $n\ge C_1(\ge 1)$, we have
\begin{equation}\label{eq:noiseest}
\mathsf{Prob}\left(\max_{x\in\TT}|E_n(x)| \ge C_2V\sqrt{\frac{\log (C_3n/\delta)}{n}}\right)\le \delta.
\end{equation}
\end{lemma}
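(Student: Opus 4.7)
The plan is to reduce the supremum of the trigonometric polynomial $E_n$ over the continuum $\TT$ to a maximum over a finite net of size $O(n)$, and then apply the sub-Gaussian tail bound \eqref{eq:subgaussian_sum_tail} pointwise together with a union bound.

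First I would handle a single point $x\in\TT$. Writing $E_n(x)=\sum_{|\ell|<n}a_\ell(x)\epsilon_\ell$ with $a_\ell(x)=\hbar_n H(|\ell|/n)e^{\hati\ell x}$, the weights satisfy
\[
|\mathbf a(x)|_n^2\ =\ \hbar_n^2\sum_{|\ell|<n}H(|\ell|/n)^2\ \lesssim\ \hbar_n^2\sum_{|\ell|<n}H(|\ell|/n)\ =\ \hbar_n\ \lesssim\ \frac1n,
\]
since $H$ is bounded by $1$, equals a constant on $[-1/2,1/2]$, and vanishes outside $[-1,1]$, so $\sum_{|\ell|<n}H(|\ell|/n)\sim n$. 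Splitting real and imaginary parts (each a real sub-Gaussian sum in $\mathcal G(|\mathbf a(x)|_nV)$) and applying \eqref{eq:subgaussian_sum_tail} gives a pointwise bound of the form $\mathsf{Prob}(|E_n(x)|>t)\le 8\exp(-cnt^2/V^2)$.

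Next I would discretize. Choose equally spaced points $x_1,\dots,x_M\in\TT$ with $M=\lceil c_0 n\rceil$ for a sufficiently large constant $c_0$ so that the spacing is $\le \pi/(c_0 n)$. Since $E_n$ is a trigonometric polynomial of degree less than $n$, Bernstein's inequality gives $\|E_n'\|_\infty\le n\|E_n\|_\infty$. For any $x\in\TT$ and the nearest net point $x_k$,
\[
|E_n(x)|\ \le\ |E_n(x_k)|+\|E_n'\|_\infty\cdot\frac{\pi}{c_0 n}\ \le\ \max_k|E_n(x_k)|+\frac{\pi}{c_0}\|E_n\|_\infty.
\]
Taking the supremum over $x$ and choosing $c_0$ so that $\pi/c_0\le 1/2$ yields $\|E_n\|_\infty\le 2\max_k|E_n(x_k)|$.

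The union bound over the $M\lesssim n$ net points then gives
\[
\mathsf{Prob}\!\left(\max_{x\in\TT}|E_n(x)|>2t\right)\ \le\ 8Mn\exp\!\left(-\frac{cnt^2}{V^2}\right)\ \le\ c_1 n\exp\!\left(-\frac{cnt^2}{V^2}\right).
\]
Setting the right-hand side equal to $\delta$ and solving for $t$ produces $t=C_2V\sqrt{\log(C_3n/\delta)/n}$ for appropriate constants $C_2,C_3$ depending only on $H$; the condition $n\ge C_1$ appears only to absorb the prefactor $c_1 n$ into the $\log$. The main technical point, rather than any one calculation, is the combined use of Bernstein's inequality to pass from the net to $\TT$ while keeping the net size $O(n)$ so that the $\sqrt{\log(n/\delta)/n}$ rate is preserved; a coarser net would lose nothing in $n$ since Bernstein already absorbs the derivative growth, but a much finer net would merely inflate constants.
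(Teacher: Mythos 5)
Your argument is correct and is essentially the standard proof of this estimate: the paper itself does not prove the lemma but quotes it as Lemma~4.1 of \cite{mhaskar2024robust}, where the same ingredients are used --- the pointwise sub-Gaussian tail bound via $|\mathbf{a}(x)|_n^2\le \hbar_n^2\sum_{|\ell|<n}H(|\ell|/n)^2\le\hbar_n\lesssim 1/n$, an $O(n)$-point net on $\TT$, Bernstein's inequality for trigonometric polynomials of degree less than $n$, and a union bound. (The prefactor $8Mn$ in your final display should be $8M$; since $M\lesssim n$ this slip only perturbs the constants $C_2, C_3$ and does not affect the $\sqrt{\log(C_3 n/\delta)/n}$ rate.)
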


In order to state the main theorem of this section, we need some further notation.
Let
\begin{equation}\label{eq:notation}
\mathbf{M}=\sum_{j=1}^K |A_j|, \ \ \mathfrak{m}=\min_{1\le j\le K}|A_j|, \ \ \eta=\min_{\ell\not=k} |\lambda_j-\lambda_\ell|.
\end{equation}
\yadi{$\mathbf{M}$}{Sum of absolute values of amplitudes}
\yadi{$\mathfrak{m}$}{minimal absolute value of amplitudes}
\yadi{$\eta$}{minimal separation among frequencies}
Our main theorem in this section can now be stated as follows.
\begin{theorem}\label{theo:stationary_main}
Let $0<\delta<1$,
\begin{equation}\label{eq:levelsetdef}
\mathbb{G}=\{x\in [-\pi,\pi] : |\sigma_n(\tilde{\mu})(x)|\ge \mathfrak{m}/2\},
\end{equation}
and (cf. \eqref{eq:locest}, \eqref{eq:notation})
\begin{equation}\label{eq:thresholdCdef}
C =  \left(\frac{16\mathbf{M}L}{\mathfrak{m}} \right)^{1/S}.
\end{equation}
Let $n$ be large enough so that the estimate \eqref{eq:noiseest} holds, and in addition, $n\ge 4C/\eta$.
Then
each of the following statements holds with probability exceeding $1-\delta$.
\begin{itemize}
\item (\textbf{Disjoint union condition}) \\
The set $\mathbb{G}$ is a disjoint union of exactly $K$ subsets $\mathbb{G}_\ell$,
\item (\textbf{Diameter condition}) \\
For each $\ell=1,\cdots, K$,   $\mathsf{diam}(\mathbb{G}_\ell) \le 2C/n$,
\item (\textbf{Separation}) \\
$\mathsf{dist}(\mathbb{G}_\ell, \mathbb{G}_j) \ge \eta/2$ for $\ell \neq k$,
\item (\textbf{Interval inclusion}) \\
For each $\ell=1,\cdots, K$,  $I_\ell=\{x\in\TT: |x-\lambda_\ell|\le 1/(4n)\}\subseteq \mathbb{G}_\ell$.
 
\end{itemize}
Moreover, if
\begin{equation}\label{eq:lambda_estimator}
\hat{\lambda}_\ell =\arg\max_{x\in \mathbb{G}_\ell}|\sigma_n(x)|,
\end{equation}

then

\begin{equation}\label{eq:lambdaerr}
|\hat{\lambda}_\ell-\lambda_\ell| \le 2C/n.
\end{equation}

\end{theorem}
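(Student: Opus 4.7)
The plan is to work on the high-probability event furnished by Lemma~\ref{lemma:noiselemma}, to decompose $\sigma_n(\tilde{\mu})(x)=\sum_{j=1}^K A_j\Phi_n(x-\lambda_j)+E_n(x)$ as in \eref{eq:funda_idea}, and to split the analysis into a \emph{far} regime $\min_j|x-\lambda_j|\ge C/n$ and a \emph{near} regime $|x-\lambda_\ell|\le 1/(4n)$ for some $\ell$. First I would take $n$ large enough that \eref{eq:noiseest} delivers $\max_{x\in\TT}|E_n(x)|\le \mathfrak{m}/8$ with probability at least $1-\delta$; this imposes a lower bound on $n$ in terms of $V$, $\mathfrak{m}$, and $\delta$ that sits alongside the hypotheses $n\ge C_1$ and $n\ge 4C/\eta$. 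All that follows happens deterministically on this event.

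In the far regime, the localization bound \eref{eq:locest} gives termwise $|A_j\Phi_n(x-\lambda_j)|\le |A_j|L/(n|x-\lambda_j|)^S\le |A_j|L/C^S$, and the choice of $C$ in \eref{eq:thresholdCdef} is tuned precisely so that $L/C^S=\mathfrak{m}/(16\mathbf{M})$; summing, $|\sum_j A_j\Phi_n(x-\lambda_j)|\le \mathfrak{m}/16$, hence $|\sigma_n(\tilde{\mu})(x)|\le \mathfrak{m}/16+\mathfrak{m}/8<\mathfrak{m}/2$, so $x\notin\mathbb{G}$. This proves $\mathbb{G}\subseteq\bigcup_\ell\{x:|x-\lambda_\ell|<C/n\}$, and because $n\ge 4C/\eta$ forces $\eta\ge 4C/n$, these sets are pairwise disjoint with mutual distance at least $\eta/2$. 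Defining $\mathbb{G}_\ell:=\mathbb{G}\cap\{x:|x-\lambda_\ell|<C/n\}$ immediately gives the disjoint-union, diameter, and separation conclusions.

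In the near regime, I would fix $\ell$ and $x$ with $|x-\lambda_\ell|\le 1/(4n)$. A direct estimate from \eref{eq:lockerndef}, using $|1-e^{\hati j u}|\le |j||u|$ and $\sum_{|j|<n}H(|j|/n)|j|\le n\sum_{|j|<n}H(|j|/n)=n/\hbar_n$, yields $|1-\Phi_n(u)|\le n|u|$, so $|\Phi_n(x-\lambda_\ell)|\ge 3/4$ and $|A_\ell\Phi_n(x-\lambda_\ell)|\ge 3\mathfrak{m}/4$. For $j\ne\ell$, $|x-\lambda_j|\ge\eta-1/(4n)\ge 3C/n$ (using $C\ge 1$, which follows from $\mathbf{M}\ge\mathfrak{m}$ and $L\ge 1$), and \eref{eq:locest} then gives $\sum_{j\ne\ell}|A_j\Phi_n(x-\lambda_j)|\le \mathfrak{m}/(16\cdot 3^S)\le\mathfrak{m}/16$. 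Therefore $|\sigma_n(\tilde{\mu})(x)|\ge 3\mathfrak{m}/4-\mathfrak{m}/16-\mathfrak{m}/8>\mathfrak{m}/2$, establishing $I_\ell\subseteq\mathbb{G}$; since $I_\ell\subseteq\{x:|x-\lambda_\ell|<C/n\}$, in fact $I_\ell\subseteq\mathbb{G}_\ell$. The error bound \eref{eq:lambdaerr} is then immediate: $\lambda_\ell\in I_\ell\subseteq\mathbb{G}_\ell$ and $\hat{\lambda}_\ell\in\mathbb{G}_\ell$, so $|\hat{\lambda}_\ell-\lambda_\ell|\le\mathsf{diam}(\mathbb{G}_\ell)\le 2C/n$.

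The main obstacle is the near-field lower bound on $|\Phi_n|$ together with the bookkeeping of constants: one needs enough margin above $\mathfrak{m}/2$ inside $I_\ell$ to absorb both the cross-peak interference (controlled by the tail in \eref{eq:locest}) and the noise budget $\mathfrak{m}/8$ from Lemma~\ref{lemma:noiselemma}, using the same $C$ that produced the ``below $\mathfrak{m}/2$'' conclusion in the far regime. The simple Lipschitz-type estimate $|1-\Phi_n(u)|\le n|u|$ is adequate at the scale $|u|\le 1/(4n)$; once that is in hand, the rest of the proof is just the arithmetic of the thresholds.
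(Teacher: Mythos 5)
Your proposal is correct and follows essentially the same far/near localization argument as the source: the paper itself defers the proof of this theorem to \cite{mhaskar2024robust}, where the argument is precisely this decomposition via \eqref{eq:funda_idea}, the tail bound \eqref{eq:locest} with $C$ tuned so that $L/C^S=\mathfrak{m}/(16\mathbf{M})$, and the near-field bound $|1-\Phi_n(u)|\le n|u|$. Your only added reading is that ``the estimate \eqref{eq:noiseest} holds'' must be interpreted as also forcing the noise bound $C_2V\sqrt{\log(C_3n/\delta)/n}\le \mathfrak{m}/8$, which is indeed the intended (and necessary) meaning of that hypothesis.
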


\subsection{Reduction to the constant parameter case}\label{bhag:reduction}

In this section, we describe certain conditions under which the data of the form \eqref{eq:nonstationary} can be reduced to the constant parameter data as in Section~\ref{bhag:stationary}.
Thus, we are interested in finding the instantaneous frequencies $\phi_j'(t^*)$ and amplitudes $A_j(t^*)$ in the signal
\be\label{eq:nonstationarybis}
f(t)=\sum_{j=1}^K f_j(t)=\sum_{j=1}^K A_j(t)\exp(\hati\phi_j(t)),
\ee
for some point $t^*\in\RR$, where $A_j$'s are continuous and $\phi_j$'s are continuously differentiable functions on an interval of the form $[t^*-\Delta, t^*+\Delta]$.
We introduce the notation (abusing the notation introduced in \eqref{eq:notation})
\be\label{eq:notation2}
\mathbf{M}(t^*)=\sum_{j=1}^K |A_j(t^*)|, \qquad B(t^*)=\max_{1\le j\le K}|\phi_j'(t^*)|.
\ee
The following theorem is a reformulation of \cite[Theorem~4.2]{bspaper}. 
We will reproduce the proof for the sake of completeness.
\yadi{$\Delta$}{half-length of time interval for snippet}
\yadi{$t_k$}{center of the time interval for snippet}
\yadi{$I_k$}{time interval for snippet $[t_k-\Delta,t_k+\Delta]$}
\yadi{$R$}{sampling frequency}

\begin{theorem}\label{theo:reduction}
Let $t^*\in\RR$, $\alpha, \Delta>0$, and for $|u|\le \Delta$,
\be\label{eq:slowvary}
|A_j(t^*+u)-A_j(t^*)| \le \alpha |u||A_j(t^*)|, \qquad |\phi_j'(t^*+u)-\phi_j'(t^*)| \le \alpha |u||\phi_j'(t^*)|.
\ee 
Then we have
\be\label{eq:reduction}
\left|f(t^*+u)-\sum_{j=1}^K f_j(t^*)\exp\left(\hati \phi_j'(t^*)u\right)\right| \le \alpha \mathbf{M}(t^*)(B(t^*)\Delta +1)\Delta, \qquad |u|\le |\Delta|.
\ee
\end{theorem}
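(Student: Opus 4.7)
The plan is to treat each component $f_j(t^*+u)$ separately, compare it with $f_j(t^*)\exp(\hati\phi_j'(t^*)u)$, and then sum the $j$-wise errors. The key maneuver is an add-and-subtract trick that isolates the two slowly-varying hypotheses of \eqref{eq:slowvary}: one controls the amplitude drift, the other controls how close $\phi_j$ is to its first-order Taylor approximation at $t^*$. Explicitly, I would write
\begin{equation*}
f_j(t^*+u) - f_j(t^*)\exp(\hati\phi_j'(t^*)u) = \bigl[A_j(t^*+u)-A_j(t^*)\bigr]e^{\hati\phi_j(t^*+u)} + A_j(t^*)\bigl[e^{\hati\phi_j(t^*+u)} - e^{\hati(\phi_j(t^*)+\phi_j'(t^*)u)}\bigr].
\end{equation*}

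For the first piece I would apply the amplitude inequality in \eqref{eq:slowvary} directly, which yields the bound $\alpha|u||A_j(t^*)|$. For the second piece I would use the elementary estimate $|e^{\hati a}-e^{\hati b}|\le |a-b|$ and then analyze $\phi_j(t^*+u)-\phi_j(t^*)-\phi_j'(t^*)u$ by the fundamental theorem of calculus, writing it as $\int_0^u\bigl(\phi_j'(t^*+s)-\phi_j'(t^*)\bigr)\,ds$. Invoking the derivative inequality from \eqref{eq:slowvary} inside this integral and recalling that $|\phi_j'(t^*)|\le B(t^*)$, the second piece is bounded by $|A_j(t^*)|\,\alpha B(t^*)u^2/2$.

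Adding the two contributions and summing over $j=1,\ldots,K$ gives, using $|u|\le\Delta$ and the definitions of $\mathbf{M}(t^*)$ and $B(t^*)$,
\begin{equation*}
\left|f(t^*+u)-\sum_{j=1}^K f_j(t^*)\exp(\hati\phi_j'(t^*)u)\right| \le \alpha|u|\,\mathbf{M}(t^*) + \tfrac{1}{2}\alpha B(t^*)u^2\,\mathbf{M}(t^*) \le \alpha\,\mathbf{M}(t^*)\bigl(B(t^*)\Delta+1\bigr)\Delta,
\end{equation*}
which is precisely \eqref{eq:reduction}.

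I do not expect a real obstacle here; everything is first-year calculus once the decomposition is fixed. The only thing to be careful about is that the hypothesis \eqref{eq:slowvary} bounds $|\phi_j'(t^*+u)-\phi_j'(t^*)|$ in terms of $|\phi_j'(t^*)|$ rather than $B(t^*)$, so the bound $|\phi_j'(t^*)|\le B(t^*)$ must be inserted at the right moment; and the factor $1/2$ coming from $\int_0^{|u|} s\,ds$ is absorbed into the clean constant $1$ in the final estimate. No appeal to the earlier sub-Gaussian or kernel machinery is needed.
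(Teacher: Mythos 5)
Your proof is correct and follows essentially the same route as the paper: the identical add-and-subtract decomposition into an amplitude-drift term and a phase term, the elementary bound $|e^{\hati a}-e^{\hati b}|\le |a-b|$, and the fundamental theorem of calculus combined with \eqref{eq:slowvary} and $|\phi_j'(t^*)|\le B(t^*)$. The only cosmetic difference is that you retain the factor $1/2$ from $\int_0^{|u|} s\,ds$ while the paper bounds that integral by $|u|^2$ outright; both are absorbed into the stated constant.
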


\begin{proof}\ 
In this proof, we write $\mathbf{M}=\mathbf{M}(t^*)$ and $B=B(t^*)$. Let $|u|\le \Delta$. 
We observe that
\be\label{eq:pf1eqn1}
\begin{aligned}
\left|\sum_{j=1}^K A_j(t^*+u)\right.&\left.\exp(\hati \phi_j(t^*+u))-\sum_{j=1}^K A_j(t^*)\exp(\hati \phi_j(t^*))\exp(\hati u \phi_j'(t^*))\right|\\
&\le \left|\sum_{j=1}^K \left(A_j(t^*+u)-A_j(t^*)\right)\exp(\hati \phi_j(t^*+u))\right|\\
&\qquad +\left|\sum_{j=1}^K A_j(t^*)\left(\exp(\hati \phi_j(t^*+u))-\exp(\hati \phi_j(t^*))\exp(\hati u \phi_j'(t^*))\right)\right|
\end{aligned}
\ee
In view of the fact that
$$
|e^{ix}-e^{iy}|=2|\sin((x-y)/2)| \le |x-y|,
$$
and \eqref{eq:slowvary}, we deduce that
 for each $k$,
\be\label{eq:pf1eqn2}
\begin{aligned}
\left|\exp(\hati \phi_j(t^*+u))\right.&\left.-\exp(\hati \phi_j(t^*))\exp(\hati u \phi_j'(t^*))\right|\le \left|\phi_j(t^*+u)-\phi_j(t^*)-u \phi_j'(t^*)\right|\\
& \le \left|\int_0^u \left(\phi_j'(t^*+v)-\phi_j'(t^*)\right)dv\right|\le \alpha B\left|\int_0^u |v|dv\right|\le \alpha B|u|^2\le \alpha B\Delta^2.
\end{aligned}
\ee
Using \eqref{eq:slowvary} again, we see that
$$
\left|\sum_{j=1}^K \left(A_j(t^*+u)-A_j(t^*)\right)\exp(\hati \phi_j(t^*+u))\right|\le \mathbf{M}\Delta.
$$
Together with \eqref{eq:pf1eqn2} and \eqref{eq:pf1eqn1}, this leads to \eqref{eq:reduction}.
\end{proof}

\subsection{Signal separation operator}\label{bhag:sso}
In this section, we fix $t^*\in\RR$.
The quantities denoted below by $\alpha$, $\Delta$, $R$ are independent of $t^*$, and so are the constants involved in $\ls$, $\gs$, and $\sim$, but the other quantities will depend upon $t^*$ without its mention in the notation.

Let $R >0$ denote the sampling frequency.  We write
\be\label{eq:reducedfreq}
\lambda_j=\phi_j'(t^*)/R, \quad n=\lfloor R\Delta\rfloor, \quad \mu=\sum_{j=1}^K f_j(t^*)\delta_{\lambda_j}, \quad \hat{\mu}(\ell)=\sum_{j=1}^K f_j(t^*)\exp(-\hati \ell\lambda_j).
\ee
Then using Theorem~\ref{theo:reduction}, it is easy to deduce that (cf. \eqref{eq:sigmadef})
\be\label{eq:ssointro}
\left|\hbar_n\sum_{|\ell|<n} H\left(\frac{|\ell|}{n}\right)f(t^*-\ell/R)e^{\hati \ell x}-\sigma_n(\mu)(x)\right|\le \alpha \mathbf{M}(t^*)(B(t^*)\Delta +1)\Delta.
\ee
This leads to the following definition (cf. \cite[Definition~2.3]{bspaper})
\begin{definition}\label{def:ssodef}
Let $F :\RR\to \CC$, $t^*\in\RR$, $R>0$.
 We define the \textbf{\textit{Signal Separation Operator}} SSO by
\be\label{eq:ssodef}
\mathcal{T}_{n,R}(F)(t^*;x)=\hbar_n\sum_{|\ell|<n} H\left(\frac{|\ell|}{n}\right)f(t-\ell/R)e^{\hati \ell x}, \qquad x\in\RR.
\ee
\yadi{$\mathcal{T}_{n,R}$}{Signal separation operator}
\end{definition}
We will use the operator to separate the components $f_j(t^*)$ and the instantaneous frequencies $\phi_j'(t^*)$ based on the noisy observations
\be\label{eq:nonstationary_noisy}
F(t)=\sum_{j=1}^K f_j(t) +\epsilon(t)=\sum_{j=1}^K A_j(t)\exp(i\phi_j(t))+\epsilon(t),
\ee
where for each $t$, $\epsilon(t)$ is a realization of a complex sub-Gaussian random variable in $\mathcal{G}(V)$.
Theorem~\ref{theo:stationary_main} can be translated into Theorem~\ref{theo:main} below, where we abuse the notation to make the comparison easier. 
We note that  Theorem~\ref{theo:main} itself is a restatement of \cite[Theorems~2.4,4.1]{bspaper} in a somewhat modified form taking the noise into account.
In order to state this theorem, we abuse the notation in Section~\ref{bhag:stationary} again, and write (in addition to \eqref{eq:notation2})
\begin{equation}\label{eq:notation3}
 \ \mathfrak{m}(t^*)=\min_{1\le \ell\le K}|f_\ell(t^*)|, \ \ \eta(t^*)=\min_{\ell\not=k} |\phi_\ell'(t^*)-\phi_k'(t^*)|, \ \eta^*=\eta(t^*)/R.
\end{equation}
For brevity, we will omit the mention of $t^*$ from our notation, unless we feel that this might cause some confusion.
\begin{theorem}\label{theo:main}
Let $0<\delta<1$, $t^*\in\RR$, $R, \Delta>0$, and the assumptions of Theorem~\ref{theo:reduction} be satisfied. 
We assume that  $n=\lfloor R\Delta\rfloor$ satisfies the condition \eqref{eq:noiseest},  and in addition, $n\ge 4C/\eta^*$. 
We assume further that
\be\label{eq:Deltacond}
\alpha \mathbf{M}(B\Delta +1)\Delta\le \mathfrak{m}/4.
\ee
We define
\be\label{eq:levelsetbis}
\mathbb{G}(t^*)=\{x\in\TT : |\mathcal{T}_{n,R}(F)(t^*;x)|\ge 3\mathfrak{m}/4\}.
\ee
Then
each of the following statements holds with probability exceeding $1-\delta$.
\begin{itemize}
\item (\textbf{Disjoint union condition}) \\
The set $\mathbb{G}(t^*)$ is a disjoint union of exactly $K$ subsets $\mathbb{G}_\ell(t^*)$,
\item (\textbf{Diameter condition}) \\
For each $\ell=1,\cdots, K$,   $\mathsf{diam}(\mathbb{G}_\ell(t^*)) \le 2C/n$,
\item (\textbf{Separtion}) \\
$\mathsf{dist}(\mathbb{G}_\ell(t^*), \mathbb{G}_j(t^*)) \ge \eta/2$ for $\ell \neq k$,
\item (\textbf{Interval inclusion}) \\
For each $\ell=1,\cdots, K$,  $I_\ell=\{x\in\TT: |x-\lambda_\ell|\le 1/(4n)\}\subseteq \mathbb{G}_\ell(t^*)$.
\end{itemize}
Moreover, if
\begin{equation}\label{eq:lambda_estimator_nonstationary}
\hat{\lambda}_\ell =\arg\max_{x\in \mathbb{G}_\ell(t^*)}|\mathcal{T}_{n,R}(F)(t^*;x)|,\quad \widehat{\phi_\ell'(t^*)}=R\hat{\lambda}(\ell),
\end{equation}
then
\begin{equation}\label{eq:lambdaerr_nonstationary}
|\hat{\lambda}_\ell-\lambda_\ell| \le 2C/n.
\end{equation}
\end{theorem}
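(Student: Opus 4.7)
The plan is to reduce Theorem~\ref{theo:main} to the stationary Theorem~\ref{theo:stationary_main} by means of the local exponential approximation provided by Theorem~\ref{theo:reduction}.

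First I would decompose the SSO output as
\begin{equation*}
\mathcal{T}_{n,R}(F)(t^*;x) = \sigma_n(\mu)(x) + E_n(x) + \rho(x),
\end{equation*}
where $\mu = \sum_{j=1}^K f_j(t^*)\delta_{\lambda_j}$ with $\lambda_j = \phi_j'(t^*)/R$, the term $E_n$ is the noise piece of the form \eqref{eq:noisesigma} built from the samples $\epsilon(t^*-\ell/R)\in\mathcal{G}(V)$, and $\rho$ is the residual from replacing $f$ by its local exponential model $t\mapsto \sum_{j}f_j(t^*)\exp(\hati\phi_j'(t^*)(t-t^*))$. A direct substitution $t=t^*-\ell/R$ into this model verifies that the first term equals $\sigma_n(\mu)(x)$. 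Applying Theorem~\ref{theo:reduction} at each sample $t^*-\ell/R\in[t^*-\Delta,t^*+\Delta]$, together with the identity $\hbar_n\sum_{|\ell|<n}H(|\ell|/n)=1$, yields the uniform bound $|\rho(x)|\le \alpha\mathbf{M}(B\Delta+1)\Delta\le \mathfrak{m}/4$ from hypothesis \eqref{eq:Deltacond}.

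Next, Lemma~\ref{lemma:noiselemma} bounds $\max_{x\in\TT}|E_n(x)|$ with probability at least $1-\delta$, so that on that event
\begin{equation*}
|\mathcal{T}_{n,R}(F)(t^*;x)-\sigma_n(\mu)(x)| \le \mathfrak{m}/4 + |E_n(x)|
\end{equation*}
uniformly in $x$. Because $\lambda_j=\phi_j'(t^*)/R$ and $n=\lfloor R\Delta\rfloor$, the hypothesis $n\ge 4C/\eta^*$ is exactly the stationary condition $n\ge 4C/\eta$ of Theorem~\ref{theo:stationary_main}.

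Finally, I would apply Theorem~\ref{theo:stationary_main} to $\sigma_n(\mu)+E_n$ at threshold $\mathfrak{m}/2$, which yields the four structural claims for $\mathbb{G}=\{|\sigma_n(\mu)+E_n|\ge \mathfrak{m}/2\}$ together with the argmax estimate. The triangle inequality $|\sigma_n(\mu)+E_n|\ge |\mathcal{T}_{n,R}(F)|-|\rho|\ge 3\mathfrak{m}/4-\mathfrak{m}/4=\mathfrak{m}/2$ shows $\mathbb{G}(t^*)\subseteq\mathbb{G}$, so $\mathbb{G}(t^*)$ inherits the decomposition $\bigsqcup_\ell\mathbb{G}_\ell$ and the diameter and separation bounds. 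The interval inclusion $I_\ell\subseteq\mathbb{G}(t^*)$---which in particular ensures that each of the $K$ components of $\mathbb{G}(t^*)$ is non-empty---follows from the localization estimate \eqref{eq:locest} and the choice of $C$ in \eqref{eq:thresholdCdef}, which force $|\sigma_n(\mu)|$ to be close to $\mathfrak{m}$ on each $I_\ell$; the argmax identification \eqref{eq:lambdaerr_nonstationary} then transfers verbatim via $\hat{\lambda}_\ell = \widehat{\phi_\ell'(t^*)}/R$.

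The main obstacle is the careful bookkeeping of the three error sources---the noise $E_n$, the chirp-approximation residual $\rho$, and the localization tails of $\Phi_n$ from \eqref{eq:locest}---each of which must be confined to a prescribed fraction of $\mathfrak{m}$. The threshold is raised exactly by $\mathfrak{m}/4$ (from $\mathfrak{m}/2$ to $3\mathfrak{m}/4$) to absorb the deterministic chirp budget, while the implicit assumption that Lemma~\ref{lemma:noiselemma} delivers a sufficiently sharp noise bound, together with the choice of $C$, jointly guarantee that the gap between the behaviour of $|\sigma_n(\mu)|$ on $I_\ell$ and away from all $\lambda_\ell$ is wide enough to both admit each $I_\ell$ into $\mathbb{G}(t^*)$ and to keep the $K$ components separated by at least $\eta/2$.
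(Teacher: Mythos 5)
Your proposal is correct and follows essentially the same route as the paper, which proves this theorem only implicitly: it establishes the reduction bound \eqref{eq:ssointro} via Theorem~\ref{theo:reduction} and then declares Theorem~\ref{theo:main} to be a translation of Theorem~\ref{theo:stationary_main} (itself a restatement of results from \cite{bspaper} with the noise term added). Your decomposition into $\sigma_n(\mu)+E_n+\rho$, the $\mathfrak{m}/4$ budget for $\rho$ absorbed by raising the threshold from $\mathfrak{m}/2$ to $3\mathfrak{m}/4$, and your observation that the interval-inclusion direction cannot be obtained from Theorem~\ref{theo:stationary_main} as a black box but needs the localization estimate \eqref{eq:locest} and the choice of $C$ directly, constitute exactly the bookkeeping the paper leaves to the reader.
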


\begin{remark}\label{rem:ncond}
{\rm
For the convenience of the reader, we summarize the conditions on the sampling frequency $R$ below.
We assume  that for every $t^*$ of interest,
\be\label{eq:rcond1}
R\Delta \ge \max\left(C_1, 4C/\eta^*\right)
\ee
and $R$ is large enough so that with probability $>1-\delta$,
\be\label{eq:recond2}
|E_n(x)|\le C_2V\sqrt{\frac{(\log(C_2R\Delta)/\delta)}{R\Delta}}
\ee
We note that the condition \eqref{eq:rcond1} puts a lower bound on $\Delta$, while the condition \eqref{eq:Deltacond} is an upper bound on the Lipschitz constant $\alpha$. 
In particular, the choice of $\Delta$ is a theoretically delicate one.
In our experiments, this was done by experimentation.
\qed}
\end{remark}
\begin{remark}\label{rem:chirps}
{\rm
In the case of a chirp
$$
\phi'(t)=\begin{cases}
\omega+\frac{B}{d}(t-\gamma), &\mbox{ if $t_0\le t\le t_0+d$}\\
0, &\mbox{otherwise},
\end{cases}
$$
 we have for $t^*\in [t_0, t_0+d]$
$$
|\phi'(t^*+u)-\phi'(t^*)|=\left|\frac{B}{d}\right||u|, \qquad |\phi'(t^*)|=\left|\omega+\frac{B}{d}(t^*-\gamma)\right|
$$
So, we may take 
$$
\alpha=\frac{|B/d|}{\min_{t\in [t_0, t_0+d]}|\omega+(B/d)(t-\gamma)|}.
$$

\qed}
\end{remark}

\bhag{Algorithms}\label{bhag:algorithm}
In this section, we describe how the SSO can be used for signals of the form \eqref{eq:groundtruth}.
The basic idea behind our algorithms is to apply Theorem~\ref{theo:main} to different snippets of the signal. 
Thus, we choose a small $\Delta$ and divide the signal duration into $D$ overlapping time  intervals $I_k=[t_k-\Delta, t_k+\Delta]$, $k=1,\cdots, D$.
The $t_k$'s are chosen such that $I_k \cap I_{k+1} \neq \emptyset$. This is to ensure that all the sampling data are considered in this experiment.
Theorem~\ref{theo:main} gives guidelines on how to select $\Delta$. 
In practice, one has to treat this as a tunable parameter.
In our experiments reported here, we chose $\Delta$ by experimentation so that it is large enough to eliminate the noise effect, but small enough to give the accurate estimation. 
A large value of $\Delta$ implies that there are more data points to average out the noise, but this has higher chance that the signal will start or end within the interval which leads to inaccurate estimation.

  The signal restricted to $I_k$ has the form (cf. \eqref{eq:groundtruth}):
\be\label{eq:ground_truth_reformed}
F_k(t)=\sum_{j=1}^{J_k}f_{j,k}(t)+\epsilon(t), \qquad f_{j,k}(t)=A_{j,k}(t)\exp(\hati\phi_{j,k}(t)), 
\ee
 where $\epsilon(t)$ represents the noise and
 \be\label{eq:phase_bis}
  \phi_{j,k}(t)=\omega_{j,k}(t-\gamma_{j,k})+a_{j,k}(t-\gamma_{j,k})^2/2.
\ee
\yadi{$\omega_{j,k}$, $\gamma_{j,k}$, $a_{j,k}$}{Chirp parameters for chirp $j$ in $I_k$}
\yadi{$\Lambda_{j,k}$}{estimated IF for chirp $j$ at $t_k$}
\yadi{$\tau$}{threshold for SSO, estimate for $3\mathfrak{m}/4$}
The SSO will be used with each $F_k$ to obtain the parameter values for each $f_{j,k}$. 
In our experiments, we have focused on IFs, assuming all the amplitudes to be $1$.
Taken together, the set $(t_k,\widehat{\phi_{j,k}'(t_k)})$ (where $\widehat{\phi_{j,k}'(t_k)}$ is the estimated value for $\phi_{j,k}'(t_k)$) is called the \textbf{SSO diagram}.
Once we determine the IFs, the amplitudes can be determined by simply substituting the estimated IFs in SSO and more accurately by solving an appropriate least square problem. 
We postpone a more detailed investigation of this question to a future work.


The clusters defined in Theorem~\ref{theo:main} are found, using a clustering algorithm.  For example, in this paper we used DBSCAN \cite{ester1996density}.
The notation we use in our algorithm is 
\be\label{eq:dbscabdef}
([\mbox{scatter data}], [\mbox{label}]) \leftarrow\mbox{DBSCAN}([\mbox{scatter data}], [\mbox{neighborhood radius}], [\mbox{minimum number of neighbors}]).
\ee
Here, $[\mbox{scatter data}]$ is a data matrix,  and the algorithm attaches a label $1$ for the cluster of interest (and $-1$ otherwise), based on the number of points neighboring the so called core points in a ball of radius denoted above by $[\mbox{neighborhood radius}]$.

Our algorithm needs a few more tunable parameters as summarized in Table \ref{tab:tunable_parameters}.
In Section~\ref{bhag:tunable}, we describe several heuristics for the selection of these parameters, although we point out that it would be impossible to determine these parameters from a single observation of a totally blind source signal without any insight into the ground truth parameters involved.
In this section, we illustrate the basics of the algorithms using a somewhat ad hoc choice of these parameters.
\color{black}
\begin{table}[H]
\begin{center}
\begin{tabular}{ |c|l| } 
 \hline
 Parameters & Description \\
 \hline

 {$\eta$} & {Minimal separation among frequencies}\\
 {$B_{rec}$} &{ Receiver bandwidth=$\max_{j,k,t}\phi_{j,k}'(t)-\min_{j,k,t}\phi_{j,k}'(t)$}\\
 \color{black}
 $\Delta$ & Time interval width centered at $t_k$, which is chosen based on Theorem \ref{theo:main}\\
 $D$ & Number of intervals centered at $t_k$, which is chosen so that interval $I_k$'s are overlapping\\
 $D_1$ & Minimum number of neighbors, which can be tuned based on \\
 & the number of points in the SSO diagram in the radius given by $B_{\mbox{\scriptsize{rec}}}$.\\
 $D_2$ & Minimum number of neighbors, which can be tuned based on \\
 & the number of points in the smallest signal cluster\\
$M$ & Number of partitions when signal crossover detected, which can be tuned so that \\
& some partitions contain sufficiently long linear chirps \\
 \hline
\end{tabular}
\end{center}
 	\caption{The tables above shows a list of tunable parameters.} \label{tab:tunable_parameters}
\end{table}

Algorithm~1 is designed to find the IFs. 
A challenge here is to determine the correct threshold related to the unknown minimal amplitude $\mathfrak{m}$ in the presence of noise.
A value which is too small would indicate superfluous frequencies, a value which is too large would miss some frequencies. This effect is discussed in our paper \cite{mhaskar2024robust}.
There are other details such as detecting the clusters and find the right peaks which are described below.
Part 2 will estimate the parameters of the instantaneous frequencies obtained by part 1 using least squares.
The challenge here is to figure out where the different signal components begin and end, and which part of the signal to use in order to set up the least square problem in order to obtain reliable solutions.
SSO does not work directly when there are crossover frequencies.
In fact, the set up itself does not make sense in general in such cases; e.g., whether the crossover of the form $X$ represents two linear frequencies crossing or two $V$-shaped frequencies touching each other.
 Part 3 is the refinement process where we use the facts that SSO is inherently unable to resolve crossovers and that we are dealing with linear chirps
  to detect the presence of such frequencies.
 If sufficiently long linear chirps are present on both sides of the crossover, then we can detect those using the previous two algorithms with a smaller value of $\Delta$ and interpolate to determine where the crossover happens and which chirps cross each other.

 Figure~\ref{fig:flowchart1} and  \ref{fig:flowchart2} give flowcharts of how the algorithms are related to each other.
 The whole process will be illustrated step-by-step with one example in this section, and analyzed further in Section~\ref{bhag:numerical_results} with a few other examples. 
 
We illustrate our method using the signal with parameters as described in Table~\ref{tab:signal_table}. 
In the table, the sign of the chirp term is denoted by $\{+, -\}$ to indicate up-chirp and down-chirp, respectively.  
Furthermore, signals with a 0 PRI refer to a signal with only a single pulse in the analyzed time window. 

\begin{table}[ht]
\begin{center}
\begin{tabular}{ |c|c|c|c|c|c|c|c|c| } 
 \hline
 Signal$_j$ & $A_j$ & $\omega_j$ & $B_j$ & $d_j$ & $t^*_{0,j}$ & $\mbox{PRI}_j$ & Chirp Sign & Number of Pulses \\
 \hline
 1 & 1 & 1080000000 & 15000000 & 3.0e-05 & 1.0e-05 & 5.0e-05 & + & 2 \\
 2 & 1 & 1360000000 & 5000000 & 1.0e-05 & 1.0e-05 & 1.5e-05 & + & 5 \\
 3 & 1 & 1540000000 & 20000000 & 3.0e-05 & 1.0e-05 & 0 & - & 1 \\
 4 & 1 & 1510000000 & 50000000 & 7.0e-05 & 1.5e-05 & 0 & + & 1 \\
 5 & 1 & 1480000000 & 15000000 & 5.0e-05 & 3.0e-05 & 0 & - & 1 \\
 6 & 1 & 1040000000 & 15000000 & 3.0e-05 & 1.5e-05 & 4.0e-05 & - & 2 \\
 \hline
\end{tabular}
\end{center}
 	\caption{The tables above shows the signal ground truth parameters, noting that the 7 realizations used for the numerical experiments use the same signal parameters with varying $\omega$.} \label{tab:signal_table}
\end{table}


In this example, we use $\epsilon(t)$ to be a Gaussian noise at -10 dB SNR, and the sampling rate of 1GHz.

\begin{figure}[H]
\begin{center}
\includegraphics[width=.65\textwidth]{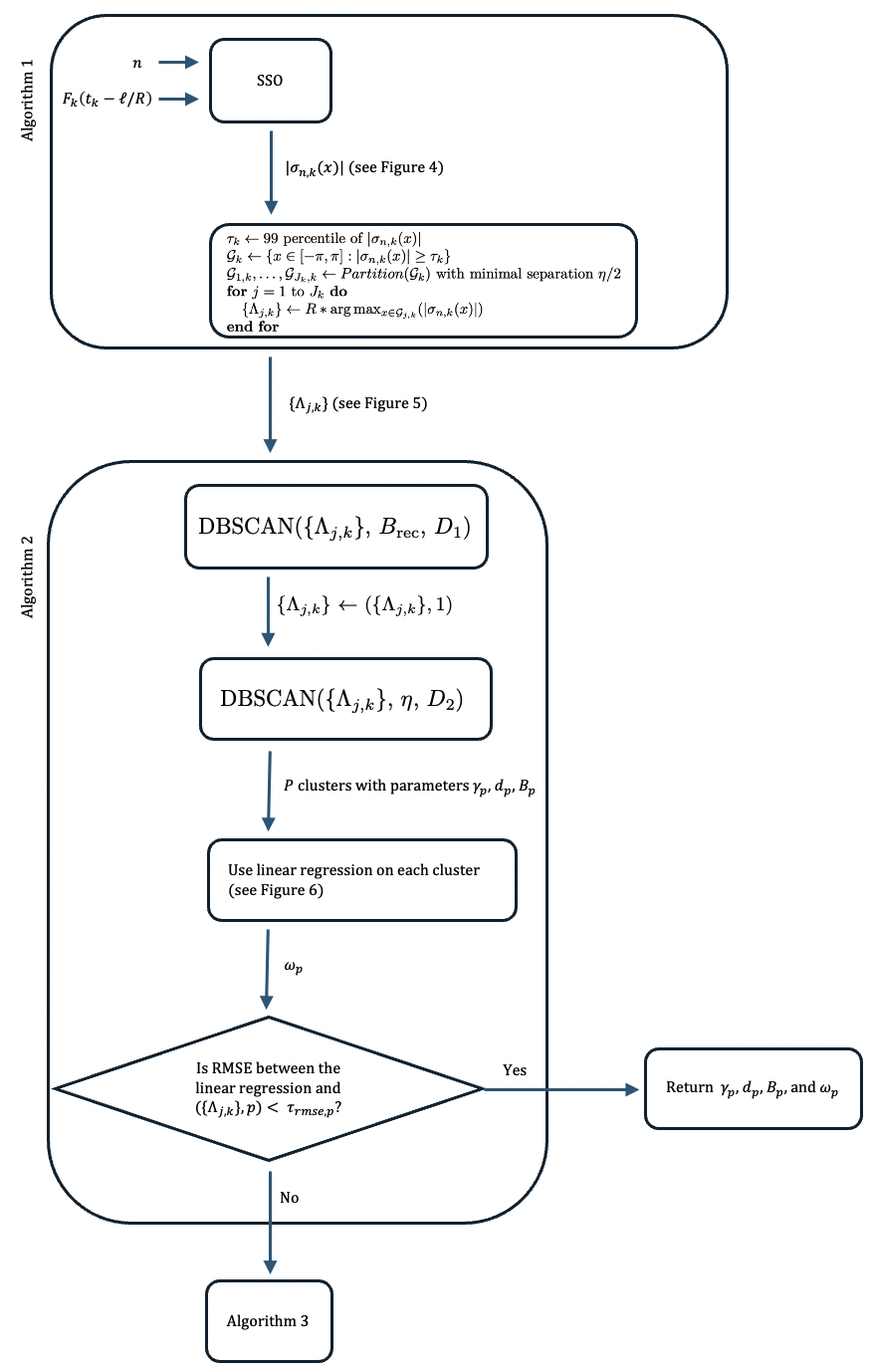}
\end{center}
\caption{The flowchart diagram for our algorithm scheme.}
\label{fig:flowchart1}
\end{figure}

\begin{figure}[H]
\begin{center}
\includegraphics[width=.65\textwidth]{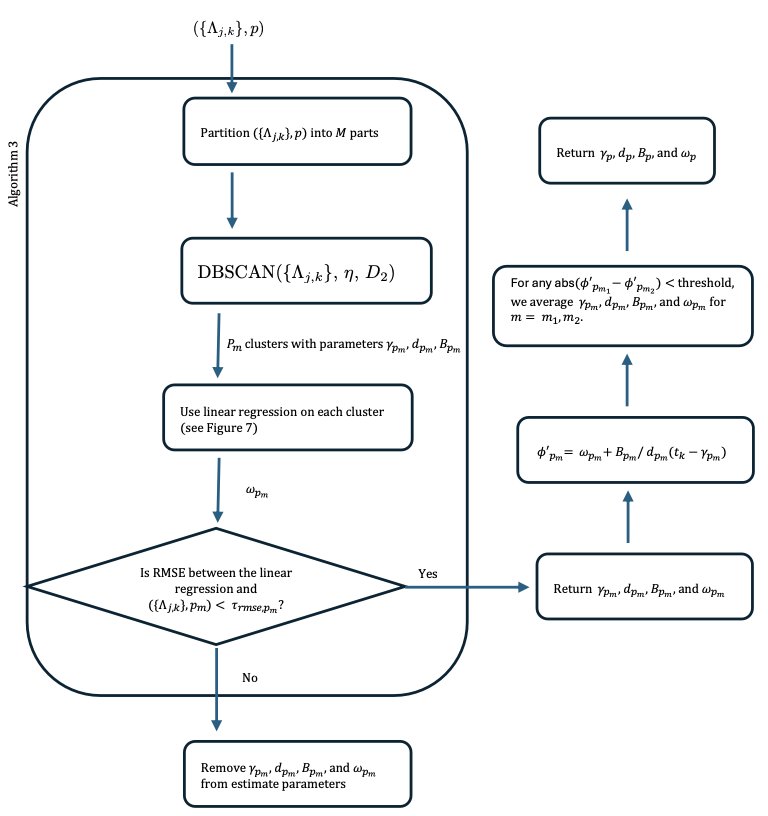}
\end{center}
\caption{The flowchart diagram for our algorithm scheme.}
\label{fig:flowchart2}
\end{figure}

 The visualizations of  the raw signal $f(t)$ and its instantaneous frequency $\phi'(t)$ corresponding to the parameters in Table~\ref{tab:signal_table} are shown in Figure \ref{fig:signal_ground_truth}.
\begin{figure}[ht]
\begin{center}
\includegraphics[scale=.14]{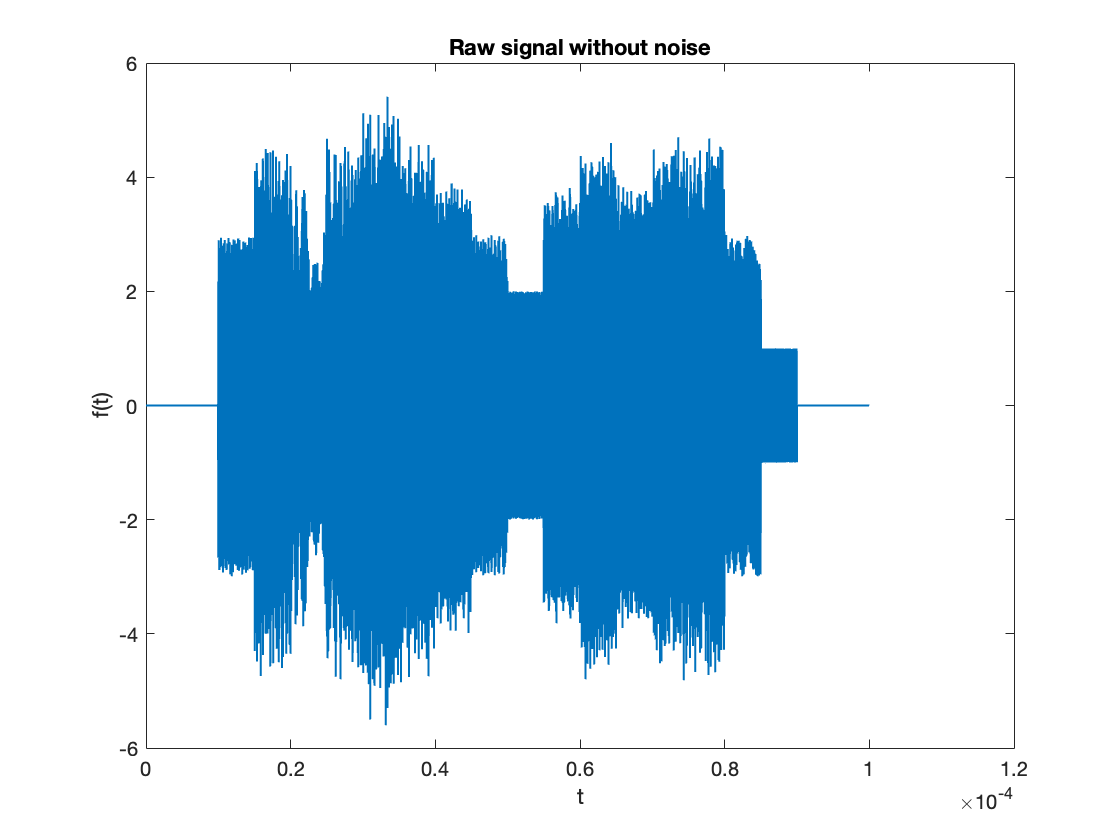}
\includegraphics[scale=.14]{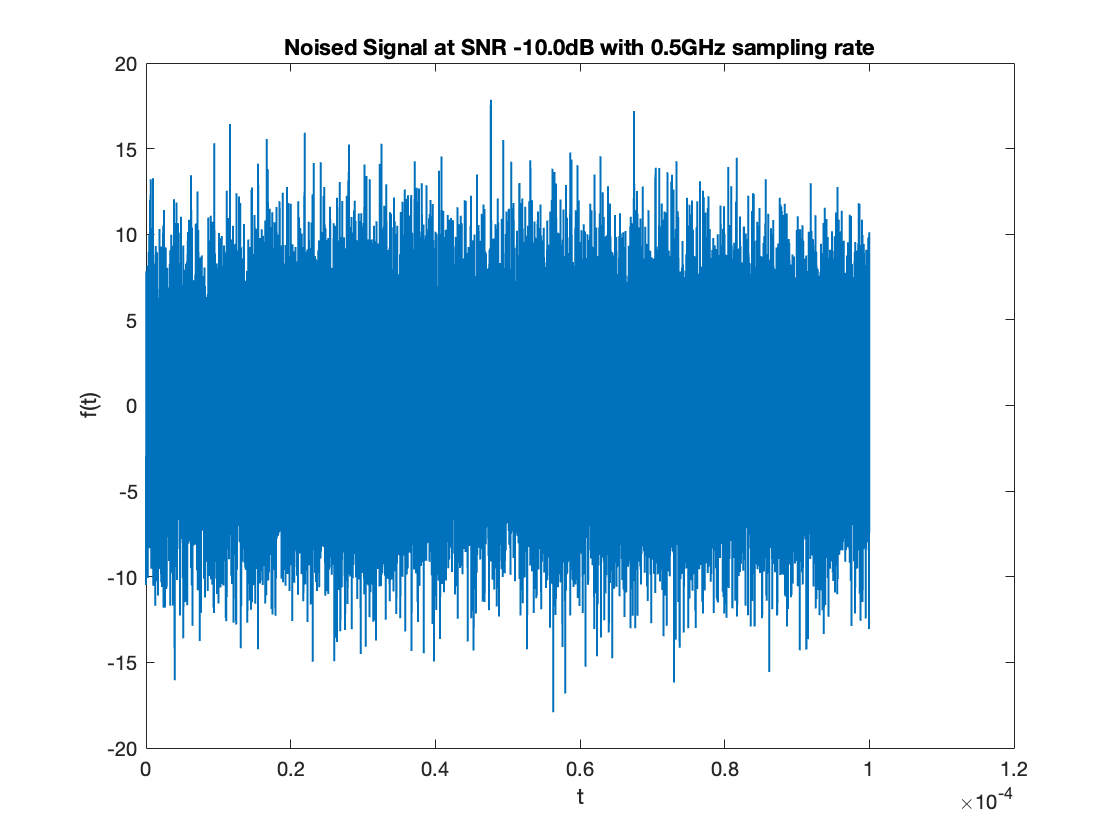}
\includegraphics[scale=.14]{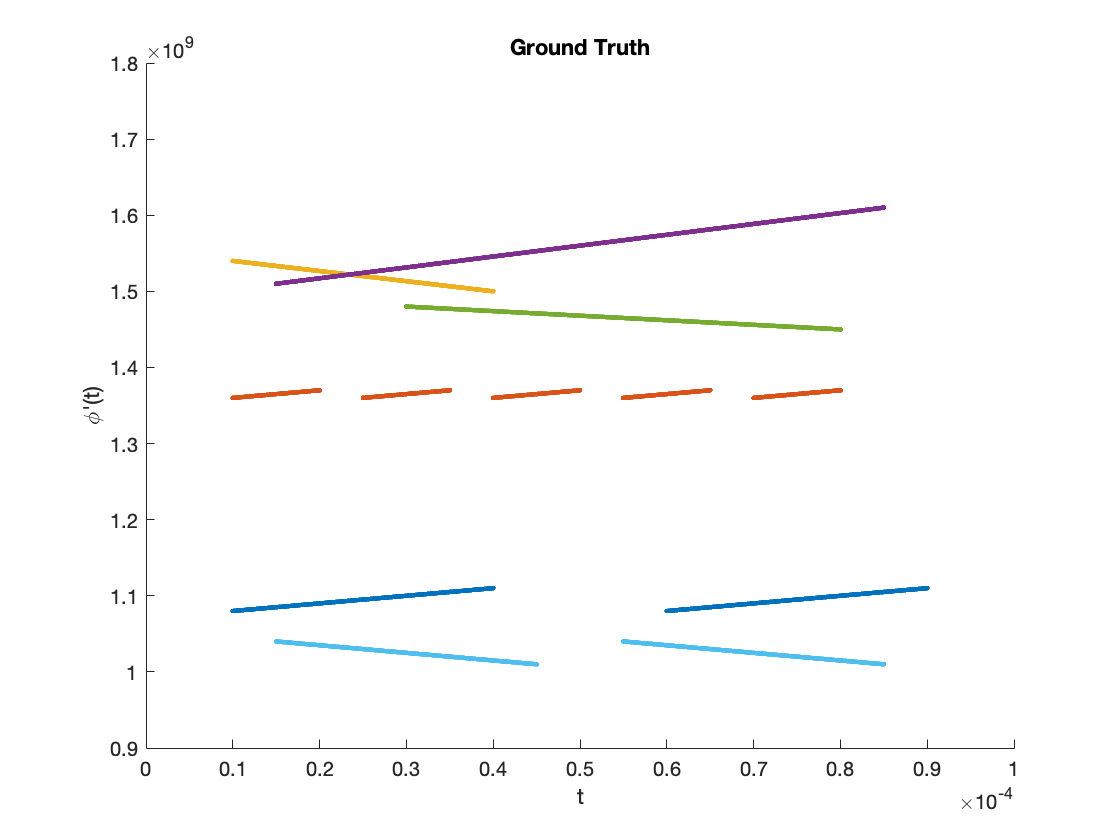}
\end{center}
\caption{(Left) The raw signal $f(t)$ as given in \eqref{eq:nonstationary} at sampling rate $1$ GHz. (Middle) The noised signal $F(t)=f(t)+\epsilon(t)$ at -10 dB SNR. (Right) The ground truth of $\phi'(t)$ from \eqref{eq:pulsephasedef}.}
\label{fig:signal_ground_truth}
\end{figure}

%

\vspace*{-0.5cm}
\subsection{Finding instantaneous frequencies}\label{bhag:alg1}

In this section, we describe the first stage of the algorithm, Algorithm~1, illustrated in Figure \ref{fig:flowchart1}, where we obtain the piecewise constant approximation to the the instantaneous frequencies in each snippet {$\{F_k(t_k - \ell/R)\}$}  for $k=1,\ldots,D$ (cf. \eqref{eq:ground_truth_reformed}) using the SSO for each such snippet.
{We note that the point $t_k$ and $\Delta$ are chosen so that all the points $t_k-\ell/R$ are the points at which the signal is sampled.} 
The main challenge here is to determine the threshold in the definition \eqref{eq:levelsetbis}.
Since the value of the minimum amplitude $\mathfrak{m}$ is unknown, we run SSO on {$\{F_k(t_k - \ell/R)\}$} and set the threshold $\tau_k$ to be $99.9$ percentile of the power spectrum.
The percentile is the same for every interval $I_k$, but the actual value of the threshold $\tau_k$ will be different depending on the sampling frequency and the SNR.
 \begin{algorithm}[H]
 \begin{algorithmic}[1]
 \item[{\rm a)}] \textbf{Input:} Minimal separation $\eta$ and signal {$\{F_k(t_k - \ell/R)\}$}.
 \item[{\rm b)}] \textbf{Output:} $\{\Lambda_{j,k}\}$
 \STATE $\hbar_n \gets \left\{\sum_{|\ell|<n}H\left(\frac{|\ell|}{n}\right)\right\}^{-1}$
 \STATE $\sigma_{n,k}(x) \gets \hbar_n \sum_{|\ell|<n} H\left(\frac{|\ell|}{n}\right)F_k(t_{k}-\ell/R)e^{i\ell x}$
 \STATE $\tau_k \gets 99.9 \mbox{ percentile of } |\sigma_{n,k}(x)|$
 \STATE $\mathcal{G}_k \gets \{x\in [-\pi,\pi] : |\sigma_{n,k}(x)|\ge \tau_k\}$
 \STATE $\mathcal{G}_{1,k}, \ldots, \mathcal{G}_{J_k,k} \gets Partition(\mathcal{G}_k) \mbox{ with minimal separation } \eta/2$
  \FOR{$j=1$ to $J_k$}
 \STATE $\{\Lambda_{j,k}\} \gets R * \arg\max_{x\in \mathcal{G}_{j,k}} (|\sigma_{n,k}(x)|)$
 \ENDFOR
 \STATE \textbf{Return: } $\{\Lambda_{j,k}\}$ for $j=1,\ldots,J_k$
 \caption{Signal separation operator (SSO)}
 \end{algorithmic}
 \label{alg:univariate}
 \end{algorithm}

\begin{figure}[H]
\begin{center}
\includegraphics[scale=.14]{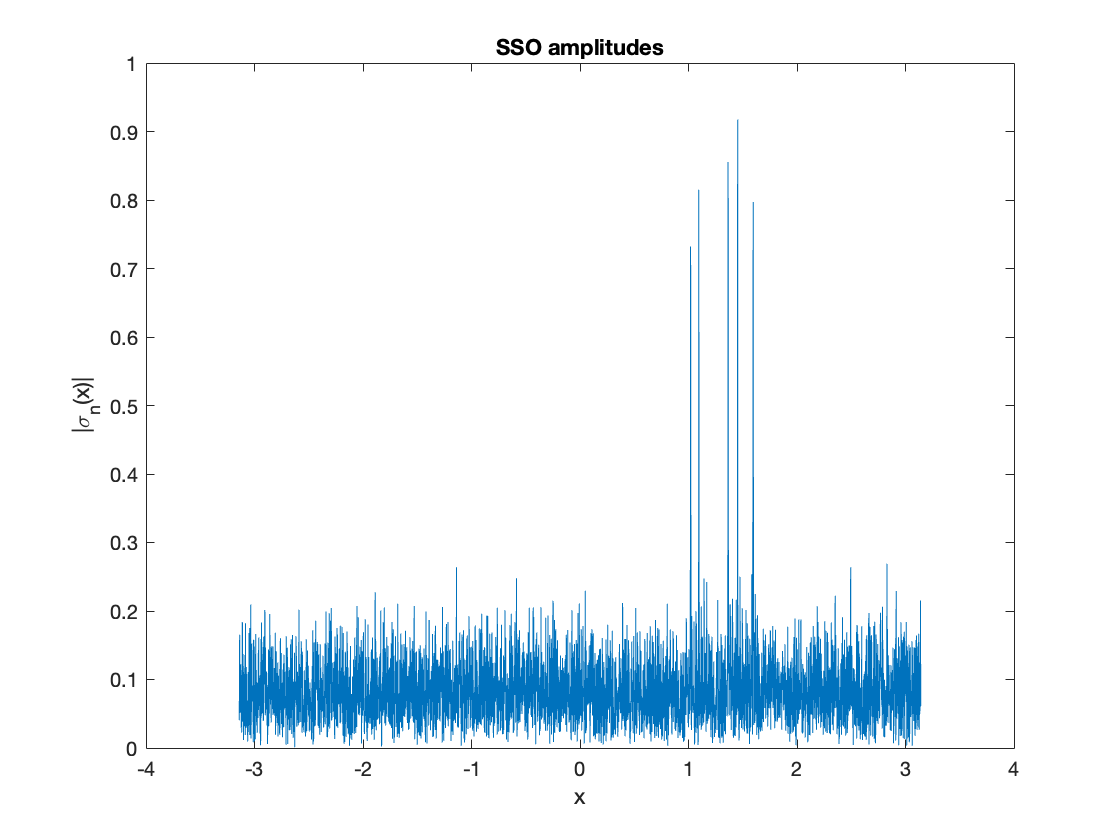}
\includegraphics[scale=.14]{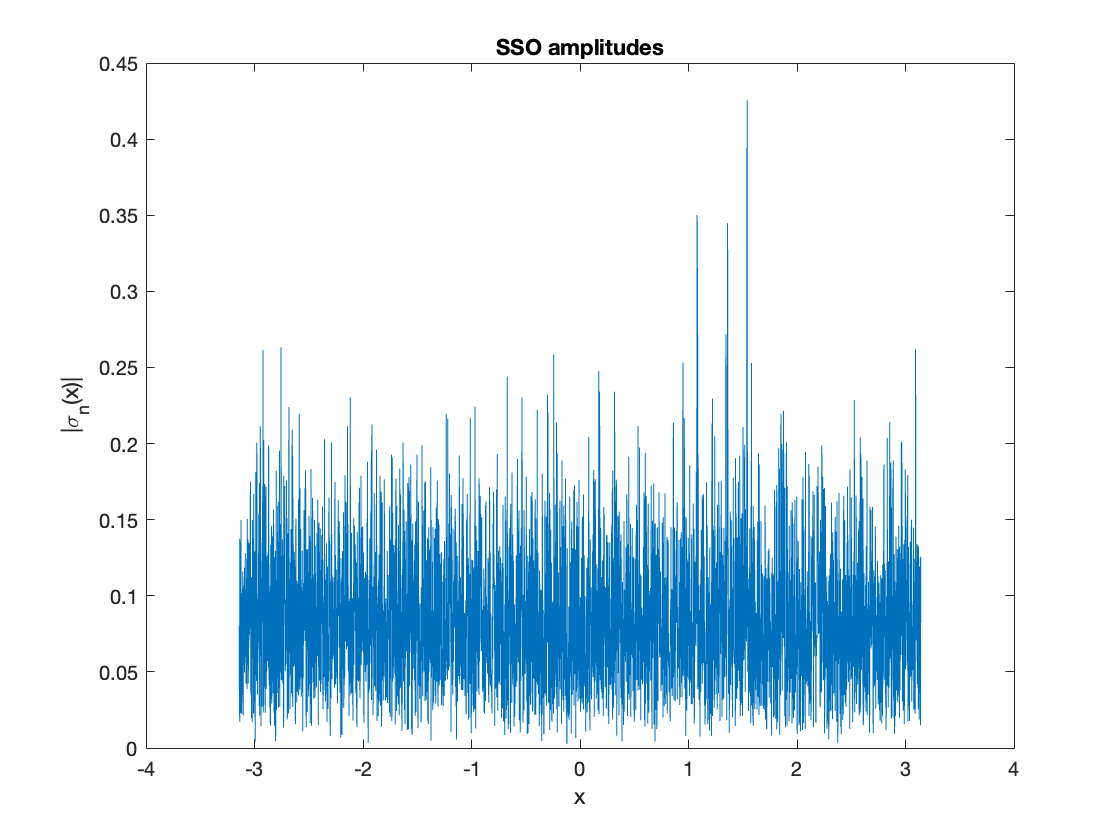}
\includegraphics[scale=.14]{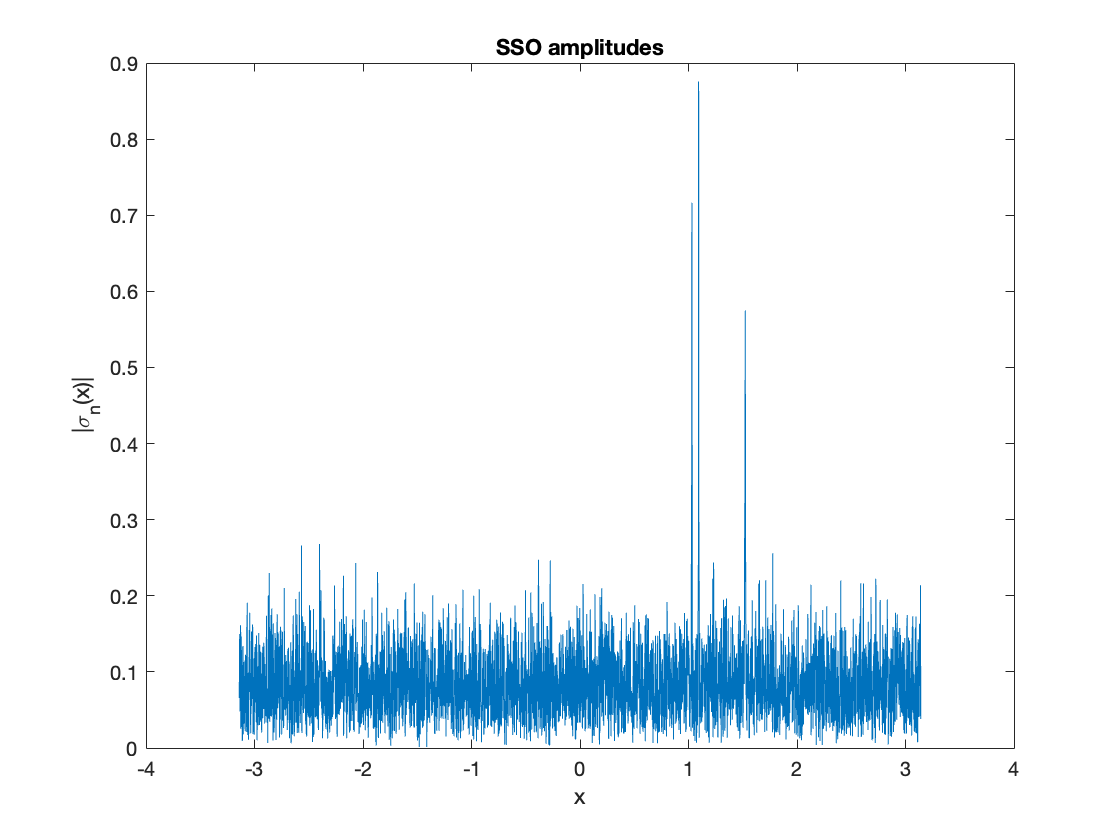}
\includegraphics[scale=.14]{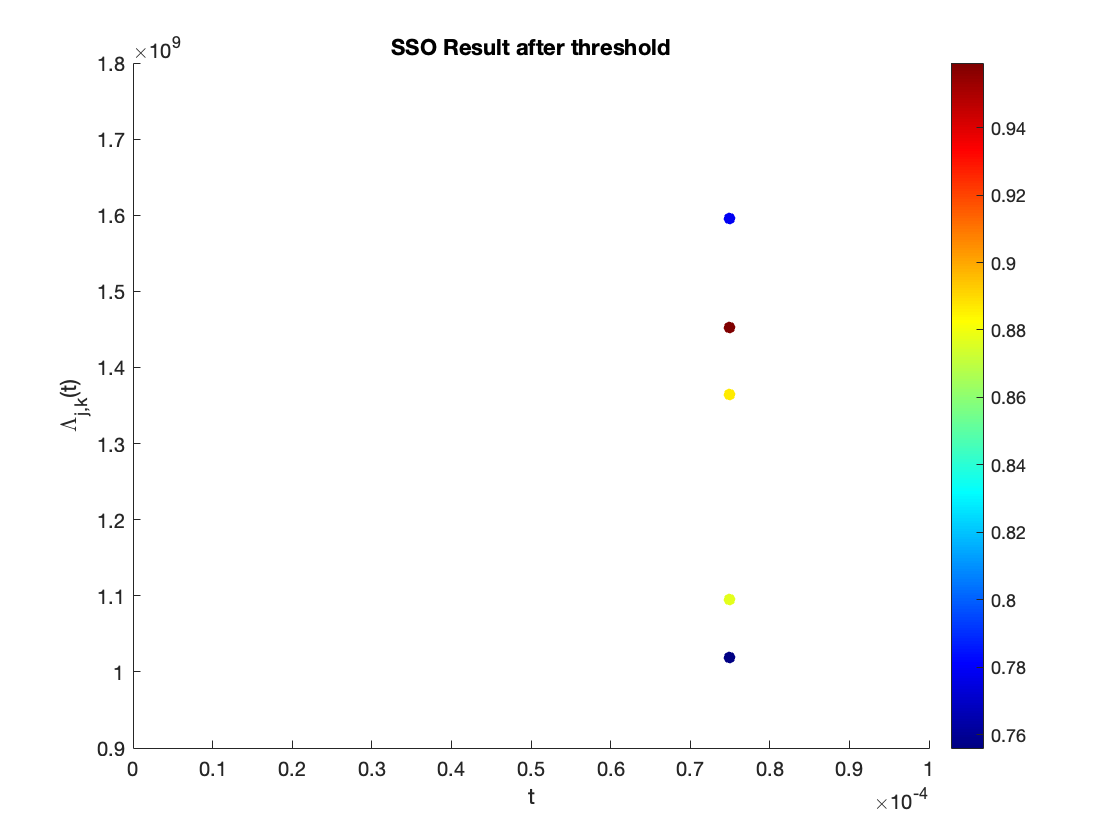}
\includegraphics[scale=.14]{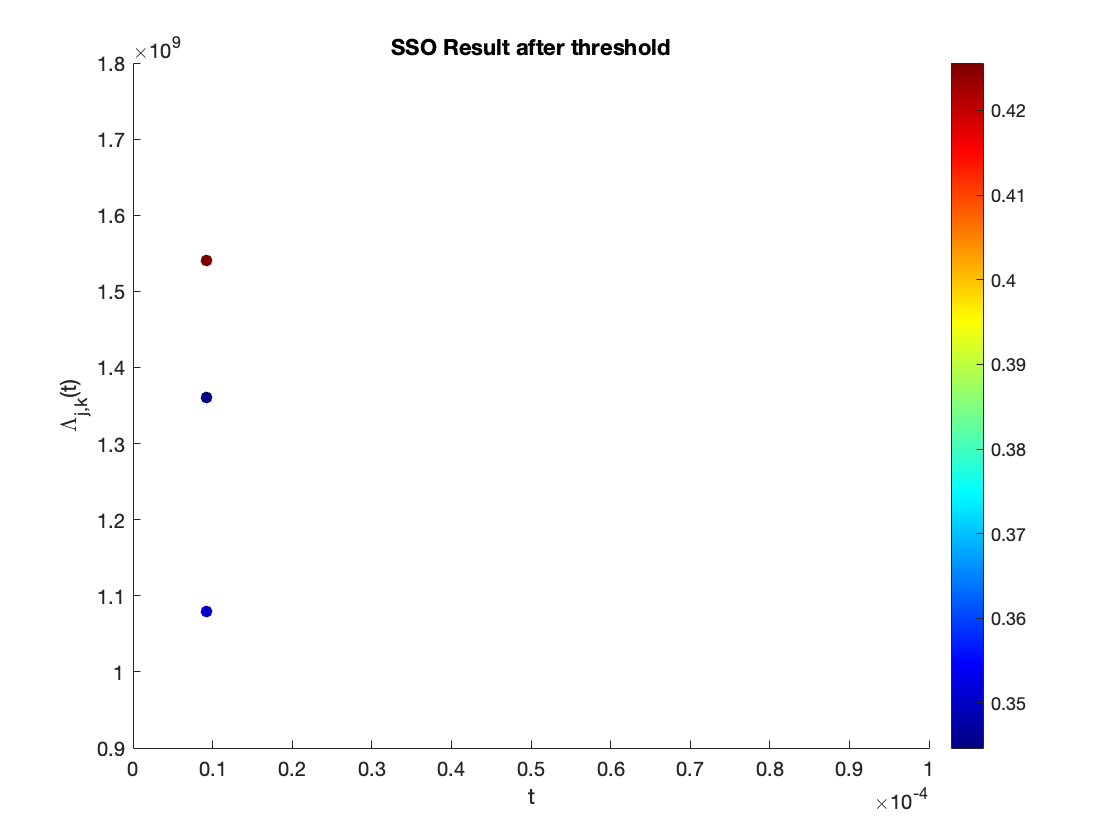}
\includegraphics[scale=.14]{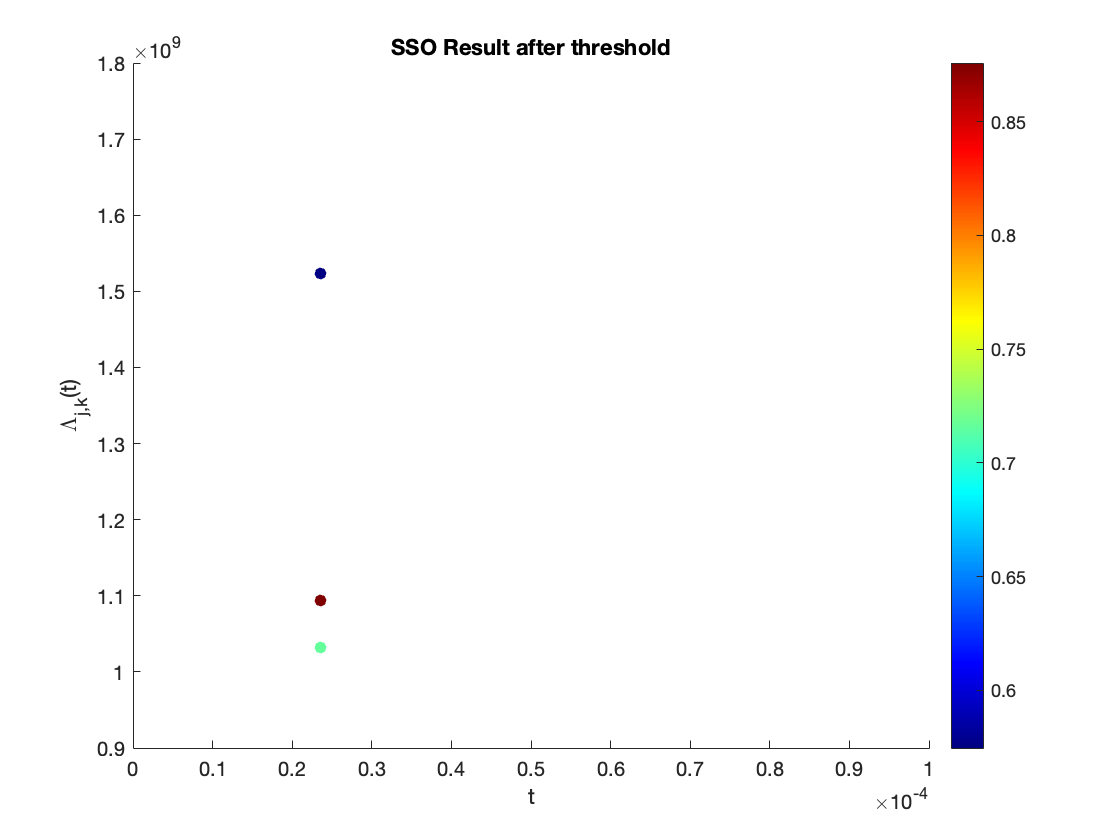}
\end{center}
\caption{(Top) $|\sigma_{n,k}(x)|$ for the interval $I_k$. (Bottom) SSO results at $t_k$.
(Left) The plots for $t_k =7.5 \times 10^{-5}$ where the signals pass from the beginning through the end of the interval $I_k$. (Middle) The plots for $t_k =1 \times 10^{-5}$ where the signal starts within the interval $I_k$. (Right) The plots for $t_k =2.35 \times 10^{-5}$ where there are 2 signals crossover within the interval $I_k$. }
\label{fig:freqattk}
\end{figure}

The output of this step is the estimated values of the instantaneous frequencies at $t_k$:
$$
\Lambda_{j,k}=\omega_{j,k}+a_{j,k}(t_k-\gamma_{j,k}), \qquad j=1,\cdots, J_k.
$$
This is illustrated in  Figure~\ref{fig:freqattk}.
We see that the frequencies are determined quite accurately when there is no discontinuity in the signals, and the minimal separation is large, but there is a problem when a signal starts/stops somewhere within $I_k$, resulting in a discontinuity or when there is a cross-over frequency, resulting in a small minimal separation.

\subsection{Parameter estimation}\label{bhag:paramest}
The starting point of this step is the results of all the instantaneous frequencies in all the snippets.  The resulting diagram is called  the raw SSO diagram as shown in Figure~\ref{fig:rawsso} (Left). 
Knowing the receiver bandwidth $B_{\mbox{\scriptsize{rec}}}$, the DBSCAN on line~1 of Algorithm~2 then isolates  on the relevant frequency part of the SSO diagram, which shows a piecewise constant approximation to the frequencies in the signal (see Figure \ref{fig:rawsso} (Right)).
If the signals were in two different bands, this step would isolate both of them separately. 
We will illustrate this in Section~\ref{bhag:thresholdselect}.

 \begin{algorithm}[ht]
\begin{algorithmic}[1]
\item[{\rm a)}] \textbf{Input:} the receiver bandwidth $B_{\mbox{\scriptsize{rec}}}$, the minimum separation $\eta, \{\Lambda_{j,k}\}$ for $j=1,\ldots,J_k$, $k=1,\ldots,D$, and minimum number of neighbors $D_1, D_2$.
\item[{\rm b)}] \textbf{Output:} Estimation of $\omega_p, B_p, d_{p}$ and $\gamma_p$ for $p = 1, \ldots, P$.
\STATE $(\{\Lambda_{j,k}\},-1)$, $(\{\Lambda_{j,k}\},1)$ $\gets$ DBSCAN($\{\Lambda_{j,k}\}$, $B_{\mbox{\scriptsize{rec}}}$, $D_1$) to find the part of the $\{\Lambda_{j,k}\}$ relevant for our signal.
\STATE $\{\Lambda_{j,k}\}$ $\gets$ $(\{\Lambda_{j,k}\},1)$
\STATE $(\{\Lambda_{j,k}\}, -1)$, $(\{\Lambda_{j,k}\}, 1)$, $\ldots$, $(\{\Lambda_{j,k}\}, P)$ $\gets$ DBSCAN($\{\Lambda_{j,k}\}$, $\eta$, $D_2$) to separate the $\{\Lambda_{j,k}\}$ into $P$ clusters.
\FOR{$p=1$ to $P$}
\STATE $\Gamma_p \gets [(t_k, y_{j,k}) \mbox{ for } y_{j,k} \in (\{\Lambda_{j,k}\}, p)]$
\STATE $\gamma_p \gets$ $\min(t_k)$ for $(t_k, y_{j,k}) \in \Gamma_p$.
\STATE $d_p \gets$ $\max(t_k) - \min(t_k)$ for $(t_k, y_{j,k}) \in \Gamma_p$
\STATE $B_p \gets$ $(\max(y_{j,k}) - \min(y_{j,k}))/2$ for $(t_k, y_{j,k}) \in \Gamma_p$
\STATE Use linear regression on 50\% of $\Gamma_p$ that has highest $|\sigma_{n,k}(x)|$ to obtain the parameter estimation for $\omega_p$.
\STATE $\tau_{\mbox{\scriptsize{rmse}},p}$ $\gets$ $1\%$ of $\mbox{mean}(y_{j,k})$ for $(t_k, y_{j,k}) \in \Gamma_p$.
\IF{$\mbox{RMSE}_p$ defined in \eqref{eq:rmse} $>$ $\tau_{\mbox{\scriptsize{rmse}},p}$}
\STATE Perform Algorithm 3
\ENDIF
\ENDFOR
\STATE \textbf{Return: } $\omega_p, B_p, d_p$ and $\gamma_p$ for $j = 1, \ldots, P$.
 \caption{Parameter estimation}
 \end{algorithmic}
 \label{alg:estimation}
 \end{algorithm}

Given the minimum separation, we can separate these approximations for different signal components using DBSCAN on Line~3 of
Algorithm~2.
The remainder of Algorithem~2 then focuses on this approximation in each component of the signal, and finds the corresponding parameters using linear regression.
The result of this operation is shown in Figure~\ref{fig:ssofound}.

\begin{figure}[H]
\begin{center}
\includegraphics[scale=.2]{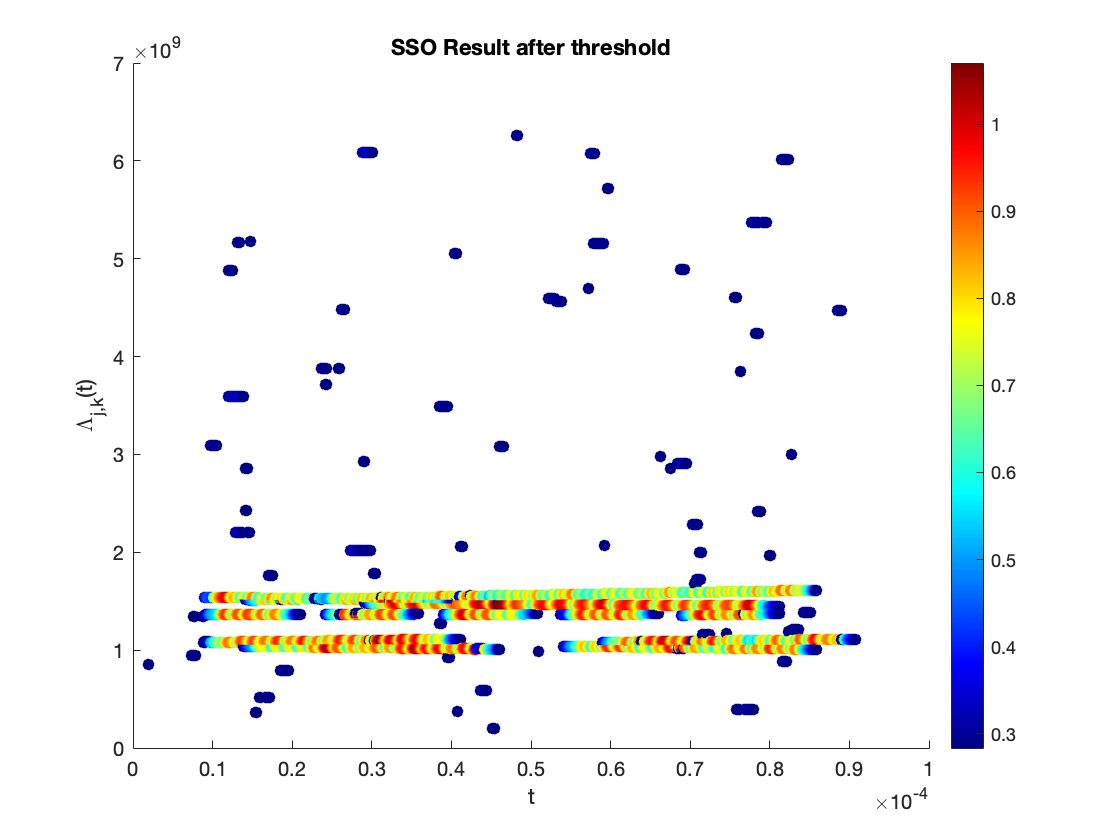}
\includegraphics[scale=.2]{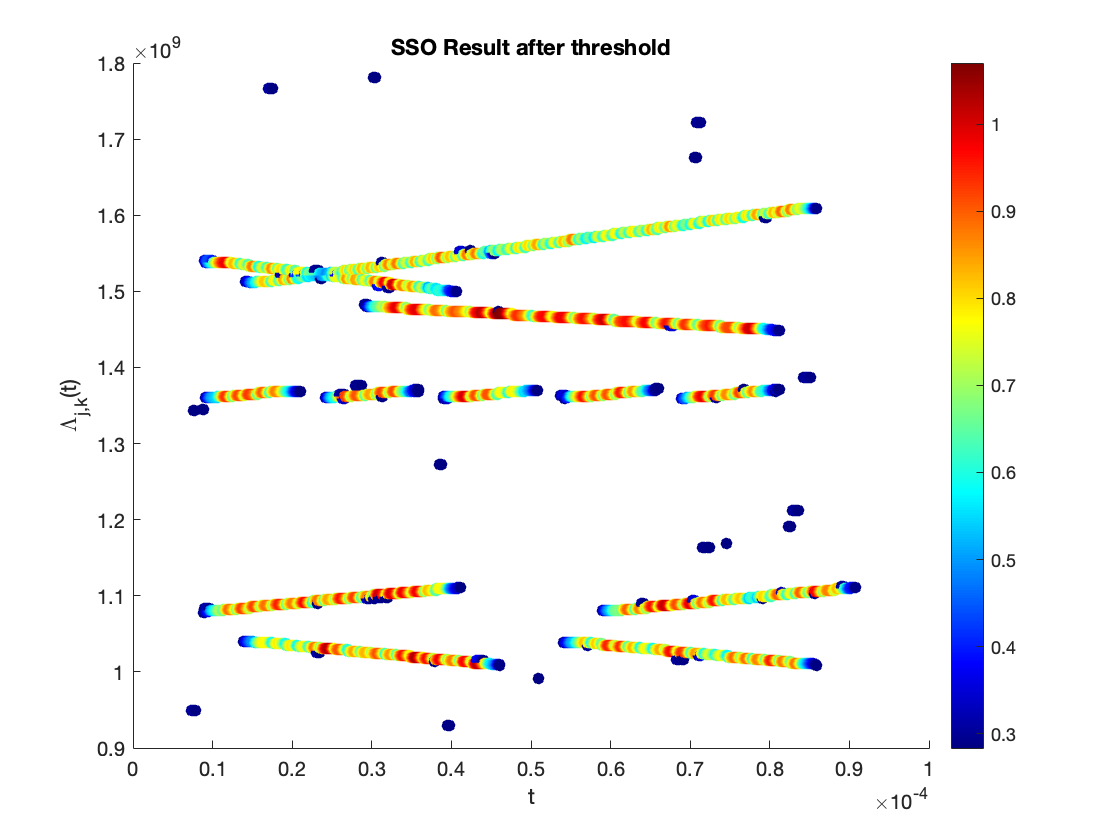}
\end{center}
\caption{(Left) The plot of threshold SSO result $\{\Lambda_{j,k}\}$ for $j=1,\ldots,J_k$, $k=1,\ldots,D$ by choosing $\Delta = 2\times 10^{-6}, D=2500, D_1=D/2, D_2=D/100,$ and $t_k$ are equidistant samples from 0 to $1\times 10^{-4}$. 
The peaks represented by blue dots result from noise, and may be subject to aliasing. Our sampling rate ensures that the signal of interest is not affected by aliasing. (Right) The plot of threshold SSO result after step 1 of Algorithm 2.}
\label{fig:rawsso}
\end{figure}

\subsection{Refinements}\label{bhag:refine}

Algorithm~3  focuses on working on the clusters in the SSO diagram where the signals crossover; i.e. the distance between two or more signals is less than the minimum separation $\eta$.

\begin{figure}[H]
\begin{center}
\includegraphics[scale=.2]{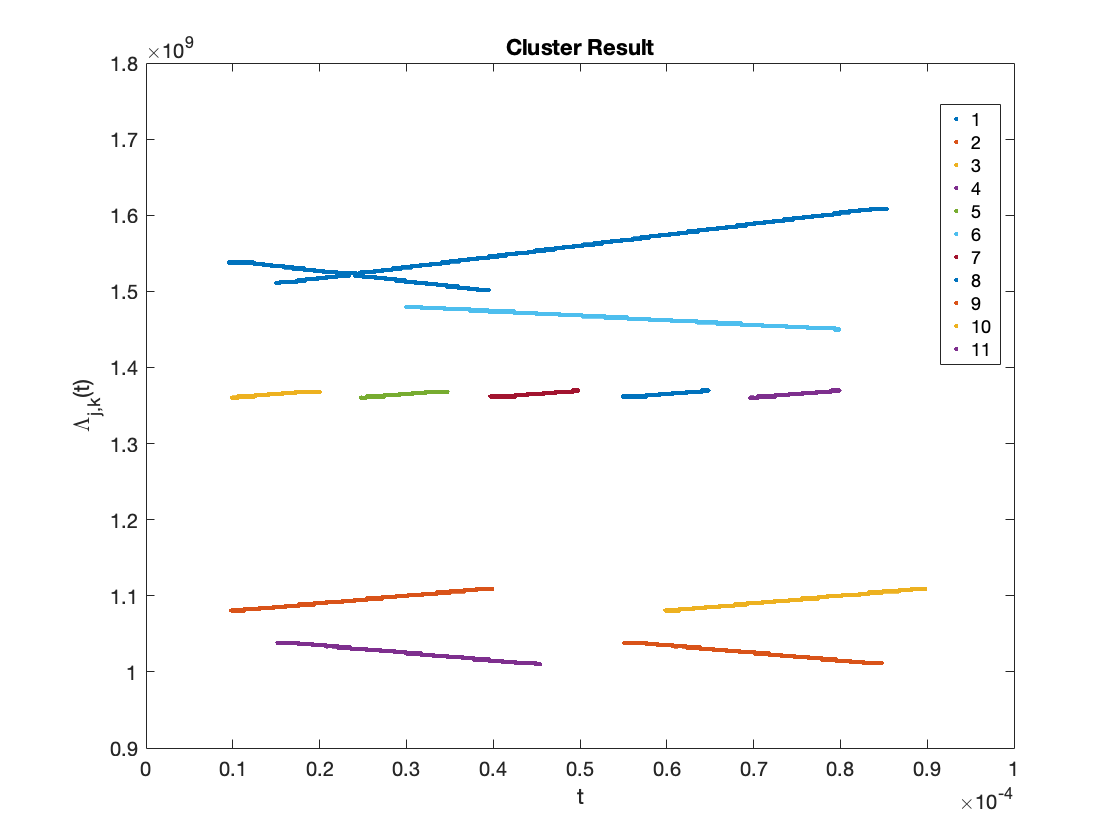}
\includegraphics[scale=.2]{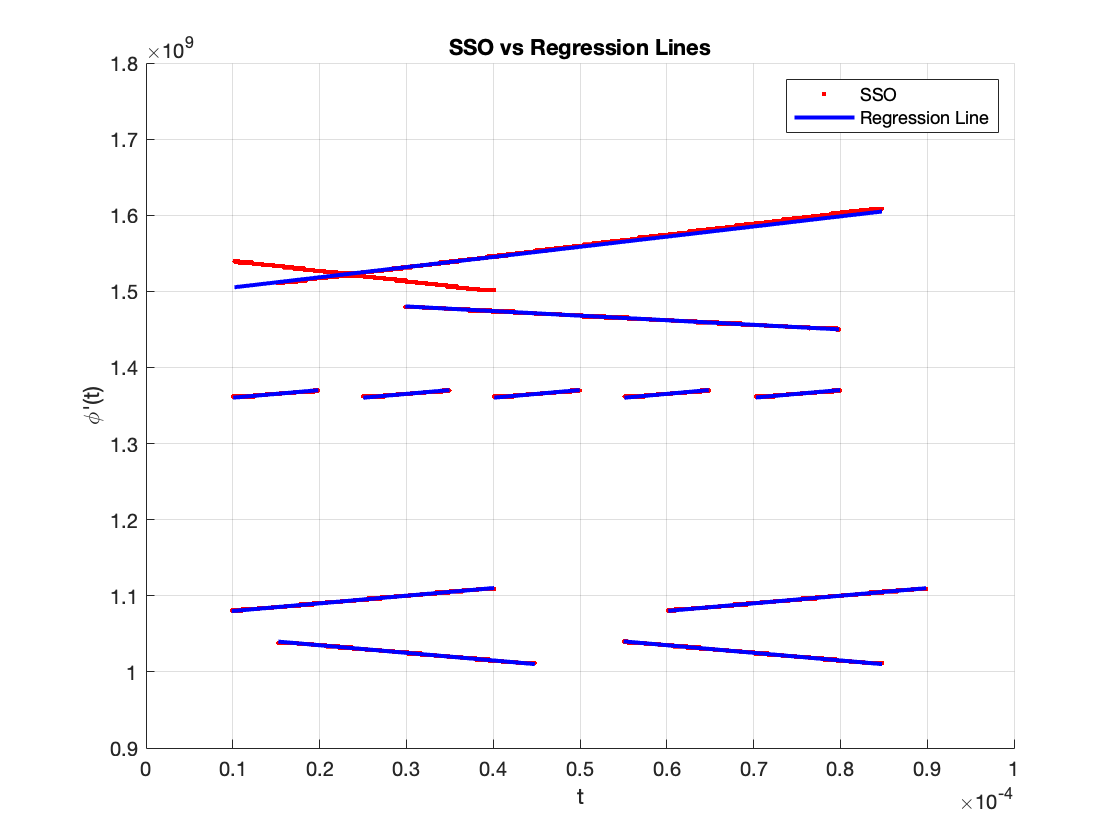}
\end{center}
\caption{(Left) The plot of signal components after using DBSCAN on Line~3 of Algorithm~2. (Right) The comparison plot between the SSO result vs the regression lines of $((\{\Lambda_{j,k}\},1), p)$. This is the result of Algorithm 2 without performing Algorithm 3; i.e. skipping steps 8 - 10 of Algorithm 2.}
\label{fig:ssofound}
\end{figure}

\begin{algorithm}[ht]
\begin{algorithmic}[1]
\item[{\rm a)}] \textbf{Input:} the minimum separation $\eta, (\{\Lambda_{j,k}\}, p)$ for $p$ when signals crossover, and minimum number of neighbors $D_2$, partition number $M$.
\item[{\rm b)}] \textbf{Output:} Estimation of $\omega_p, B_p, d_p$ and $\gamma_p$ for $p$ when signals crossover.
\STATE $\{\Lambda_{j,k}\}$ $\gets$ $(\{\Lambda_{j,k}\},p)$
\STATE partitions $\gets$ partition $\{\Lambda_{j,k}\}$ into $M$ parts.
\FOR{partition $m$ in  partitions}
\STATE $(\{\Lambda_{j,k}\}, -1)$, $(\{\Lambda_{j,k}\}, 1)$, $\ldots$, $(\{\Lambda_{j,k}\}, P_m)$ $\gets$ DBSCAN($\{\Lambda_{j,k}\}$, $\eta$, $D_2$) to separate the $\{\Lambda_{j,k}\}$ into $P_m$ clusters.
\FOR{$p_m=1$ to $P_m$}
\STATE $\Gamma_{p_m} \gets [(t_k, y_{j,k}) \mbox{ for } y_{j,k} \in (\{\Lambda_{j,k}\}, {p_m})]$
\STATE $\gamma_{p_m} \gets$ $\min(t_k)$ for $(t_k, y_{j,k}) \in \Gamma_{p_m}$.
\STATE $d_{p_m} \gets$ $\max(t_k) - \min(t_k)$ for $(t_k, y_{j,k}) \in \Gamma_{p_m}$
\STATE $B_{p_m} \gets$ $(\max(y_{j,k}) - \min(y_{j,k}))/2$ for $(t_k, y_{j,k}) \in \Gamma_{p_m}$
\STATE Use linear regression on 50\% of $\Gamma_{p_m}$ that has highest $|\sigma_{n,k}(x)|$ to obtain the parameter estimation for $\omega_{p_m}$.
\STATE $\tau_{\mbox{\scriptsize{rmse}},{p_m}}$ $\gets$ $1\%$ of $\mbox{mean}(y_{j,k})$ for $(t_k, y_{j,k}) \in \Gamma_{p_m}$.
\IF{$\mbox{RMSE}_{p_m}$ defined in \eqref{eq:rmse} $>$ $\tau_{\mbox{\scriptsize{rmse}},{p_m}}$}
\STATE Remove $\omega_{p_m}, B_{p_m}, d_{p_m}$, and $\gamma_{p_m}$ from estimate parameters and return fail to detect this part of the signal.
\ENDIF
\ENDFOR
\ENDFOR
\STATE Compute $\phi'_{p_m}$ for each partition. If any two or more of the $\phi'_{p_m}$'s are within $10\%$ each other,  then average them and update these parameters to $\omega_{p}, B_{p}, d_{p}$ and $t_{p}$.
\STATE \textbf{Return: } $\omega_{p}, B_{p}, d_{p}$, and $\gamma_{p}$ for $p$ when signals crossover.


 \caption{Handling signals crossover}
 \end{algorithmic}
 \label{alg:crossover}
 \end{algorithm}

The part where signals crossover is detected by computing the RMSE between $y_{j,k}$ for $(t_k, y_{j,k}) \in \Gamma_{p}$ and the regression line using the formula
\bea\label{eq:rmsedef}
\mbox{RMSE}_p &=& \sqrt{\frac{1}{D}\sum_{k=1}^D\sum_{j=1}^{J_k} |y_{j,k} - (\omega_{p}+B_{p}/d_{p}(t_k-\gamma_{p}))|^2}, \label{eq:rmse}
\eea
where the parameters $\omega_p, B_p$ etc. are determined on lines~5-9 of Algorithm~2.
If the $\mbox{RMSE}_p$ is larger that a small threshold, we then proceed to the refinement algorithm, Algorithm 3.

To refine this part of the signal, we partition the signal into small intervals as shown in Figure \ref{fig:algo3}. Finally, we set a threshold on the $\phi_{p_m}'(t)$ and connect them if the differences between two or more $\phi_{p_m}'(t)$'s are within the threshold. 
The principles for doing this are the same as discussed in Section~\ref{bhag:paramest}.
However, at this stage, the noise is practically eliminated, making the choices easier.
The final result is shown in Figure \ref{fig:final}.

\begin{figure}[H]
\begin{center}
\includegraphics[scale=.15]{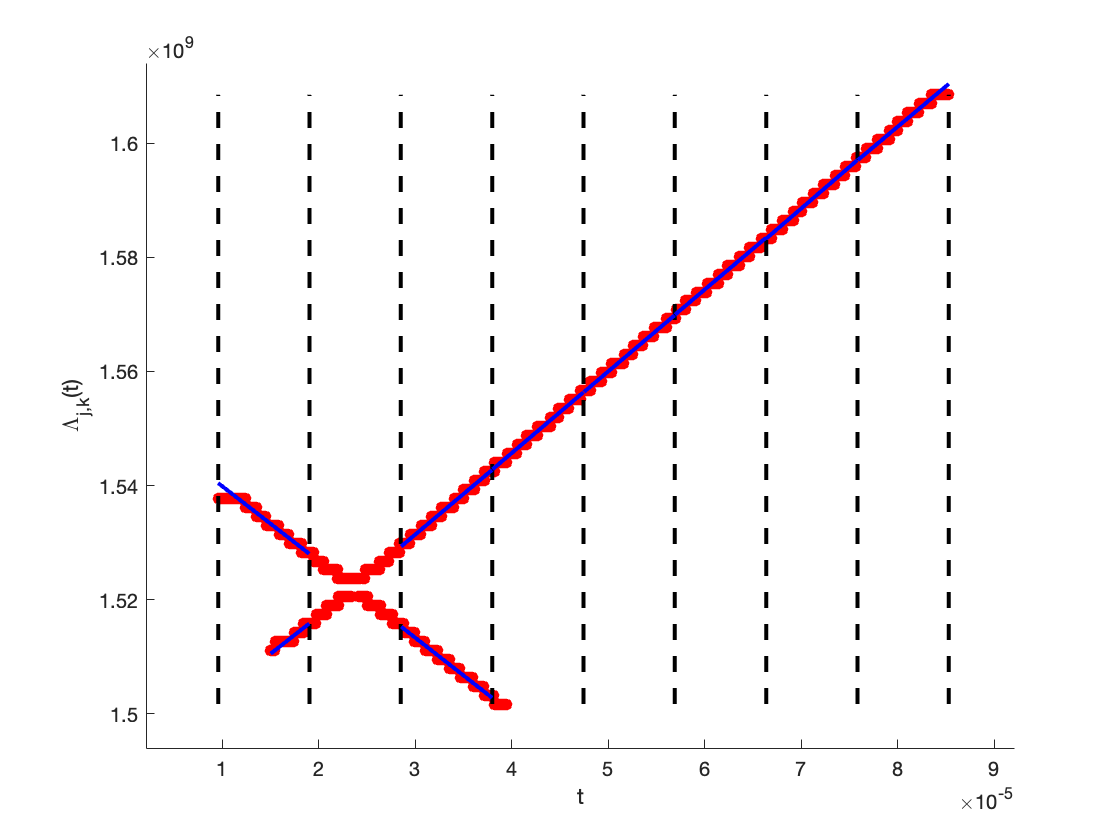}
\includegraphics[scale=.15]{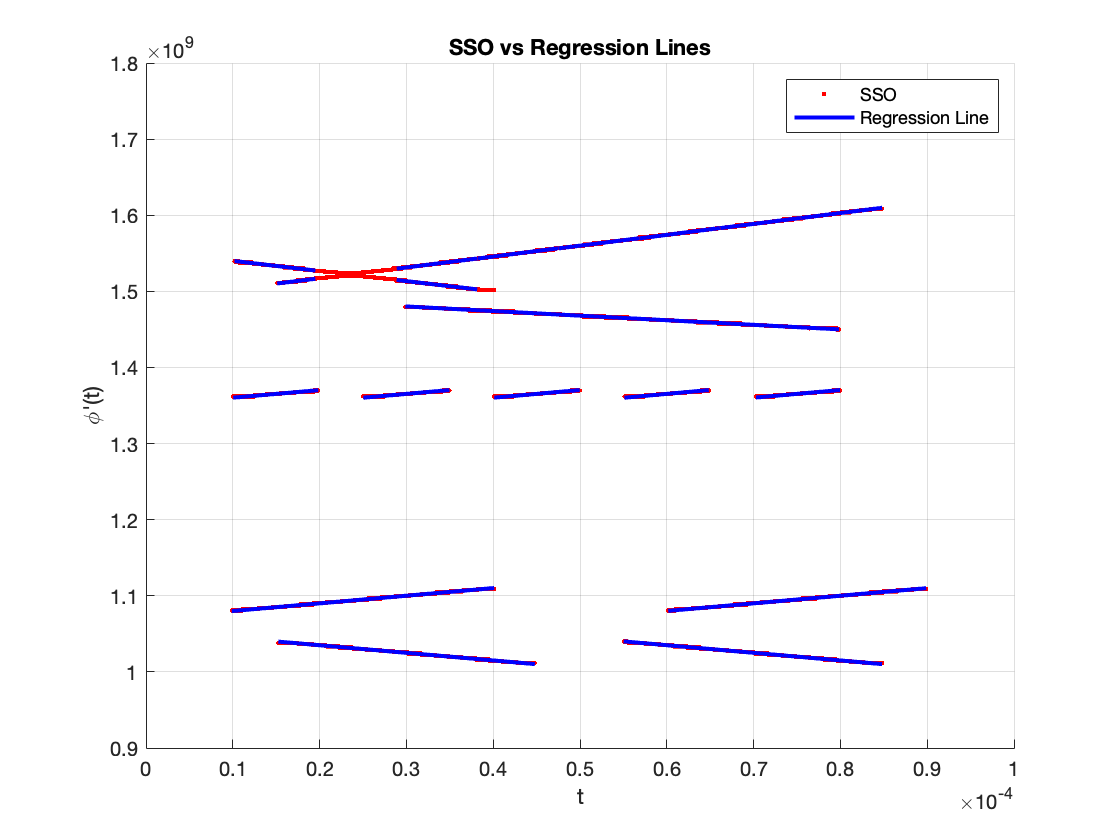}
\end{center}
\caption{(Left) The plot of crossing signals after performing step 1 - 13 of the Algorithm 3 by choosing $M = 8$ partitions. (Right) The comparison plot between the SSO result vs the regression lines of $\{\Lambda_{j,k}\}$ after performing step 1 - 13 of the Algorithm 3.}
\label{fig:algo3}
\end{figure}

\begin{figure}[H]
\begin{center}
\includegraphics[scale=.15]{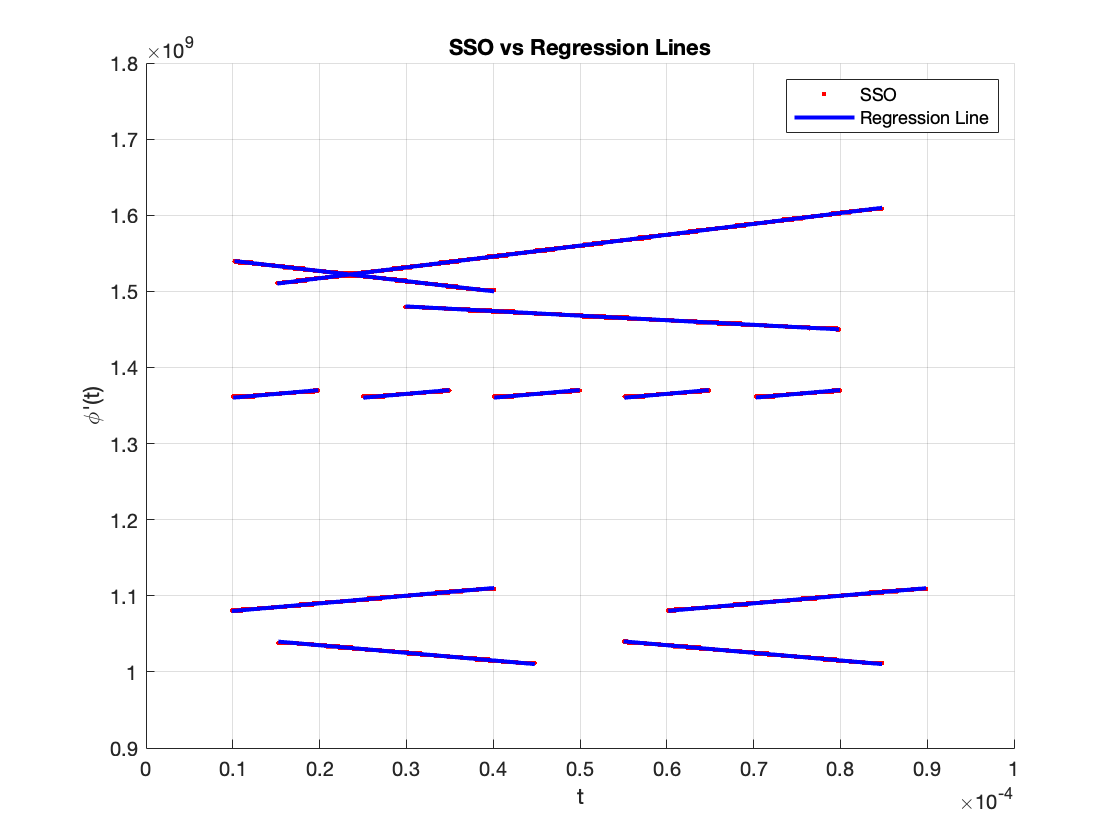}
\includegraphics[scale=.15]{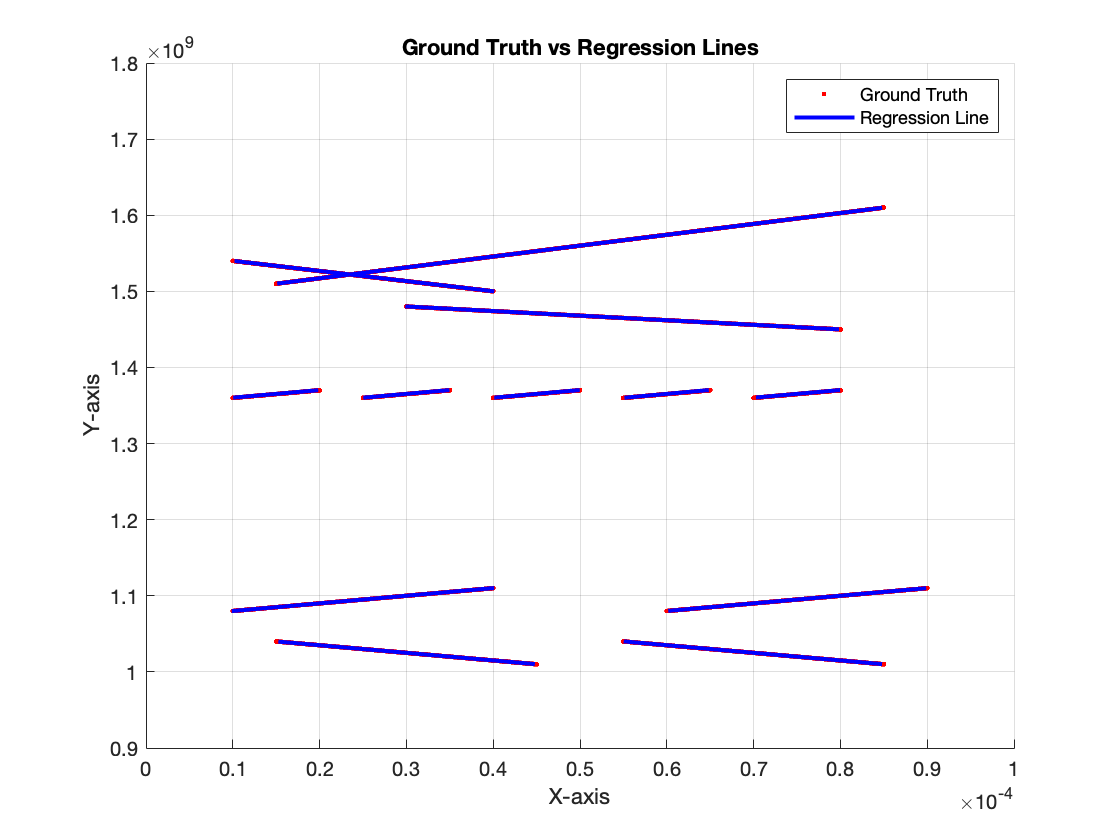}
\end{center}
\caption{The final comparison plot (after all algorithms are executed) between the SSO result vs the regression lines of $\{\Lambda_{j,k}\}$ (left) and the ground truth vs the regression lines (right).}
\label{fig:final}
\end{figure}

\vspace*{-0.5cm}

\bhag{Data generation}\label{bhag:datagen}
In order to test our methodology, we generated a dataset of 15 examples, each consisting of 6 sub-signals in a 0.1 millisecond time window. 
This will be made available at a web location after the paper is published.  
The parameters of the 6 signals are defined in Table~\ref{tab:signal_table} used to illustrate the algorithm in Section~\ref{bhag:algorithm}.  
The additional  examples are generated by keeping the parameter set $(A_j, B_j, d_j, t^{0,j}, \mbox{PRI}_j)$ fixed for  all the sub-signals, but in each example, $\omega_j$ is  selected randomly for each sub-signal in the range $[1,1.6]$ GHz, thereby shifting the signals in the spectrum, each with a different realization of i.i.d white Gaussian noise and sample rates.


We add noise to our generated original signal using the formula
\be\label{eq:add_noise}
\boldsymbol{F} = \boldsymbol{f} + \boldsymbol{\epsilon},
\ee
where $\boldsymbol{F}$ is a vector formed by uniformly sampling the noisy signal $F(t)$ at discrete time points; $\boldsymbol{f}$ is a vector formed by uniformly sampling the original signal $f(t)$ at discrete time points; $\boldsymbol{n}$ is a vector of white Gaussian noise i.i.d samples; and $\boldsymbol{\epsilon}$ is a vector of the noise $\epsilon(t)$ defined by
\be
\boldsymbol{\epsilon} = \frac{||\boldsymbol{f}||_2}{10^{\mbox{\scriptsize{SNR}}/20} * ||\boldsymbol{n}||_2} * \boldsymbol{n}.
\ee
In the numerical experiments shown in Section~\ref{bhag:numerical_results} we use this formula to generate 7 sets of signals, each with 6 sub-signals, and with SNRs $\{-30, -20, -10, 0, 10\}$ dB.  
Furthermore, we also consider three different sample rates $\{ 0.5, 1, 25.0 \}$ GHz leading to 50,000, 100,000, and 2,500,000 samples for the length of $\boldsymbol{F}$ and $\boldsymbol{\epsilon}$ respectively.   
The same process is used to generate the data used in Section~\ref{bhag:refine} with one set of signals given by the parameters in Table~\ref{tab:signal_table}.

\bhag{Choice of tunable parameters}\label{bhag:tunable}
In this section, we describe some heuristic strategies to select the main tunable parameters and their interdependence: the minimal separation $\eta$, the radius $\Delta$ of the snippets,  the degree $n$ in \eqref{eq:sigmadef}, and the threshold in Algorithm~1. 

\subsection{Parameter $\eta$ (minimal separation)}\label{bhag:etaselect}

Ideally we want to pick parameters $\eta$ and $n$ so that $\eta$ is a little bit bigger than the distance between the main prominent peak and its side lobe peak of the kernel $\Phi_n$. 
If $\eta$ is too small, we will have very noisy result. If $\eta$ is too big, we won't detect 2 or more signals that are within the distance $\eta$. 

Here is the table of the recommended relationship between $n$ and $\eta$.

\begin{table}[H]
\begin{center}
\begin{tabular}{ |c|c|c|c|c|c|c|c|c| } 
 \hline
 $n$ & 128 & 256 & 512 & 1000 & 2000 & 5000 & 10000 & 50000 \\
 \hline
 $\eta$ & 0.070204 & 0.035114 & 0.017545 & 0.008988 & 0.004482 & 0.001798 & 0.000887 & 0.000192 \\
 \hline
\end{tabular}
\end{center}
 	\caption{The tables above shows the relationship between $n$ and $\eta$.} \label{tab:n_vs_eta}
\end{table}

Figure~\ref{fig:n_vs_eta}  shows the effect of different $n$ and $\eta$ vs the peaks detected for the signal 
$$
\hat{\mu}(\ell) = \exp(0.02i\ell) + \exp(-0.01i\ell) + \exp(-0.015i\ell)
$$ at sampling rate 1 GHz {without noise}.
We note that the resolution improves as $n$ increases, and for any given $n$, setting the parameter $\eta$ to be smaller than what is indicated in Table~\ref{tab:n_vs_eta} results in the detection of false signals. 
The effect of choosing a small eta might be mitigated by an appropriate choice of the threshold, but the purpose of this section is only to illustrate the connection between the choice of $n$ and $\eta$.

\begin{figure}[ht]
\begin{center}
\begin{minipage}{0.3\textwidth}
\includegraphics[width=\textwidth]{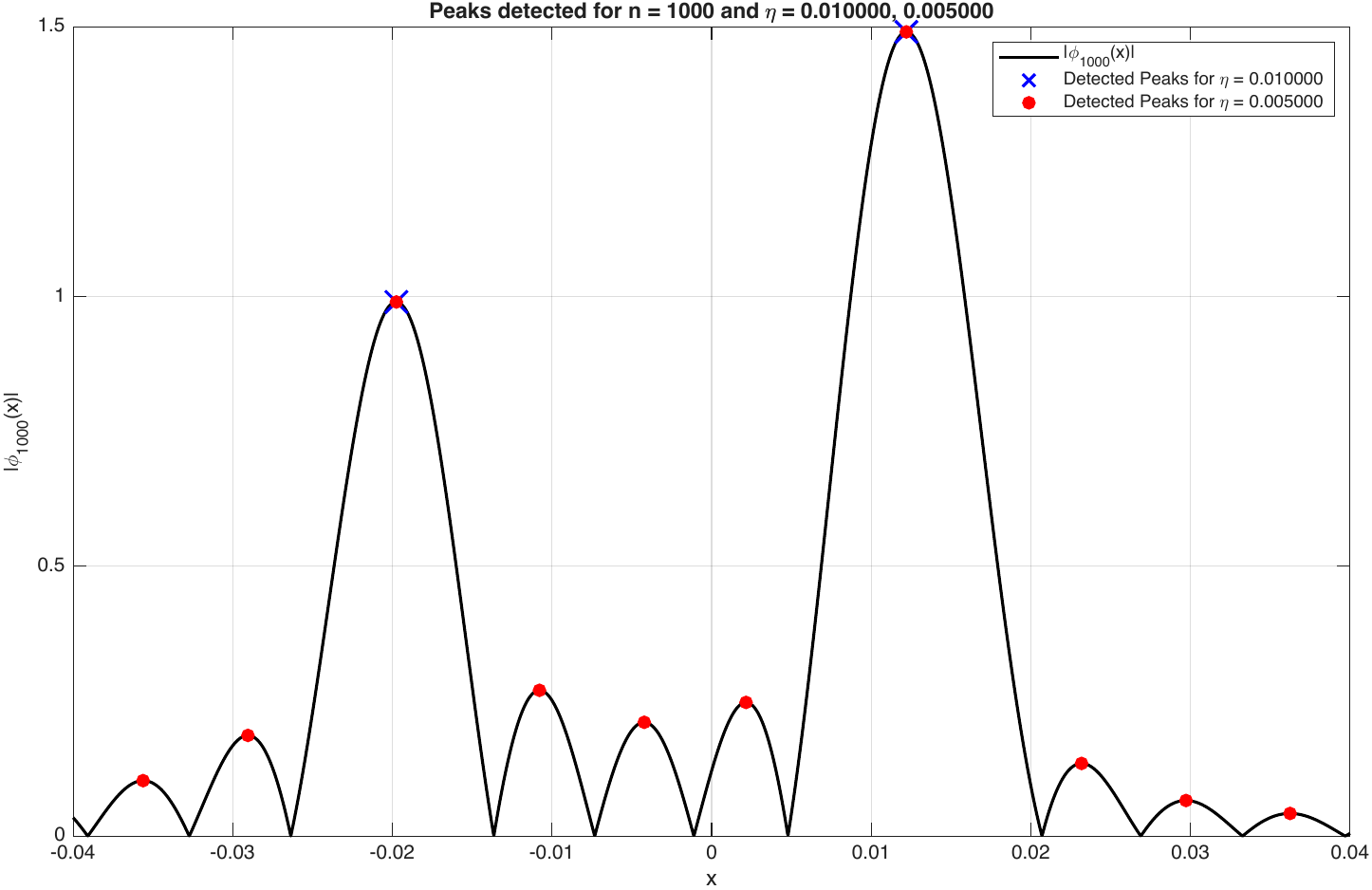}
\end{minipage}
\begin{minipage}{0.3\textwidth}
\includegraphics[width=\textwidth]{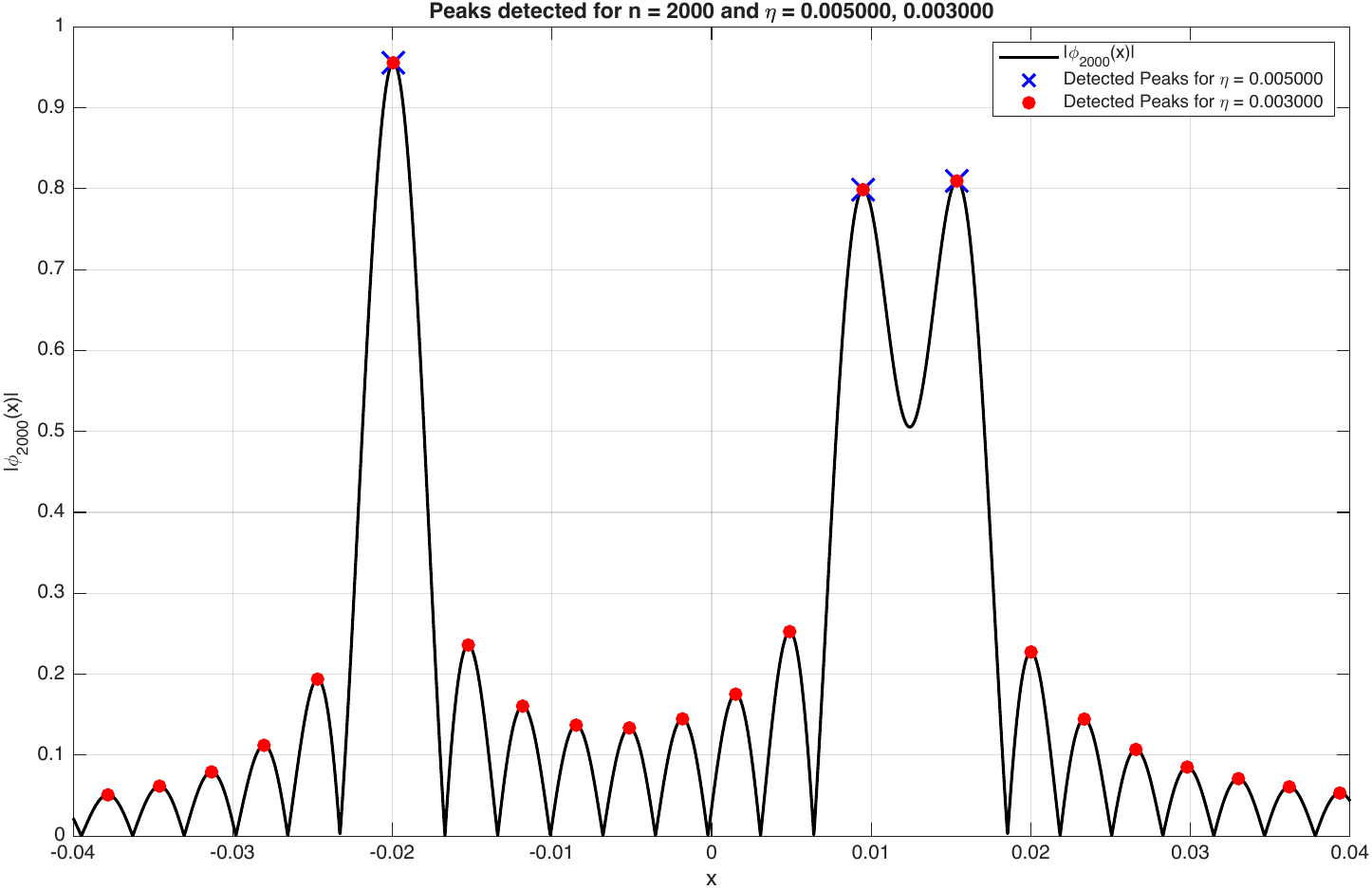}
\end{minipage}
\begin{minipage}{0.3\textwidth}
\includegraphics[width=\textwidth]{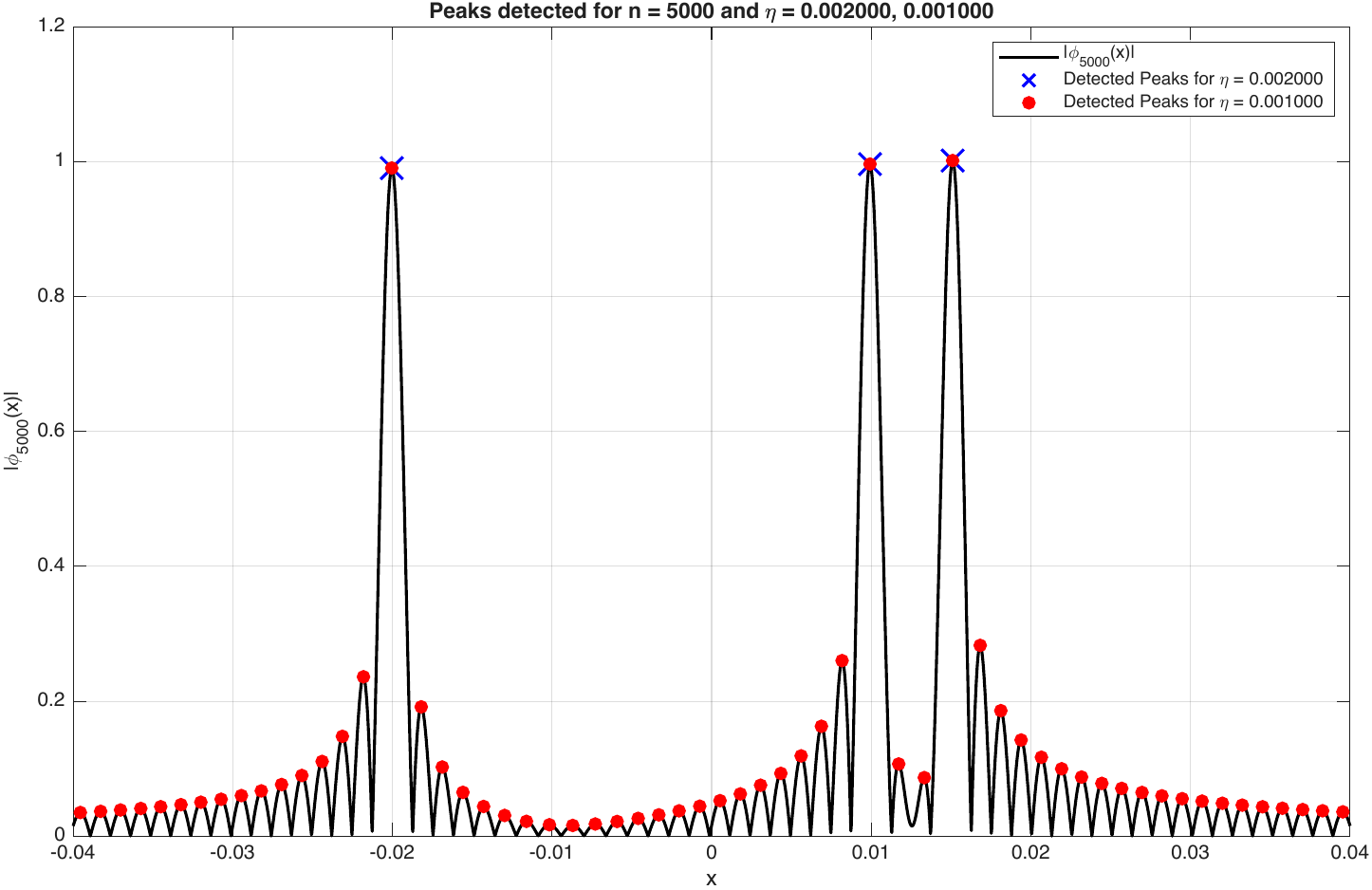}
\end{minipage}
\end{center}
\caption{The effect of choosing $\eta$ and $n$. In each figure, a value of $\eta$ smaller than the value indicated Table~\ref{tab:n_vs_eta} results in false detection of signals. Also, the higher the value of $n$, the better is the resolution for the right value of $\eta$.}
\label{fig:n_vs_eta}
\end{figure}

\subsection{Parameter $\Delta$ (half-length of each snippet)}\label{bhag:deltaselect}

The parameter $\Delta$ is perhaps the most delicated parameter to choose, as remarked in Remark~\ref{rem:ncond}.
For a given sampling rate, a small value of $\Delta$ results in a smaller value of $n$, and hence, less noise reduction as well as ability to resolve close-by signals as discussed in Section~\ref{bhag:etaselect}.
The purpose of this section is to demonstrate with an example the connection between the choice of the parameter $\Delta$ which governs the length of the snippets and the chirp factor.
For this purpose, we consider  the ground truth signal parameters are described in Table \ref{tab:signal_high_slope_table}. 
We note that the sub-signals $1$ and $4$ have a high chirp rate while the others have a relatively low chirp rate ($j=2$ and $3$).
According to \eqref{eq:slowvary}, we should use a small $\Delta$ to detect the high chirp rate signals and a larger $\Delta$ to detect the low chirp rate signals.
The experiments in this section illustrate this, and provides some heuristics for different ranges of the chirp rates.
To get the final result, we  start with a small $\Delta$ first to detect the total number of signals. 
Then, we perform our Algorithm 1-3 and measure the RMSE between the SSO result and its corresponding regression line.
 For signals with the RMSE below a threshold, we have a good estimation. 
 For signals with the RMSE above a threshold, we continue repeat the process with bigger and bigger values of $\Delta$.
 This is illustrated in Figure~\ref{fig:Deltaselect}, using heatmaps of the error rather than an overlap of estimated and actual frequencies.

\begin{table}[H]
\begin{center}
\begin{tabular}{ |c|c|c|c|c|c|c|c|c| }
 \hline
 $j$ & $A_j$ & $\omega_j$ & $B_j$ & $d_j$ & $t^*_{0,j}$ & $\mbox{PRI}_j$ & Chirp Sign & Number of Pulses \\
 \hline
 1 & 1 & 1080000000 & 150000000 & 3.0e-06 & 1.0e-05 & 5.0e-05 & + & 2 \\
 2 & 1 & 1360000000 & 5000000 & 1.0e-05 & 3.0e-05 & 0 & + & 1 \\
 3 & 1 & 1740000000 & 20000000 & 3.0e-05 & 1.0e-05 & 0 & - & 1 \\
 4 & 1 & 1510000000 & 20000000 & 1.0e-05 & 4.5e-05 & 0 & + & 1 \\
 \hline
\end{tabular}
\end{center}
 	\caption{The tables above shows an example of signal ground truth parameters.} \label{tab:signal_high_slope_table}
\end{table}

\begin{figure}[H]
\begin{center}
\begin{minipage}{0.3\textwidth}
\includegraphics[width=\textwidth]{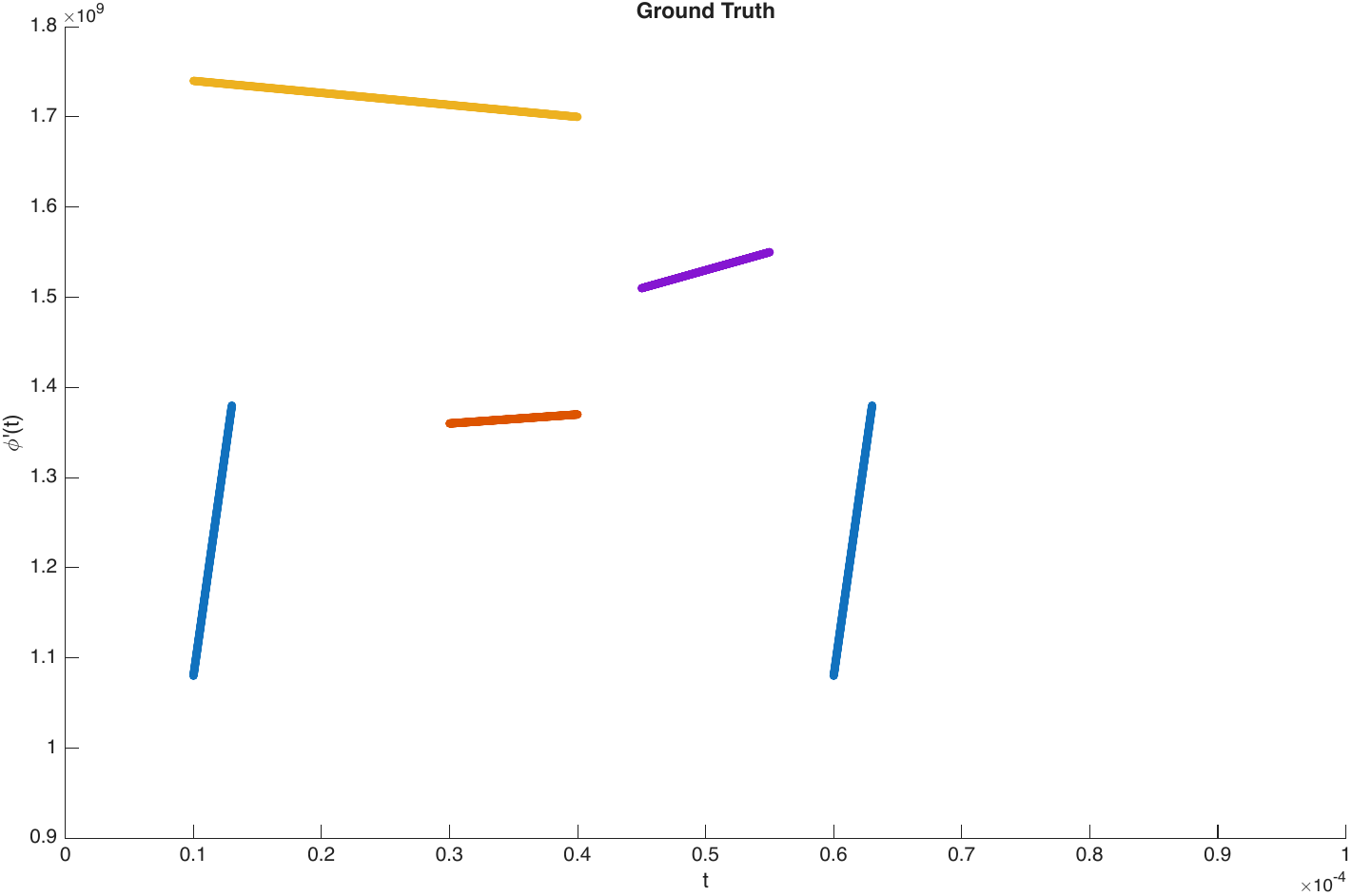}
\end{minipage}
\begin{minipage}{0.3\textwidth}
\includegraphics[width=\textwidth]{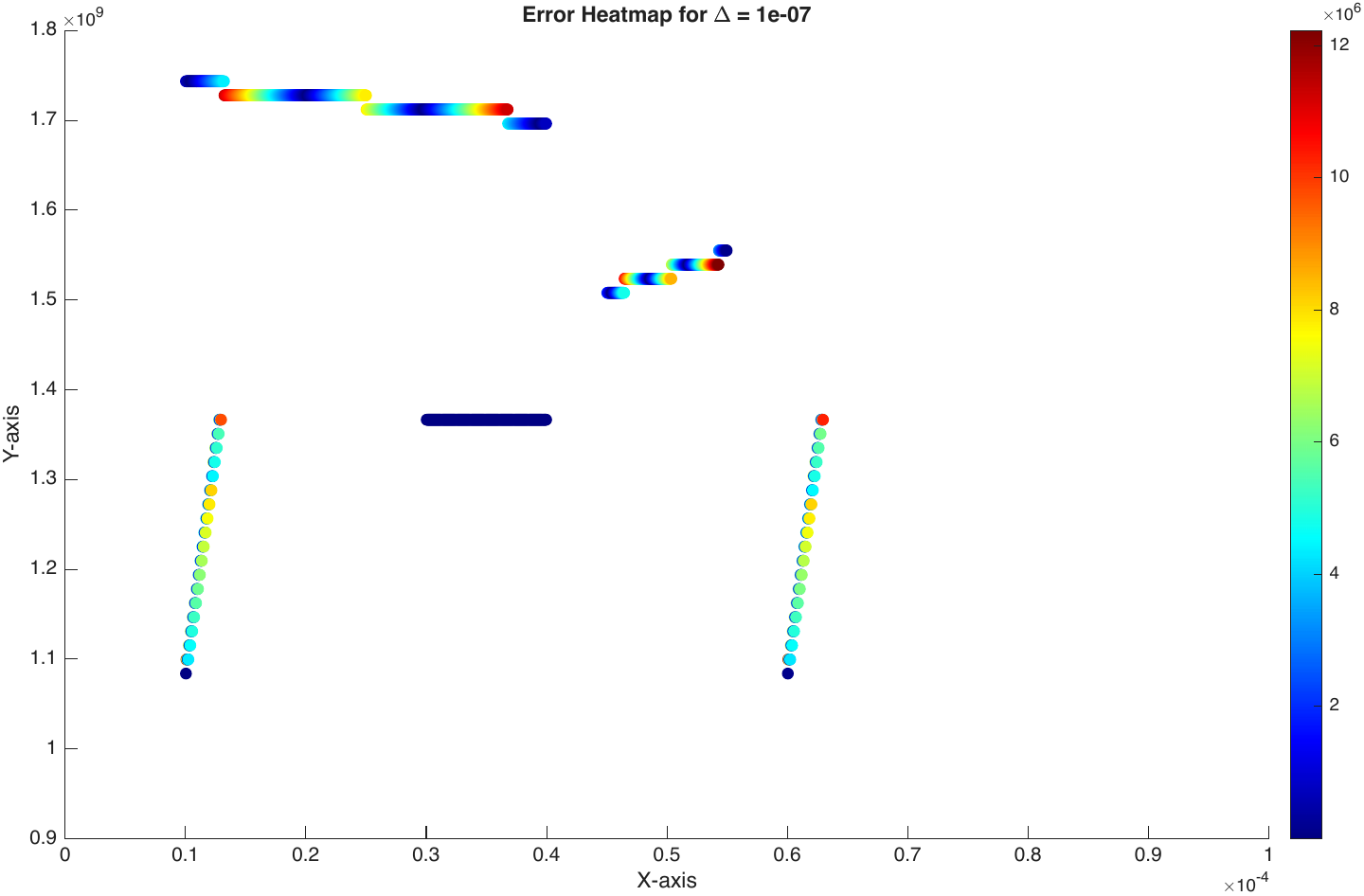}
\end{minipage}
\begin{minipage}{0.3\textwidth}
\includegraphics[width=\textwidth]{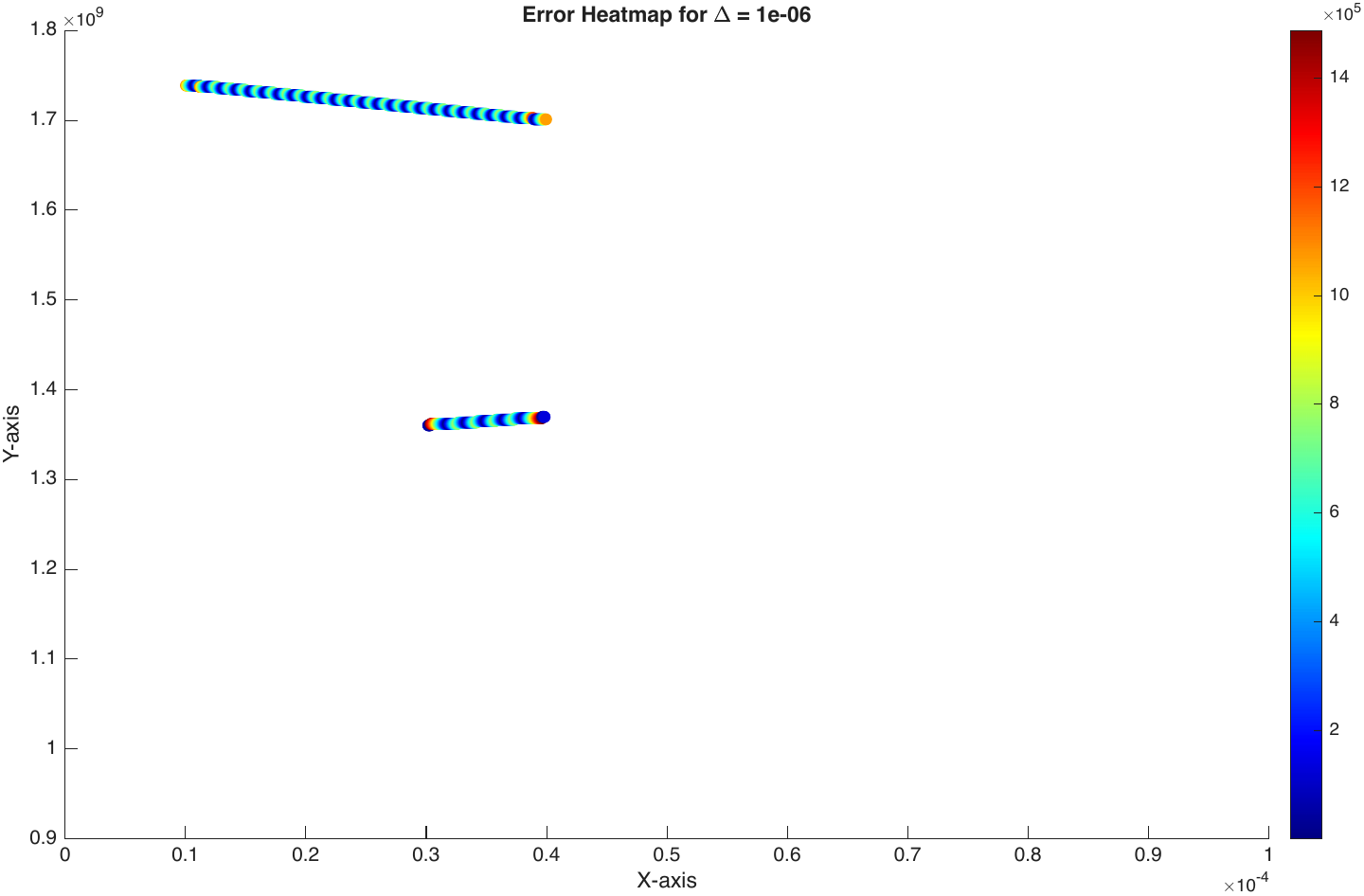}
\end{minipage}
\end{center}
\caption{Left: the ground truth IFs for Table~\ref{tab:signal_high_slope_table}. Middle: With $\Delta=10^{-7}$, the high chirp rate signals are detected accurately, while the low chirp rate signals are not, as indicated by {the error heatmap between the SSO results and the corresponding regression line}. Right: With $\Delta=10^{-6}$, the low chirp rate signals are detected accurately, but not the high chirp rate signals.}
\label{fig:Deltaselect}
\end{figure}

Here is the table of the relationship between $\Delta$ and suitable chirp rate $B_j/d_j$ range.

\begin{table}[H]
\begin{center}
\begin{tabular}{ |c|c|c|c|c|c|c|c|c| } 
 \hline
 $\Delta$ & 1e-06 & 5e-07 & 1e-07 \\
 \hline
 $B_j/d_j$ & [1e+11, 1e+12] & [1e+12, 1e+13] & [1e+13, 1e+14] \\
 \hline
\end{tabular}
\end{center}
 	\caption{The tables above shows the relationship between $\Delta$ and suitable chirp rate $B_j/d_j$ range.} \label{tab:Delta_vs_slope}
\end{table}

\subsection{Selection of the threshold}\label{bhag:thresholdselect}

In this section, we discuss different strategies for selecting the threshold to be used on Line~3 of Algorithm~1.

One strategy is described in Section~\ref{bhag:alg1}, where we select the threshold for each snippet to be 99.9 percentile of the histogram of the SSO applied to that snippet, which is the approach we use in the numerical simulation provided in Section~\ref{bhag:numerical_results}.
The selection of the percentile can be delicate as it depends on the amount of signals and their power relative to the noise SNR.
This is particularly impactful when using the DBSCAN algorithm in Line~1 of Algorithm~2 to identify different sub-bands of signals for further processing when the receiver bandwidth is large; this step is demonstrated in Figure~\ref{fig:twoband}.
Thus, before further processing, using the sub-bands identified by Line-1 of Algorithm-2 one can use a bandpass filter to isolate the different collections of signals and processed individually in the rest of the algorithm. 

\begin{figure}[H]
\begin{center}
\begin{minipage}{0.32\textwidth}
\includegraphics[width=\textwidth]{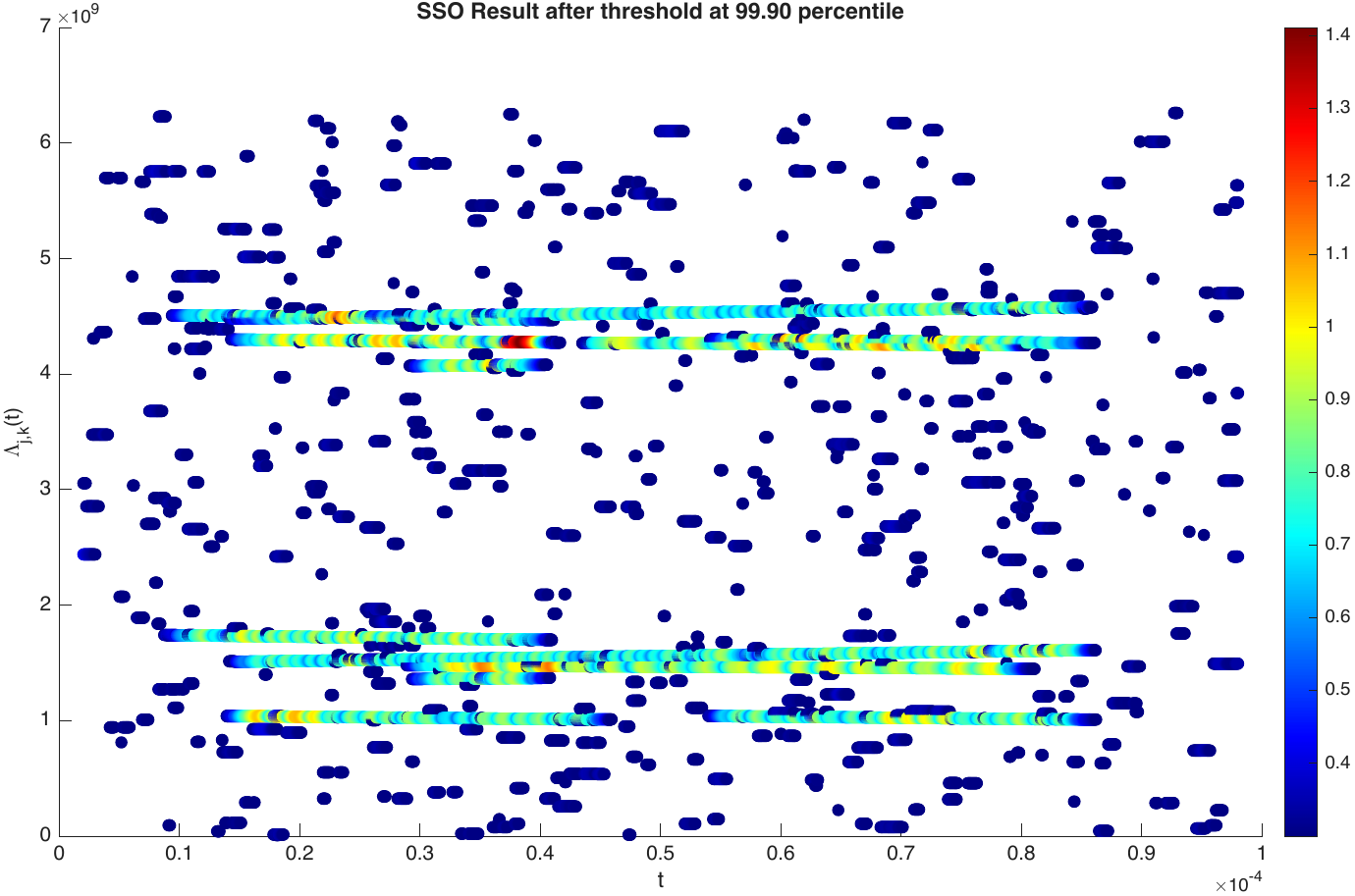} 
\end{minipage}
\begin{minipage}{0.32\textwidth}
\includegraphics[width=\textwidth]{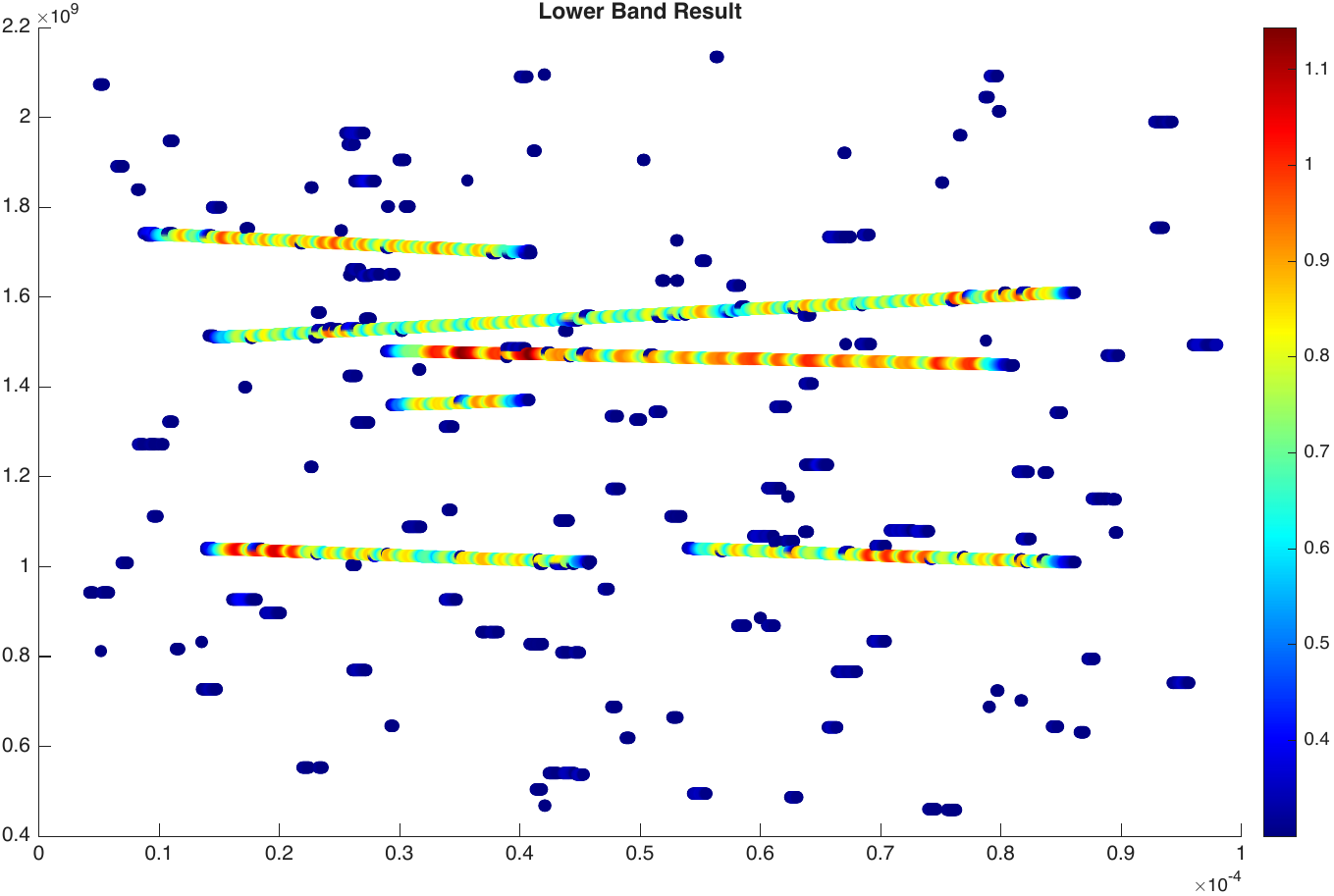} 
\end{minipage}
\begin{minipage}{0.32\textwidth}
\includegraphics[width=\textwidth]{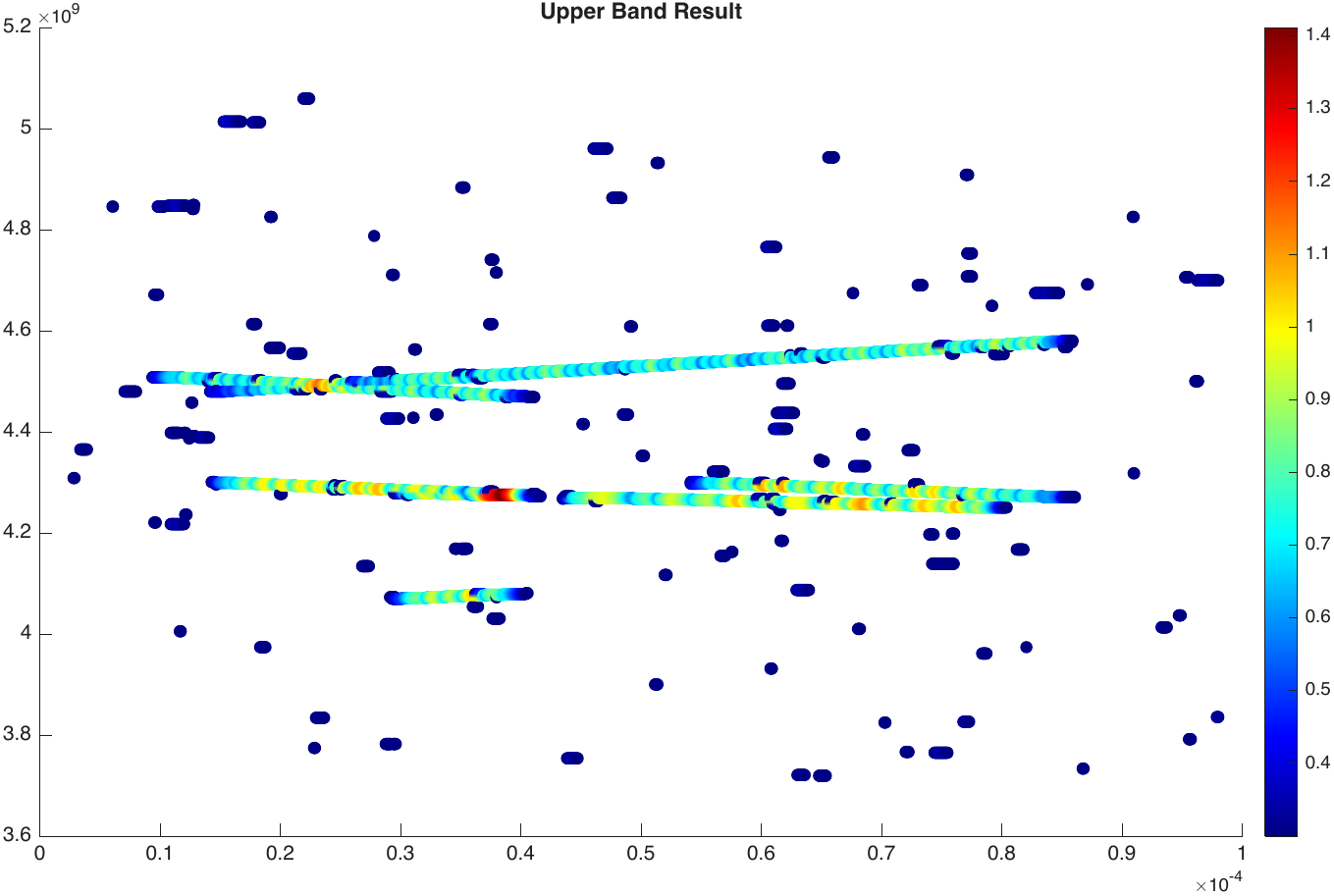} 
\end{minipage}
\end{center}
\caption{Example of a signal with two receiver bandwidths. Left: raw SSO diagram showing the two bandwidths, Middle: Isolation of the lower band, Right: isolation of the upper band.}
\label{fig:twoband}
\end{figure}

One can also skip this first clustering step if they have prior knowledge of the sub-band of the signal, specifically assuming knowledge of $\max_{j,k,t}\phi_{j,k}'(t)$ and $\min_{j,k,t}\phi_{j,k}'(t)$ separately.  To demonstrate the effect of different percentiles we assume knowledge of the signal content to identify the sub-band and evaluate the thresholding step on the result, shown in Figure~\ref{fig:thresholdselect}.

\begin{figure}[H]
\begin{center}
\begin{minipage}{0.3\textwidth}
\includegraphics[width=\textwidth]{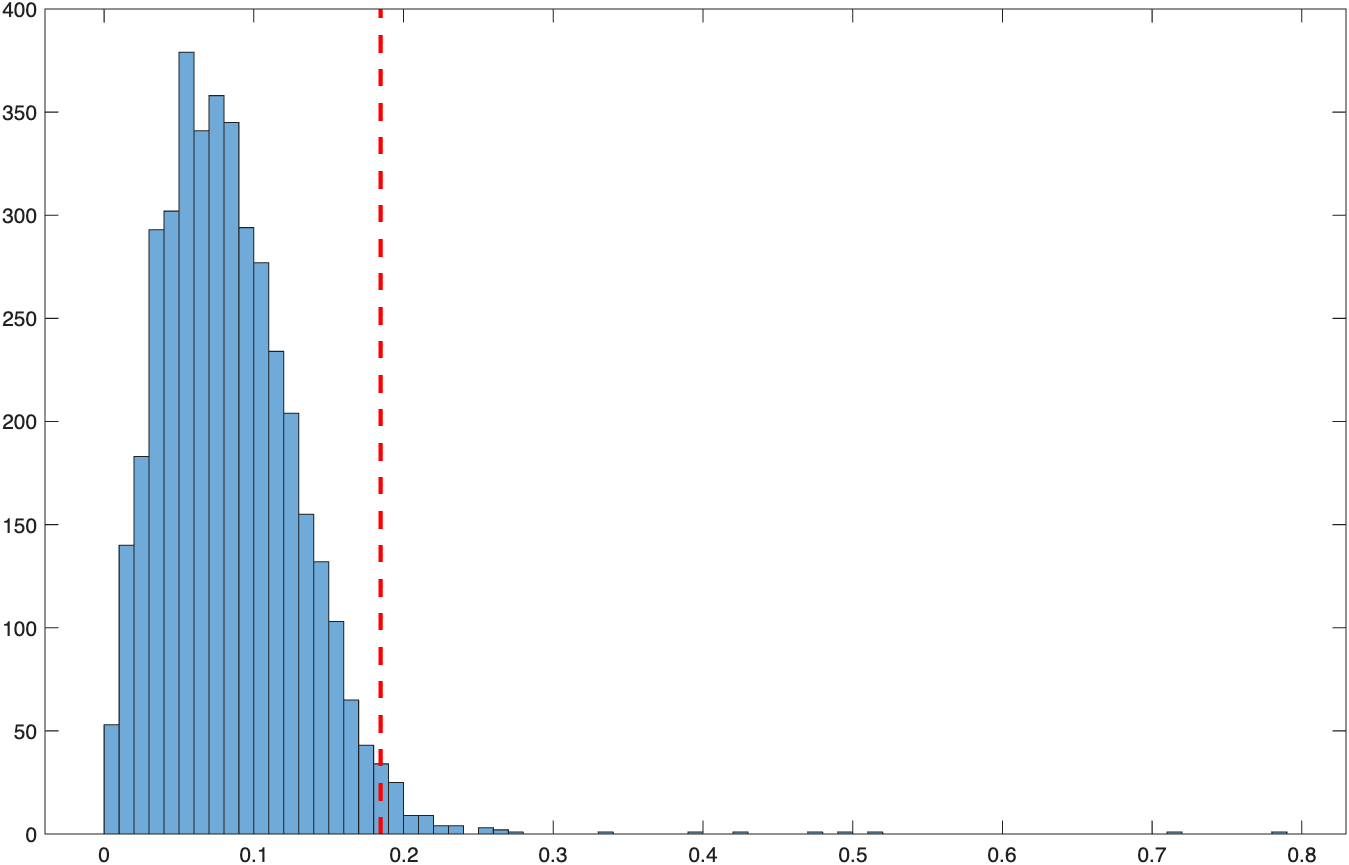}
\end{minipage}
\begin{minipage}{0.3\textwidth}
\includegraphics[width=\textwidth]{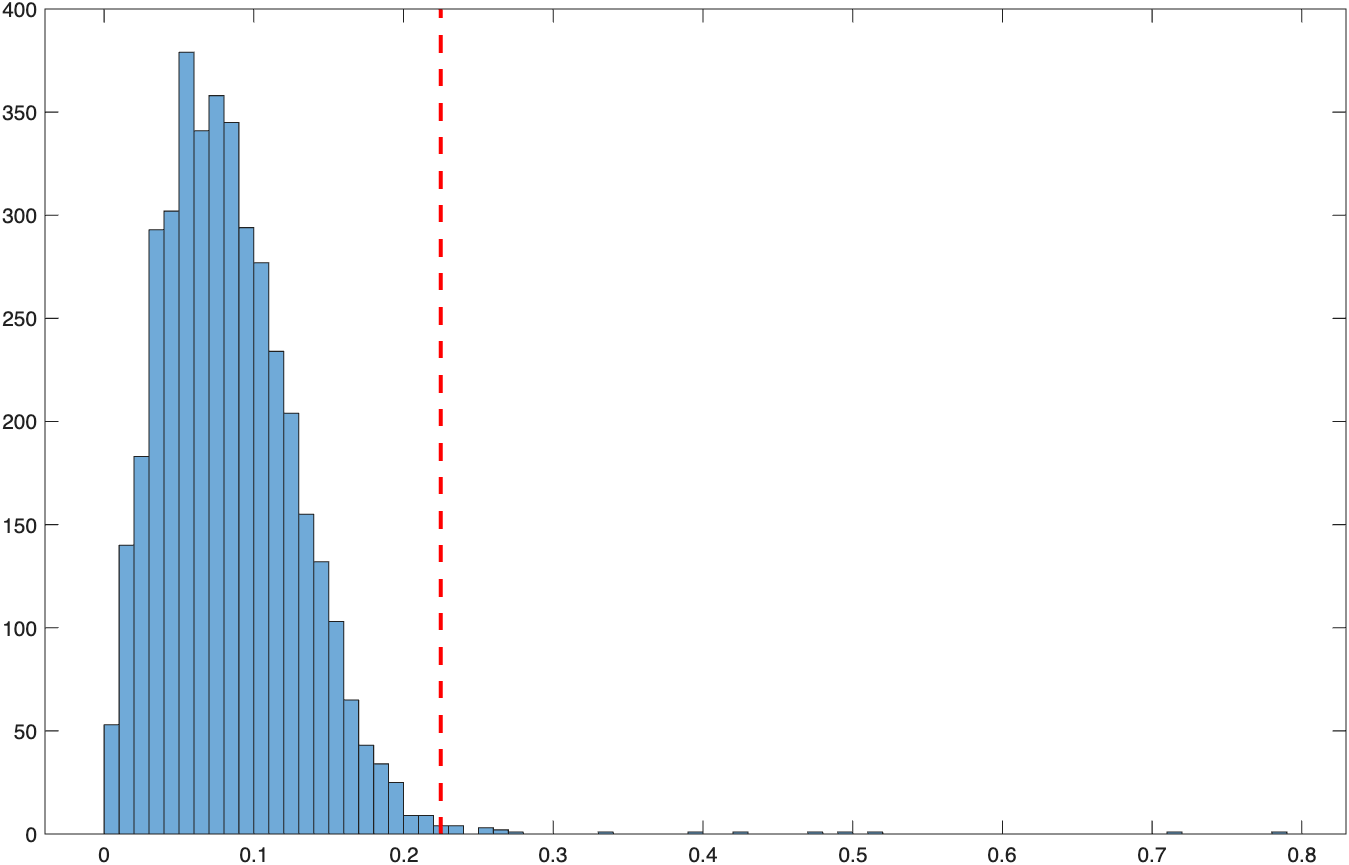}
\end{minipage}
\begin{minipage}{0.3\textwidth}
\includegraphics[width=\textwidth]{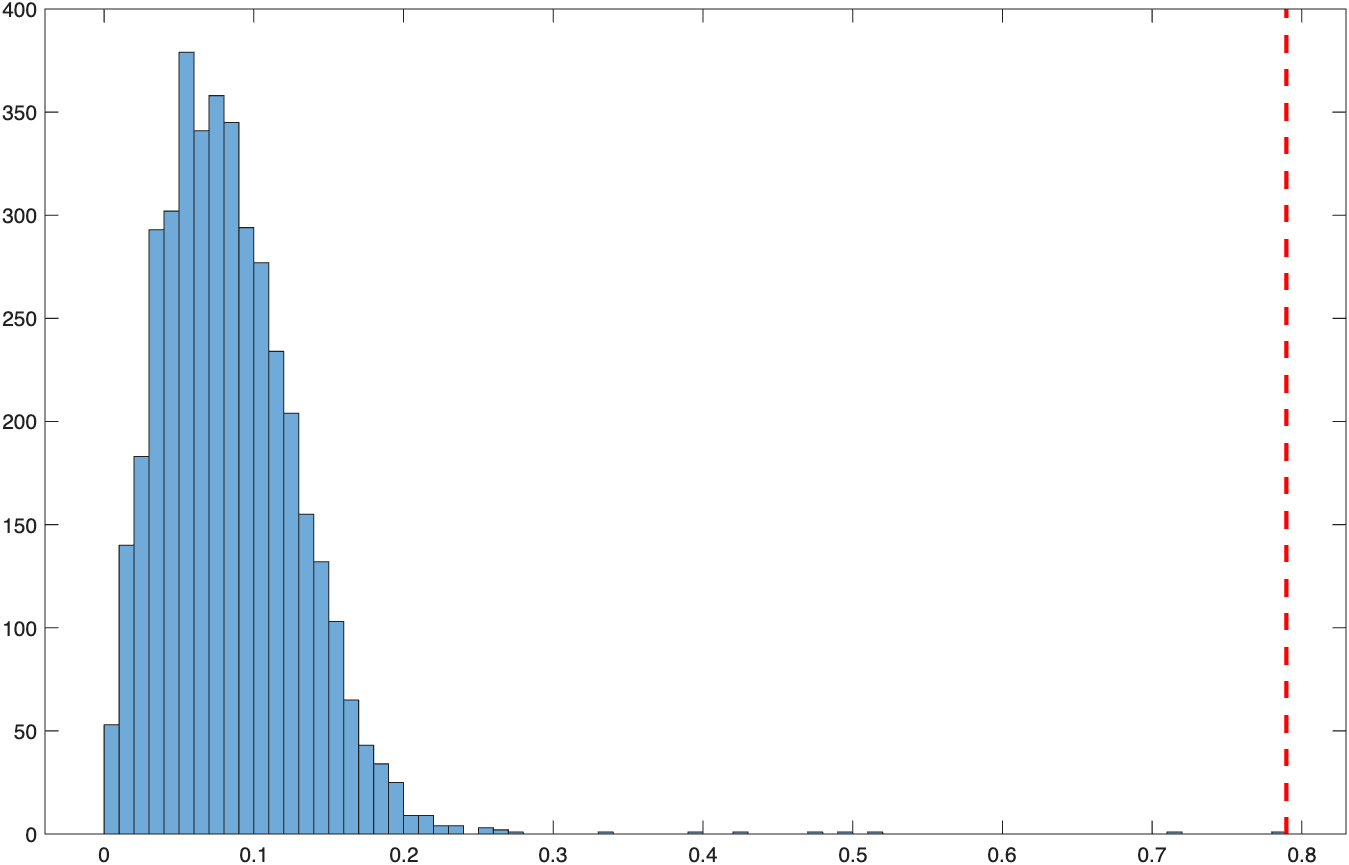}
\end{minipage}\\ \vspace*{1em}
\begin{minipage}{0.3\textwidth}
\includegraphics[width=\textwidth]{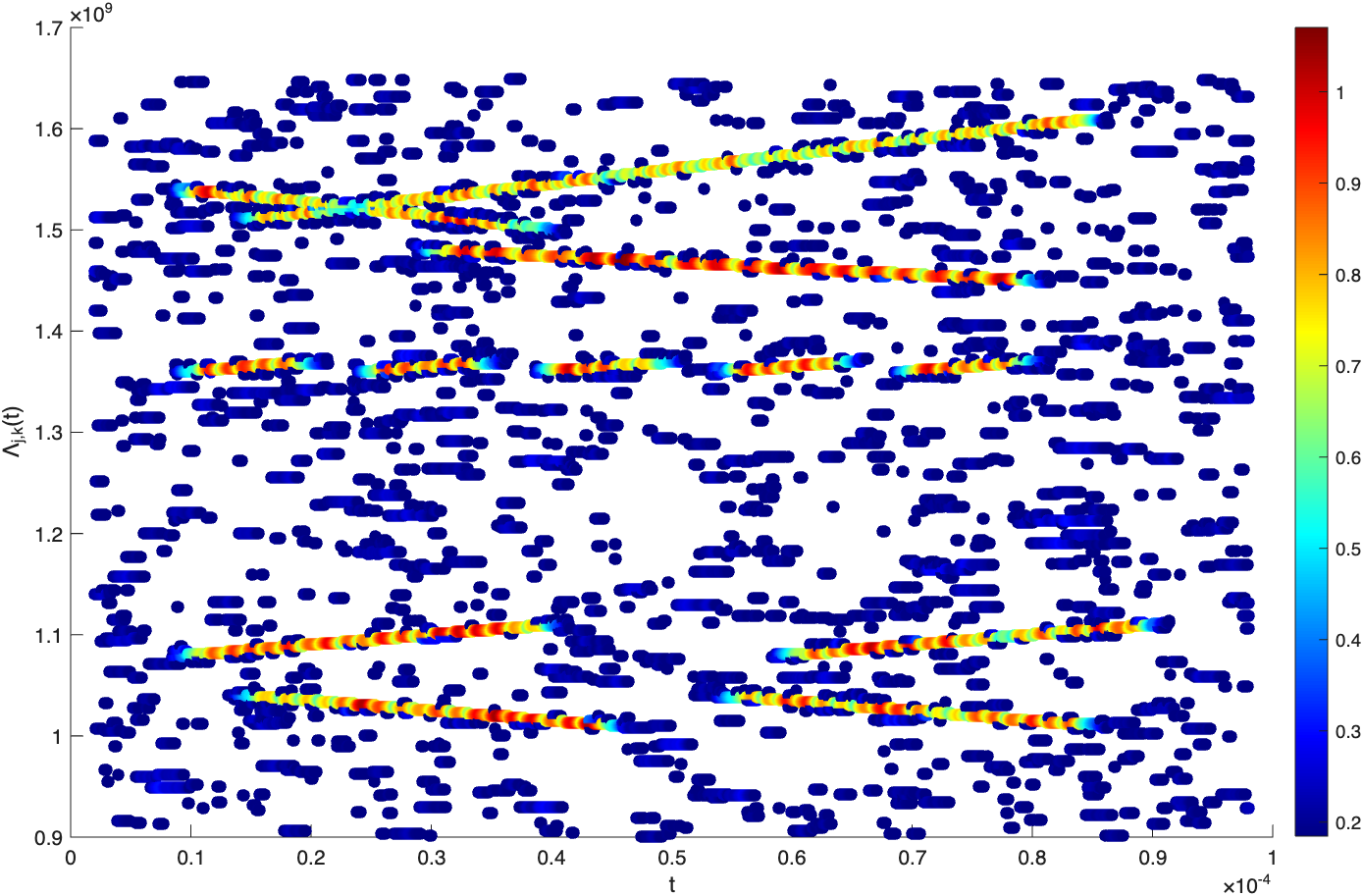}
\end{minipage}
\begin{minipage}{0.3\textwidth}
\includegraphics[width=\textwidth]{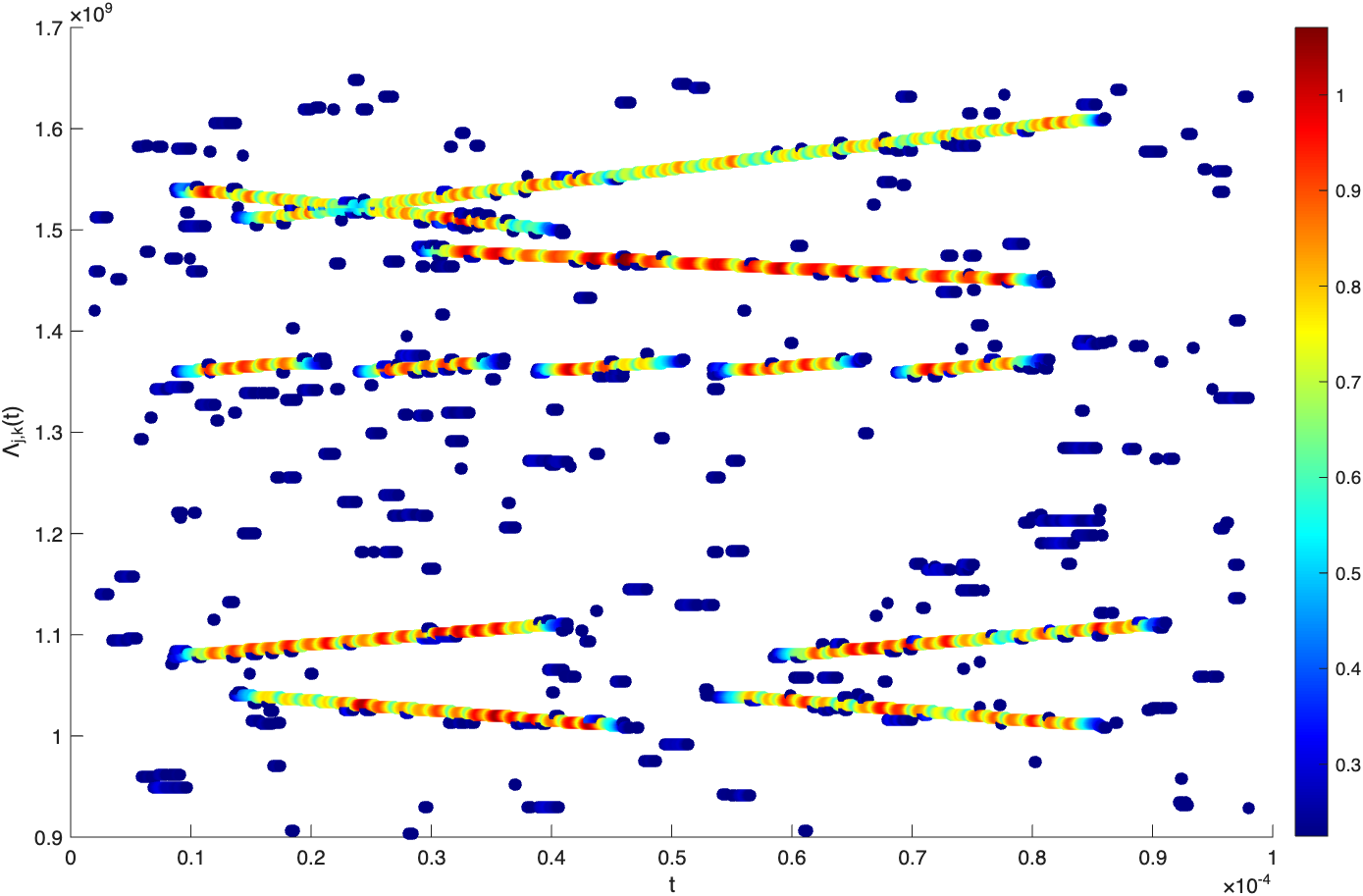}
\end{minipage}
\begin{minipage}{0.3\textwidth}
\includegraphics[width=\textwidth]{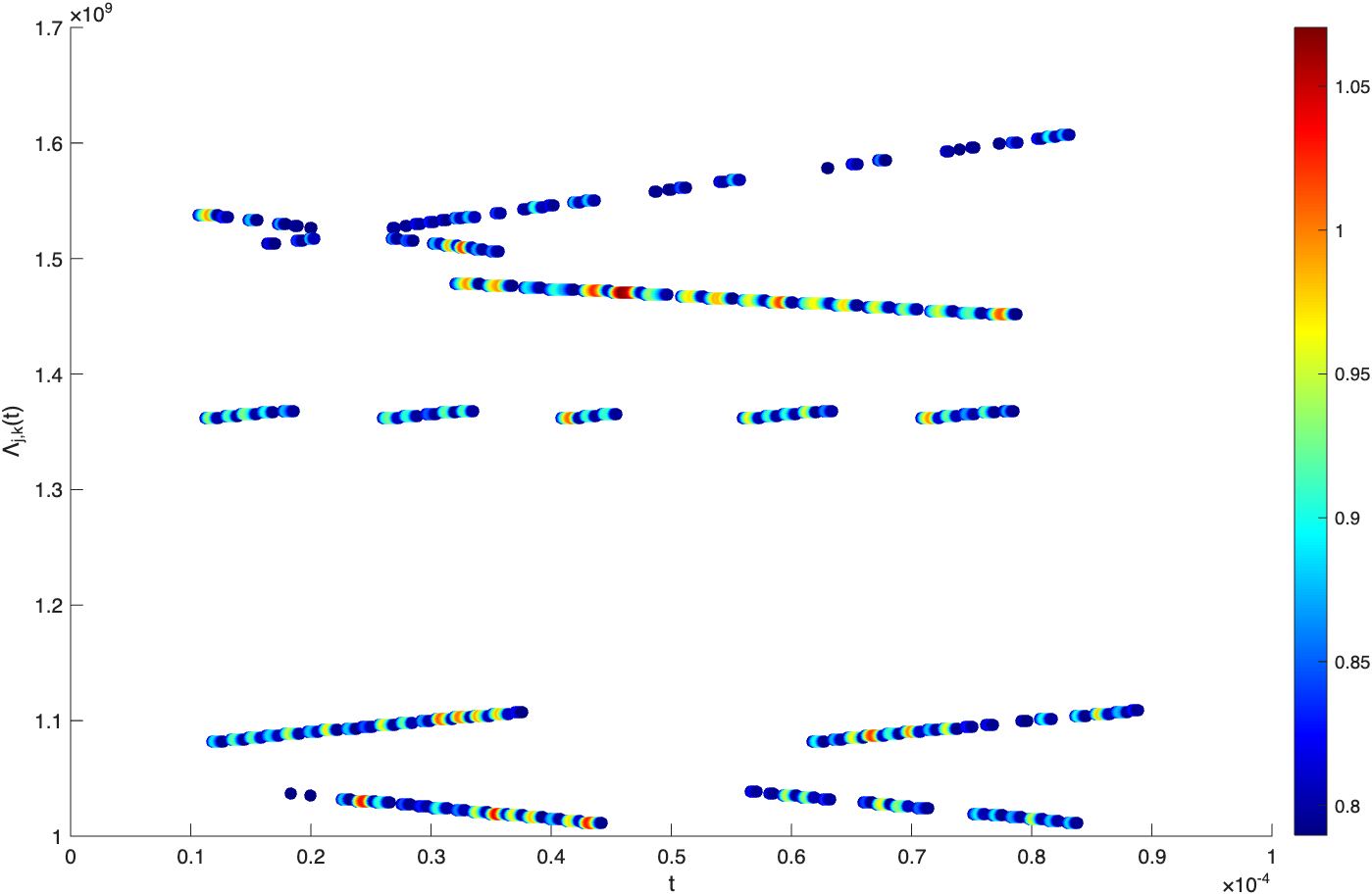}
\end{minipage}
\end{center}



\caption{The effect of selecting different thresholds on signals with 1 GHz sampling rate at -10 dB SNR. On the top, there are three threshold selections with the corresponding SSO diagrams under each selection. It is clear from the left two pair of figures that selecting too low a threshold leads to a very small noise reduction, while the rightmost pair of figures shows that selecting it too high leads to a loss of signal. In either case, the DBSCAN on Line 3 in Algorithm 2 will fail to detect the signal clusters.  }
\label{fig:thresholdselect}
\end{figure}

An alternative approach to select the threshold is to take the histogram of the maximum peaks of the SSO figure in each snippet (maximum value for each $\Delta$), then use Kernel Density Estimation (KDE) to estimate the corresponding probability density. 
If the signal dominates the noise level, the histogram will clearly show the noise part and the signal part in the density estimator appearing as a bimodal distribution.
The lowest point of the density estimator can serve as a globally determined threshold or an alternative method can be used to determine a suitable point separating the noise from the signal.
In Figure~\ref{fig:kde_vs_local}, we demonstrate that this KDE-based approach leads to the same threshold as described in Section~\ref{bhag:alg1} and hence, the same clustering as before. 
When the noise level dominates the signal, the histogram is unimodal, and this approach will not work, in which case a suitably high threshold can be chosen, and iterated on using visual inspection of the signal or some application specfic evaluation metric.

\begin{figure}
\begin{center}
\begin{minipage}{0.3\textwidth}
\includegraphics[width=\textwidth]{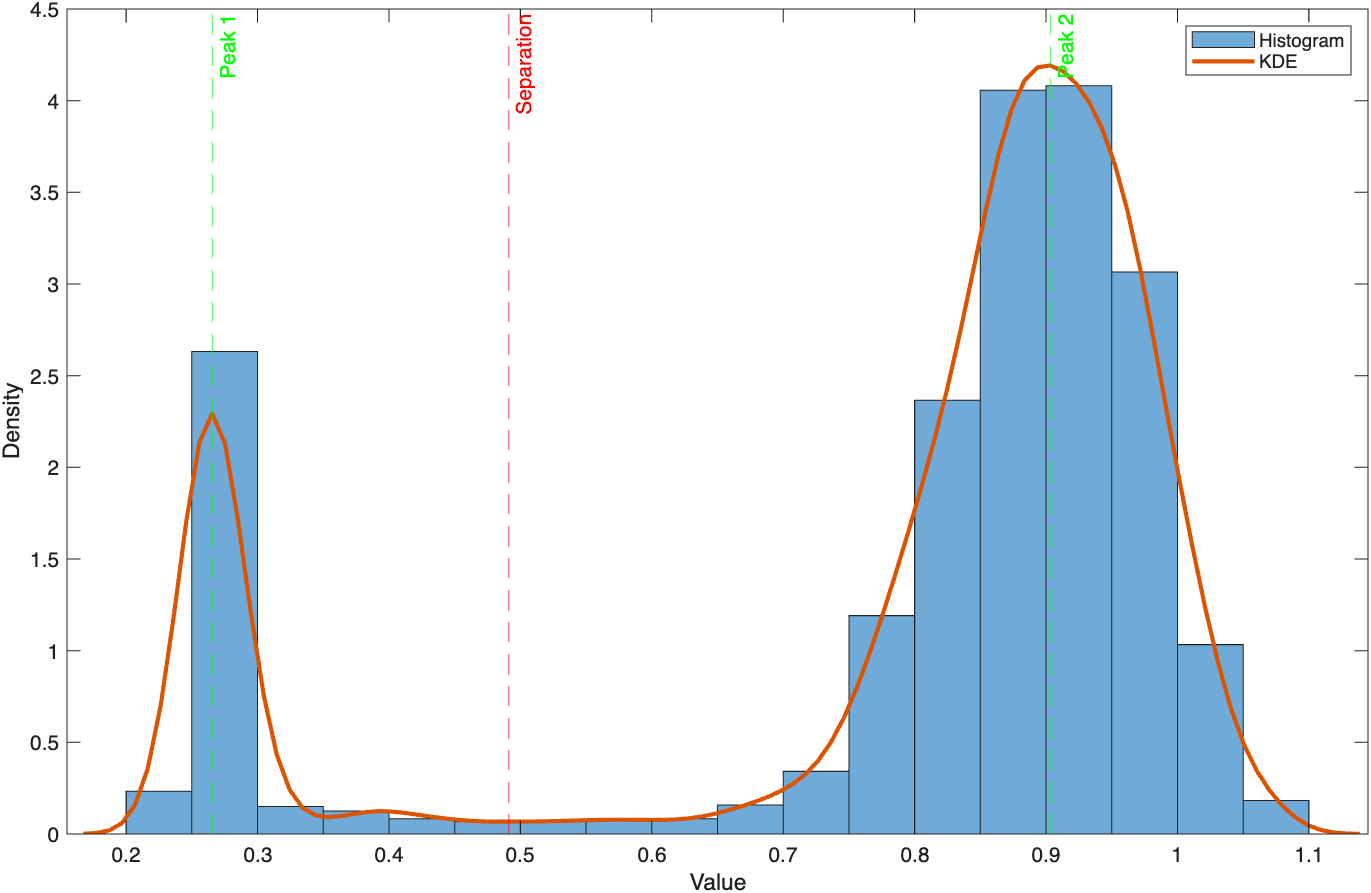}
\end{minipage}
\begin{minipage}{0.3\textwidth}
\includegraphics[width=\textwidth]{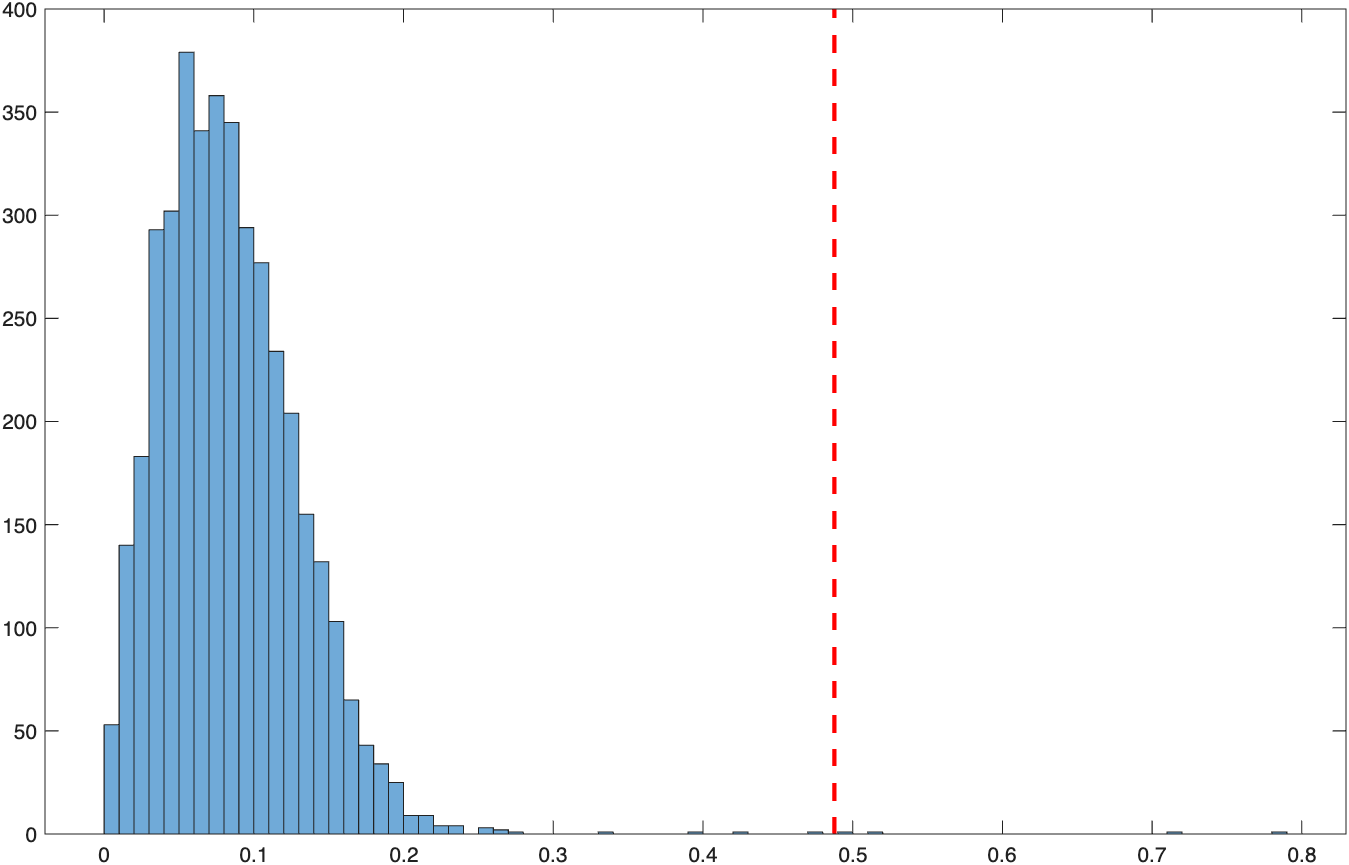}
\end{minipage}
\begin{minipage}{0.3\textwidth}
\includegraphics[width=\textwidth]{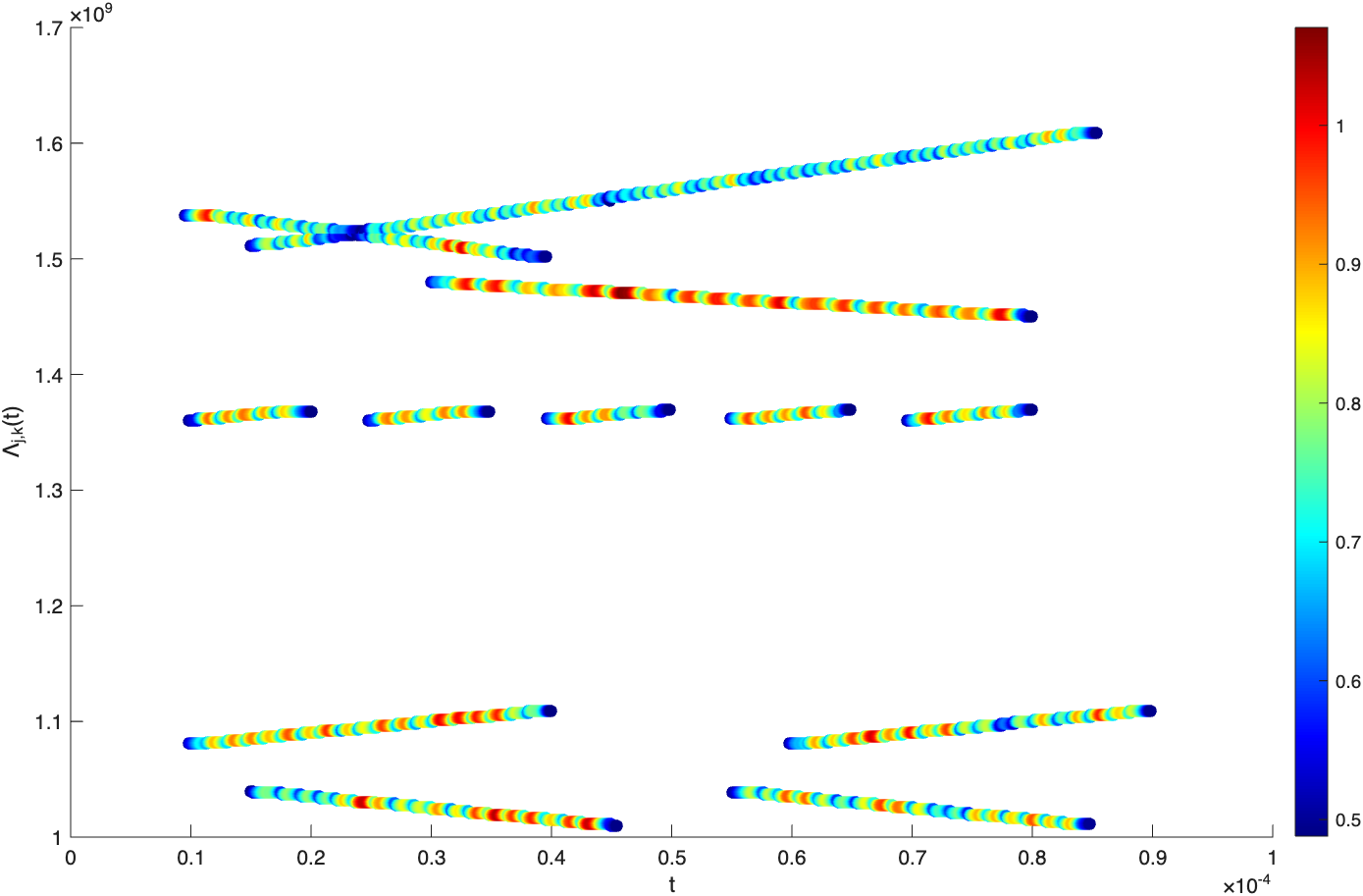}
\end{minipage}
\begin{minipage}{0.44\textwidth}

\end{minipage}
\end{center}
\begin{center}

\end{center}
\caption{Selection of the threshold globally based on a kernel density estimation. With the sampling rate of 1GHz, the threshold can be determined satisfactorily for SNR=-10dB as the figure on the  left shows, The middle figure shows that this threshold is the same as the one described in Section~\ref{bhag:alg1}, and the right figure shows the resulting clusters same as the one described in Section~\ref{bhag:alg1}. }
\label{fig:kde_vs_local}
\end{figure} 


\newpage
\color{black}
\bhag{Numerical results}\label{bhag:numerical_results}
To assess the robustness and performance of the proposed Signal Separation Operator (SSO) method, we conducted a series of controlled numerical experiments using simulated data consisting of LFM chirp signals described in Section~\ref{bhag:datagen}, the most commonly used type of radar waveform.  These experiments were designed to evaluate the impact of key signal characteristics, including minimal frequency separation, SNR, sampling rate, and the presence of frequency crossovers.  The details for all the experiments are listed in Tables~\ref{tab:result_table_1} and \ref{tab:result_table_2} in the Appendix.

While the focus of our experiments is on LFM chirp waveforms, we do remark that there are other types of modulated waveforms that are sometimes used, such as non-linear frequency modulation (NLFM), phase-shift keying (PSK), and noise waveforms.  These are beyond the scope of this paper and require a different theoretical treatment; however, the SSO method will handle these cases to different degrees of success.  For example, in the case of NLFM waveforms, the method will accurately recover the instantaneous frequency, and the clustering steps will perform the same as they do in this case.  The area where our algorithm will require modification is the parameterization of the pulses since a non-linear function will be required to fit the pulses. The detection of signal crossings will become more sensitive to the minimal separation and snippet size studied in Section~\ref{bhag:tunable}.


\begin{figure}[H]
\begin{center}
\includegraphics[scale=.145]{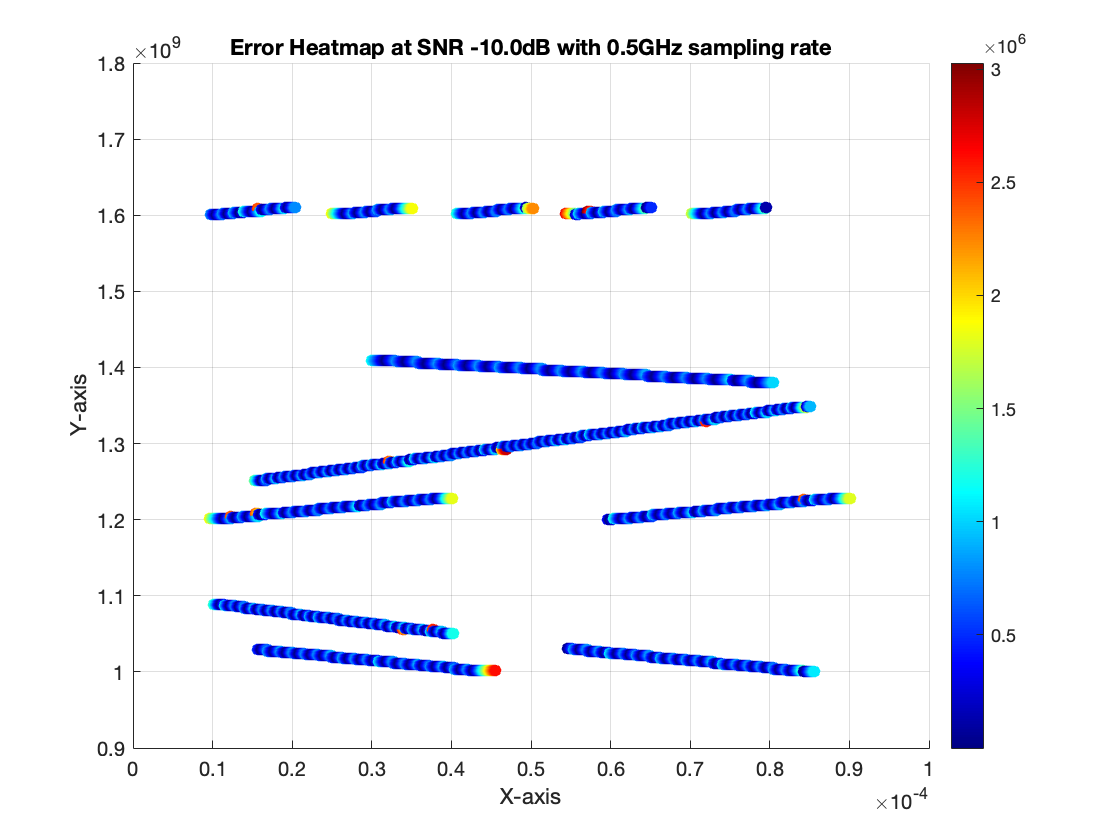}
\includegraphics[scale=.145]{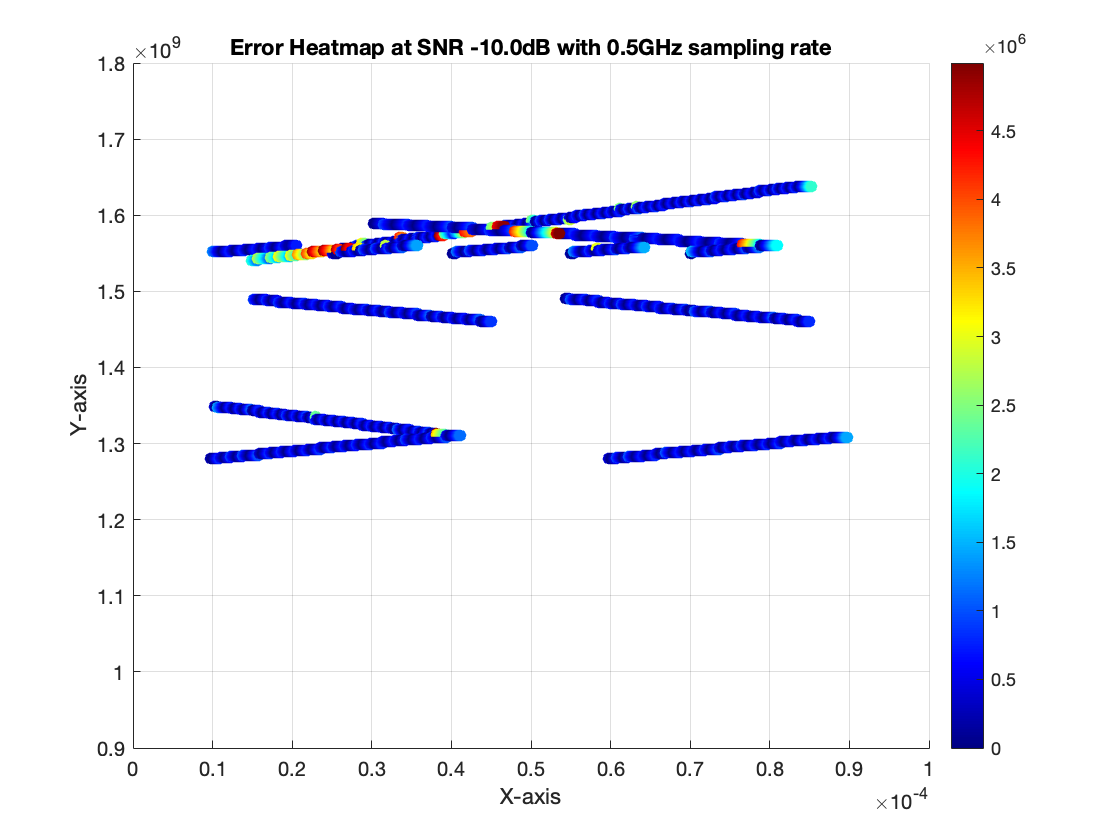}
\includegraphics[scale=.145]{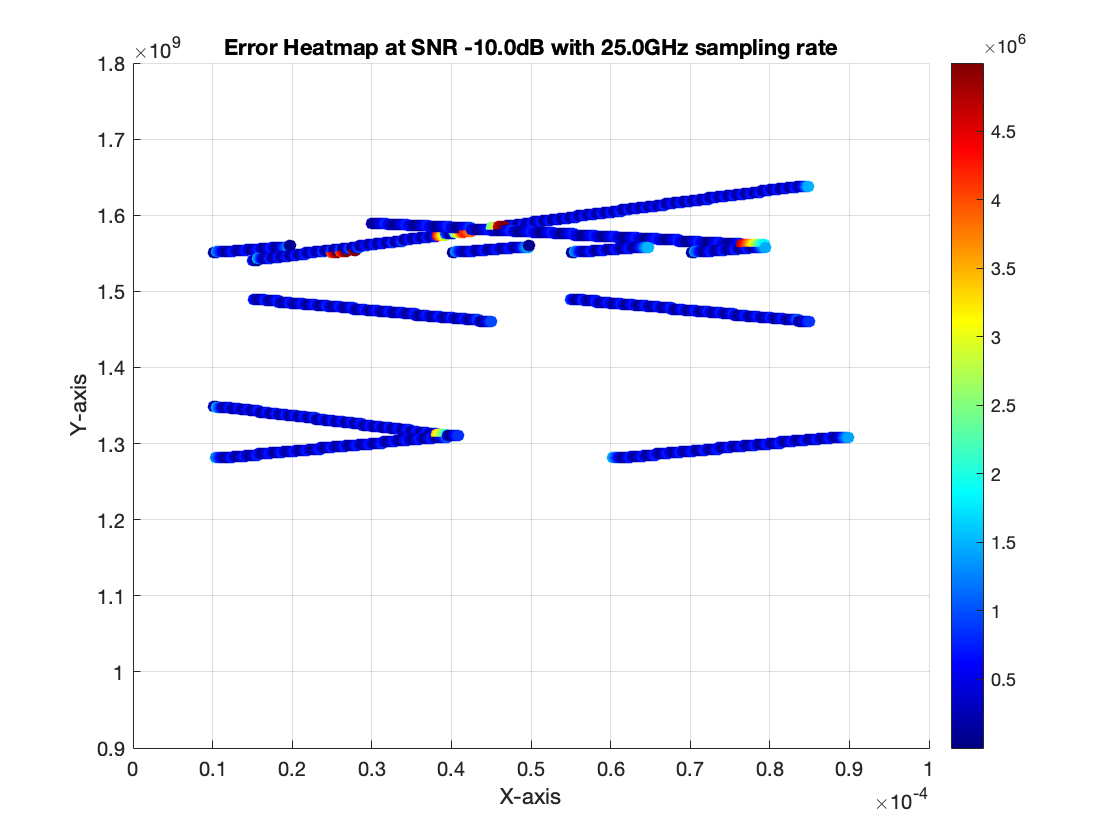}
\end{center}
\caption{Left and middle: The error heatmap plots between SSO results and the corresponding regression line for example  2 and 4 with SNR -10 dB and sampling rate 0.5 GHz respectively. Right: Example 4 with the sampling rate increased to 25GHz.
 Our algorithm works well in case of no crossover and clear crossover signals, but does not work so well when the minimum separation among signals is low. 
 The accuracy improves with a higher sampling rate.}
\label{fig:4results}
\end{figure}

\subsection{Minimal separation}\label{bhag:minsep}
The algorithm demonstrates strong performance when signal components are sufficiently separated in frequency. As illustrated in Figure \ref{fig:4results}, the method accurately recovers the instantaneous frequencies and associated parameters in both non-intersecting and cleanly intersecting cases. However, performance degrades in scenarios where the minimal separation between components becomes too small (i.e. for 0.5 GHz sampling rate, we use $\Delta = 10^{-6}$ as suggested in Table \ref{tab:Delta_vs_slope}, $n=1000,$ and $\eta = 0.01$ as suggested in Table \ref{tab:n_vs_eta}). In such cases, the spectral peaks associated with different components may become indistinct, leading to inaccurate clustering and parameter estimation. This limitation is primarily due to the inherent resolution bounds imposed by the kernel size and the sampling interval. To address this issue, a higher sampling rate is required in order to increase $n$ and decrease $\eta$ so that different components become distinct. In Figure \ref{fig:4results}, we increase the sampling rate to 25 GHz. As a result, $n$ increases to 50000 and $\eta$ decreases to 0.0002.

\subsection{Robustness to noise}\label{bhag:robust}

The method remains effective even under high noise conditions. 
To illustrate, we compute the the root mean square error (RMSE) in each experiment by
\be\label{eq:exptrmse}
\mbox{RMSE}=\mbox{mean}\left(\sqrt{\frac{1}{D}\sum_{k=1}^D\sum_{j=1}^{J_k}\left(\frac{\phi_{j,k}'(t_k)-\widehat{\phi_{j,k}'(t_k)}}{\phi_{j,k}'(t_k)}\right)^2}\right).
\ee
In our experiments, we took the mean over 16 trials for each choice of the signal, the sampling rate, and SNR.
As shown in Figure \ref{fig:4rmse}, the root mean square error (RMSE) remains low for SNR levels ranging from $10$ dB to $-30$ dB. 
The algorithm maintains RMSE values within acceptable bounds, provided that the sampling rate is sufficiently large. 
The combination of localized kernel averaging and peak detection in the frequency domain allows the method to suppress noise effectively.

\begin{figure}[H]
\includegraphics[scale=.109]{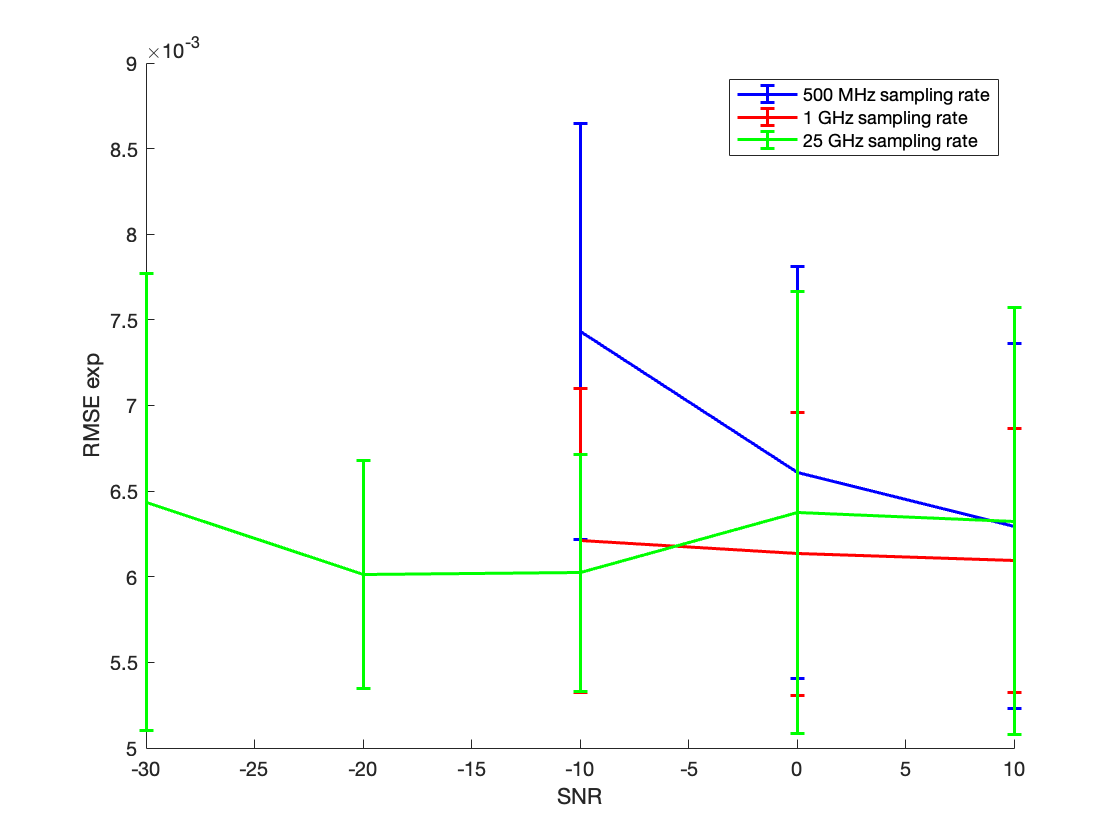}
\includegraphics[scale=.109]{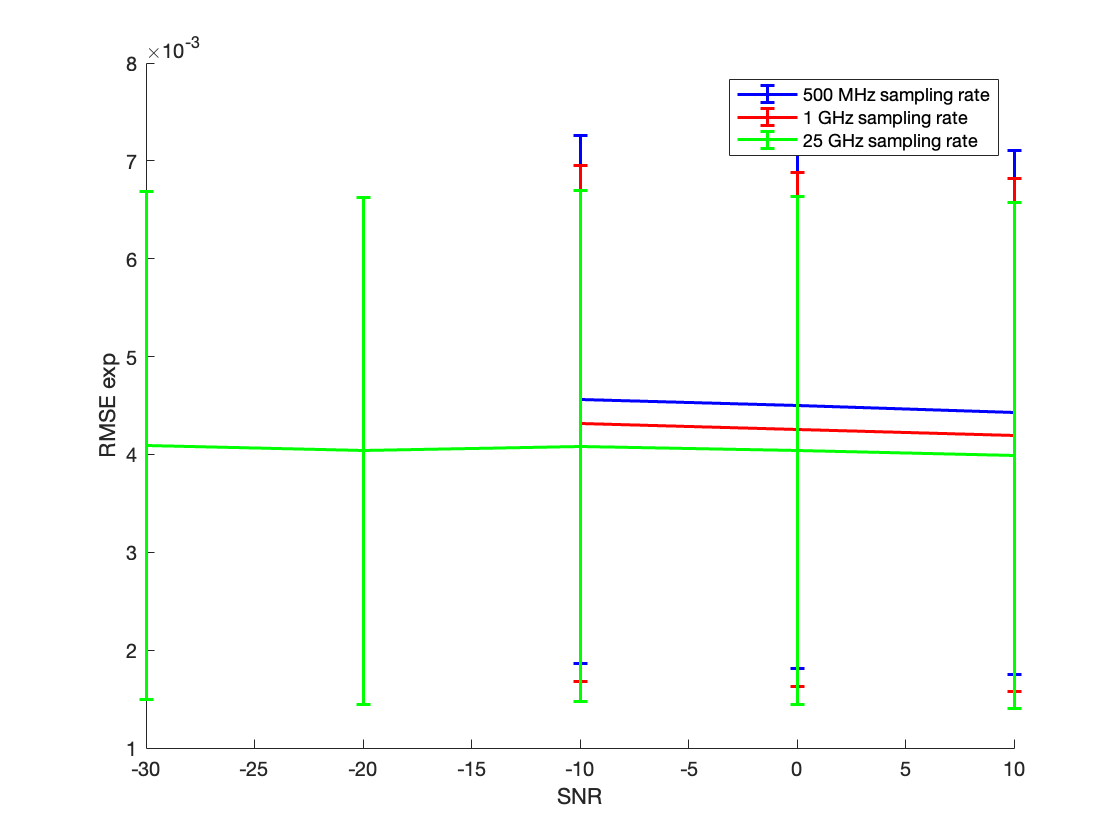}
\includegraphics[scale=.109]{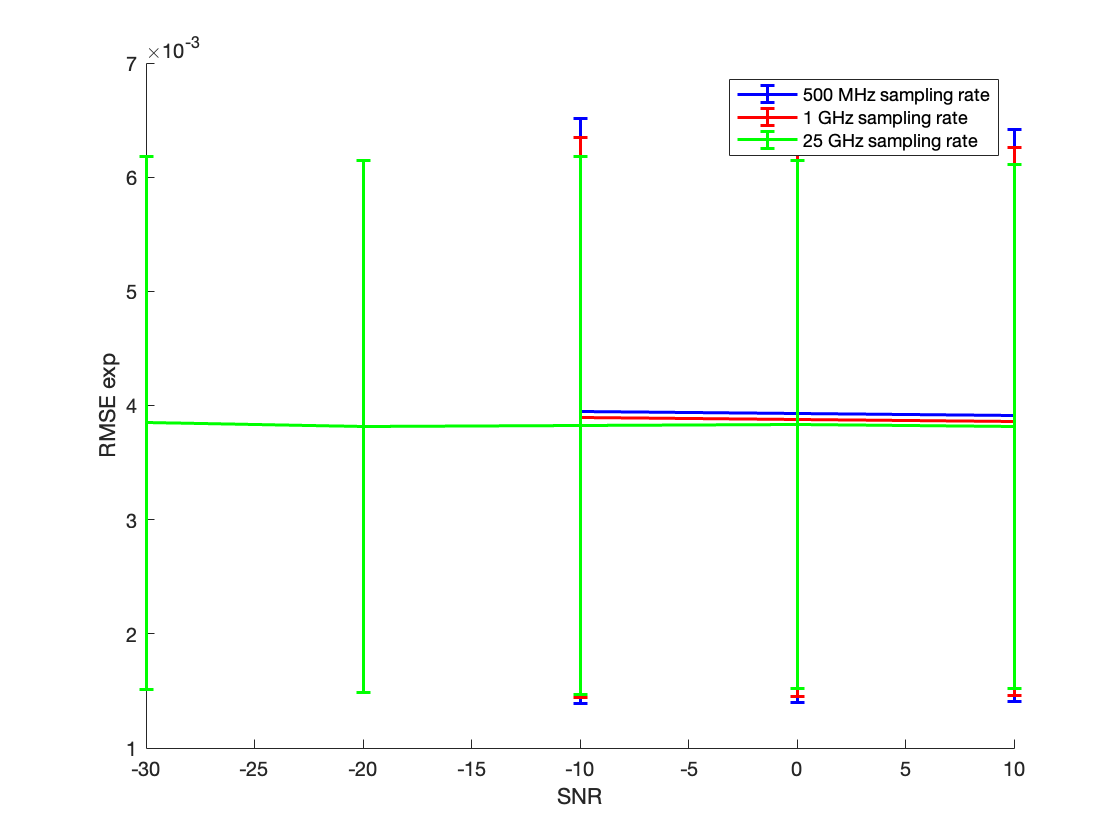}
\includegraphics[scale=.109]{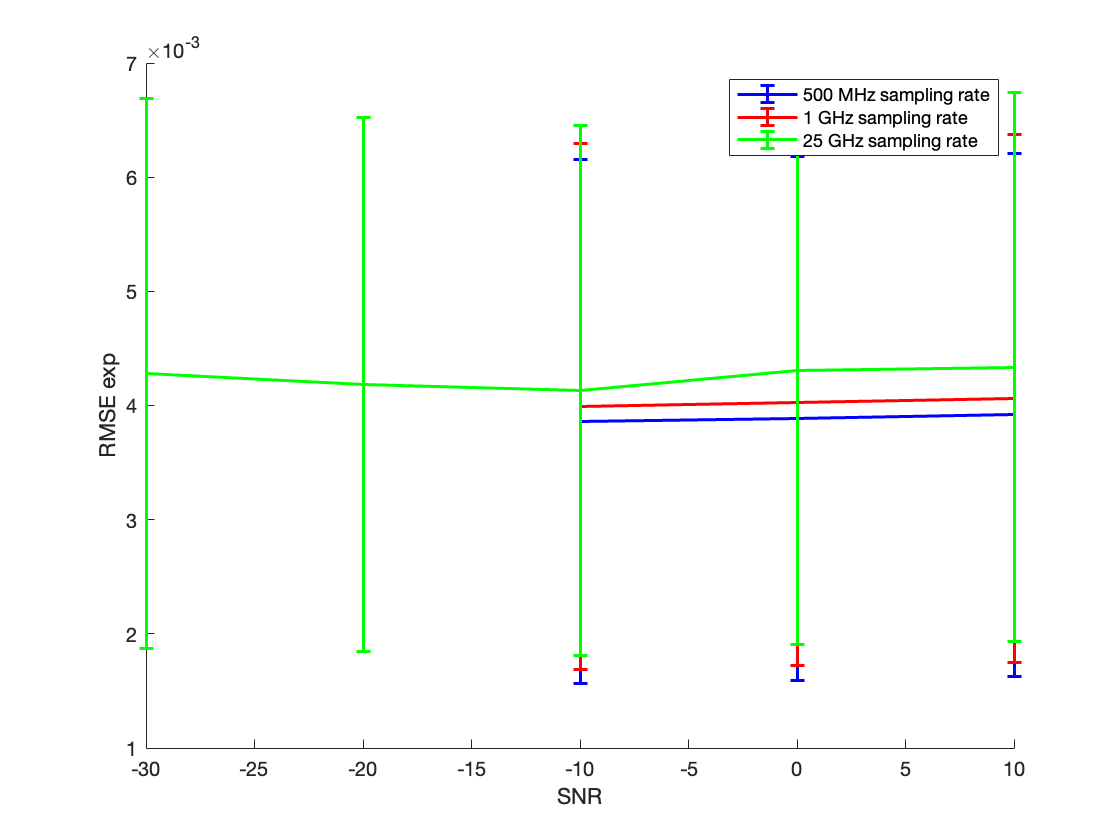}
\caption{The sampling rate performance plots of vary SNR vs RMSE for example 1, 2, 3, and 4. The RMSE was comparing between the ground truth and the estimation parameters of $\phi(t)$ in \eqref{eq:pulsephasedef}. }
\label{fig:4rmse}
\end{figure}

\subsection{Comparison with SST}\label{bhag:sst}
We recall that the most often used method for finding IF's is EMD, which is purely heuristic.
The method known as the synchrosqueezing transform method (SST) is based on a solid mathematical foundation and works better than EMD.
In this section, we compare our results with those obtained by using SST.
As mentioned earlier, there are many versions of SST. 
We use the implementation of the SST algorithm as described in Section III of \cite{thakur2013synchrosqueezing}.

In the experiment, we use the code from GitHub repository (\texttt{https://github.com/ebrevdo/synchrosqueezing}) from \cite{thakur2013synchrosqueezing}.
We use the same parameters used by the authors in their examples, provided in the code, which use the shifted Gaussian (or Morlet) wavelet with zero mean.
We refer the reader to \cite{thakur2013synchrosqueezing} for a more detailed discussion of the approach and parameters.

Figure \ref{fig:base_sst} shows that our implementation of this algorithm works for certain signals at high SNR levels and deteriorates at lower levels.
We observed that the maximum frequency that the SST method can take is $0.2$ GHz at sampling rate $0.5$ GHz while our dataset has maximum frequency at $1.6$ GHz.
Figure \ref{fig:sst_0.5ghz} illustrates that the SST fails to work in the regime in which we are interested in this paper.
In Figure \ref{fig:sst_1ghz}, we extend the experiment sampling rate from 0.5 GHz to 1 GHz. We observe that the SST can detect some hints of the signal, but not as satisfactory level as our SSO method. We also have to note that the computational time for SST is very high that we are not able to run SST on the input signal more than an array of length greater than 10000 (e.g. $1\times 10^{-5}$ seconds for 1 GHz sampling rate).

\begin{figure}[H]

\begin{minipage}{0.3\textwidth}
\includegraphics[width=\textwidth]{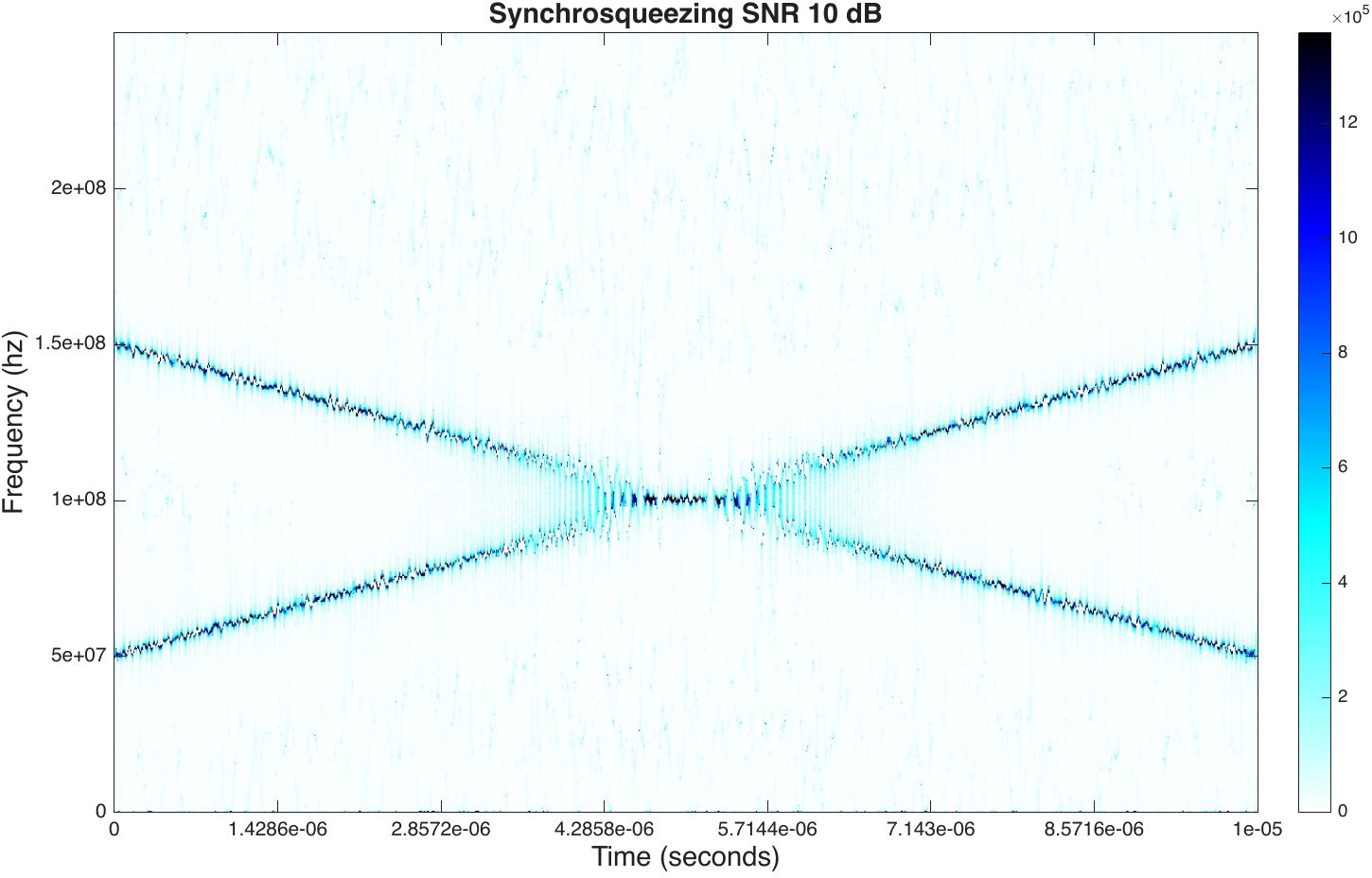}
\end{minipage}
\begin{minipage}{0.3\textwidth}
\includegraphics[width=\textwidth]{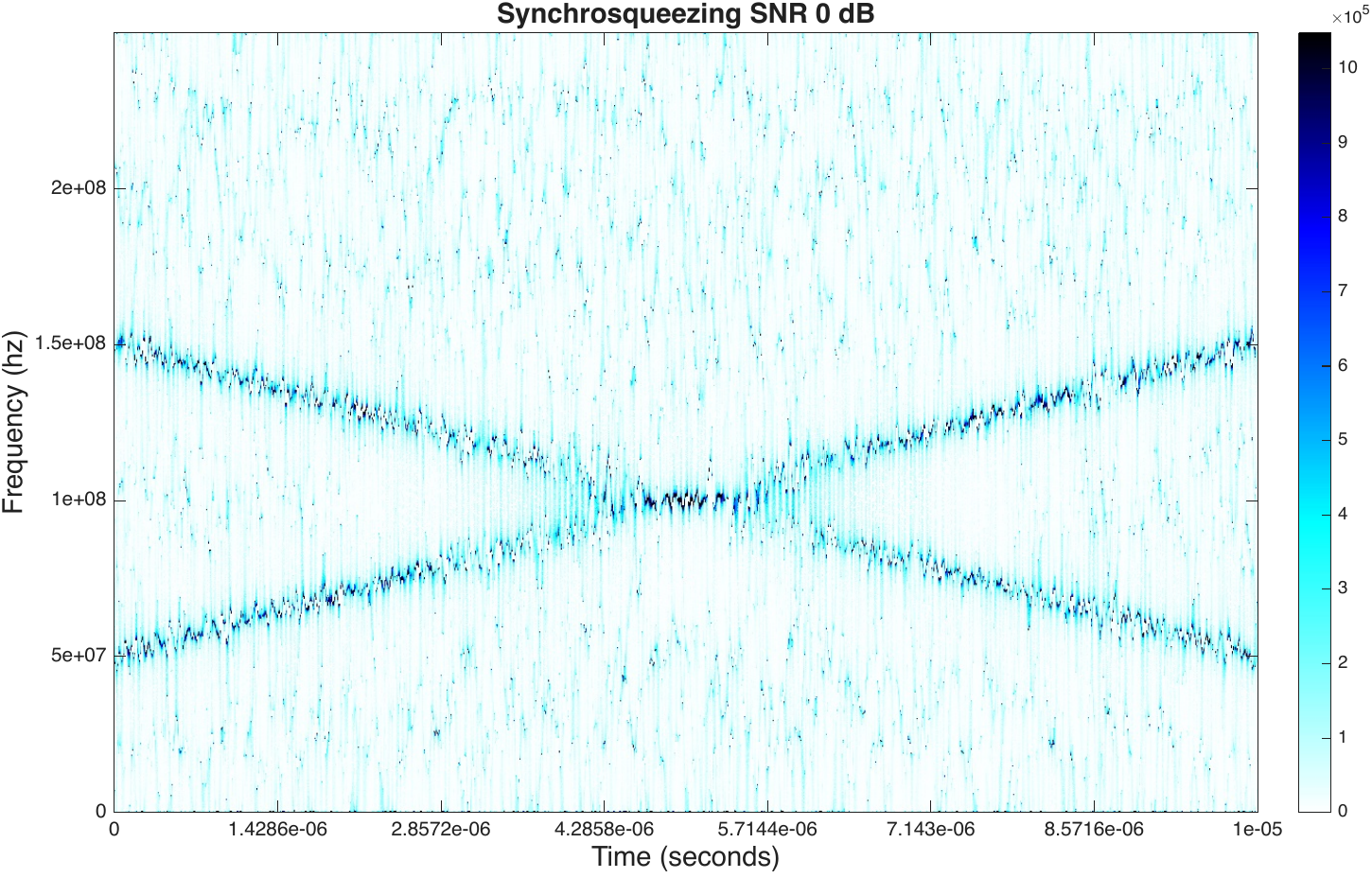}
\end{minipage}
\begin{minipage}{0.3\textwidth}
\includegraphics[width=\textwidth]{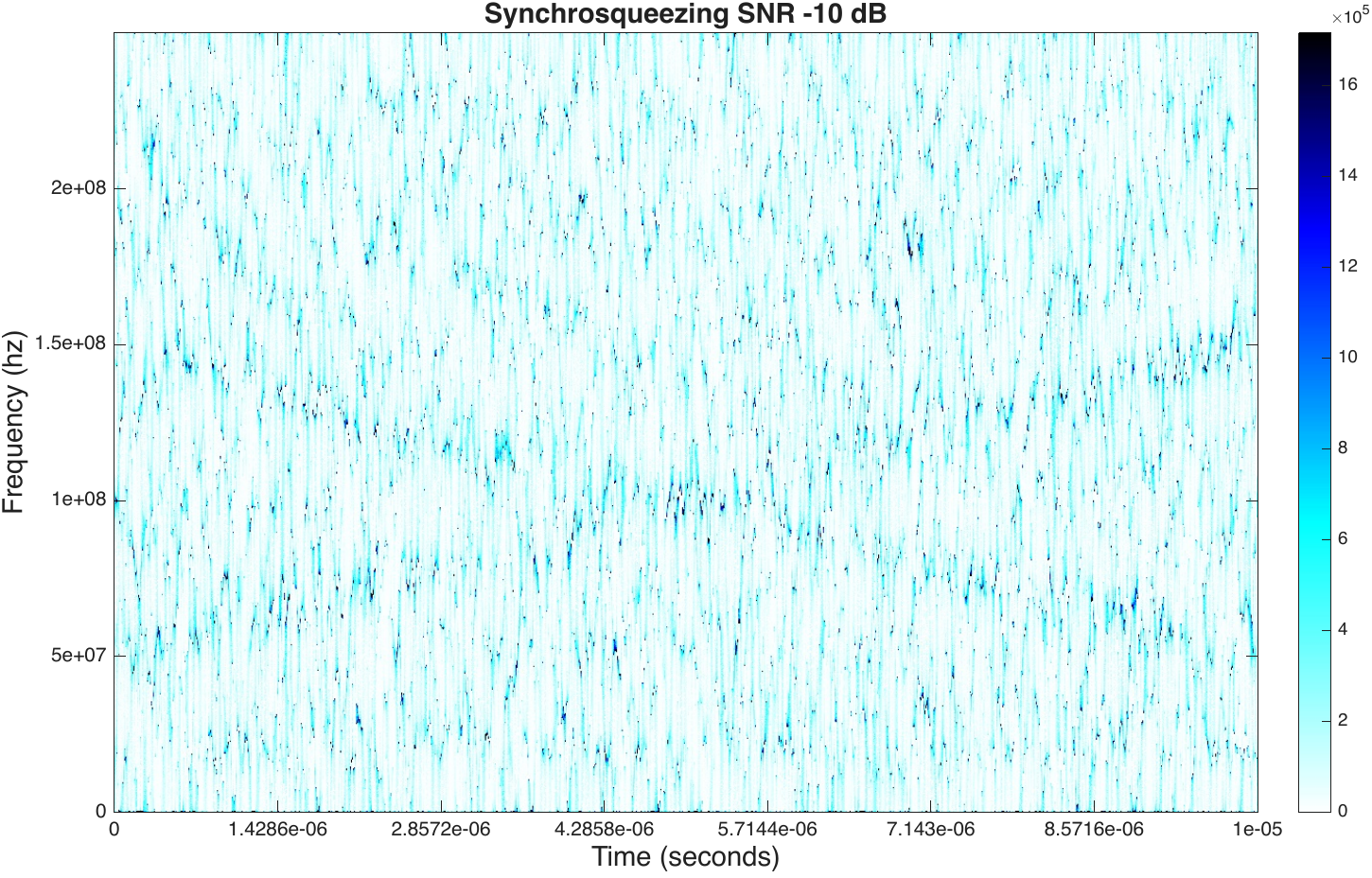}
\end{minipage}

\caption{The examples of SST plot with sampling rate 0.5 GHz at SNR 10 dB (left), SNR 0 dB (middle), and SNR -10 dB (right) for signals $f(t) = e^{2\pi i(15\times 10^7 t-5\times10^{12}t^2)} + e^{2\pi i(5\times 10^7t+5\times10^{12}t^2)}$.}
\label{fig:base_sst}
\end{figure}


\begin{figure}[H]
\begin{minipage}{0.3\textwidth}
\includegraphics[width=\textwidth]{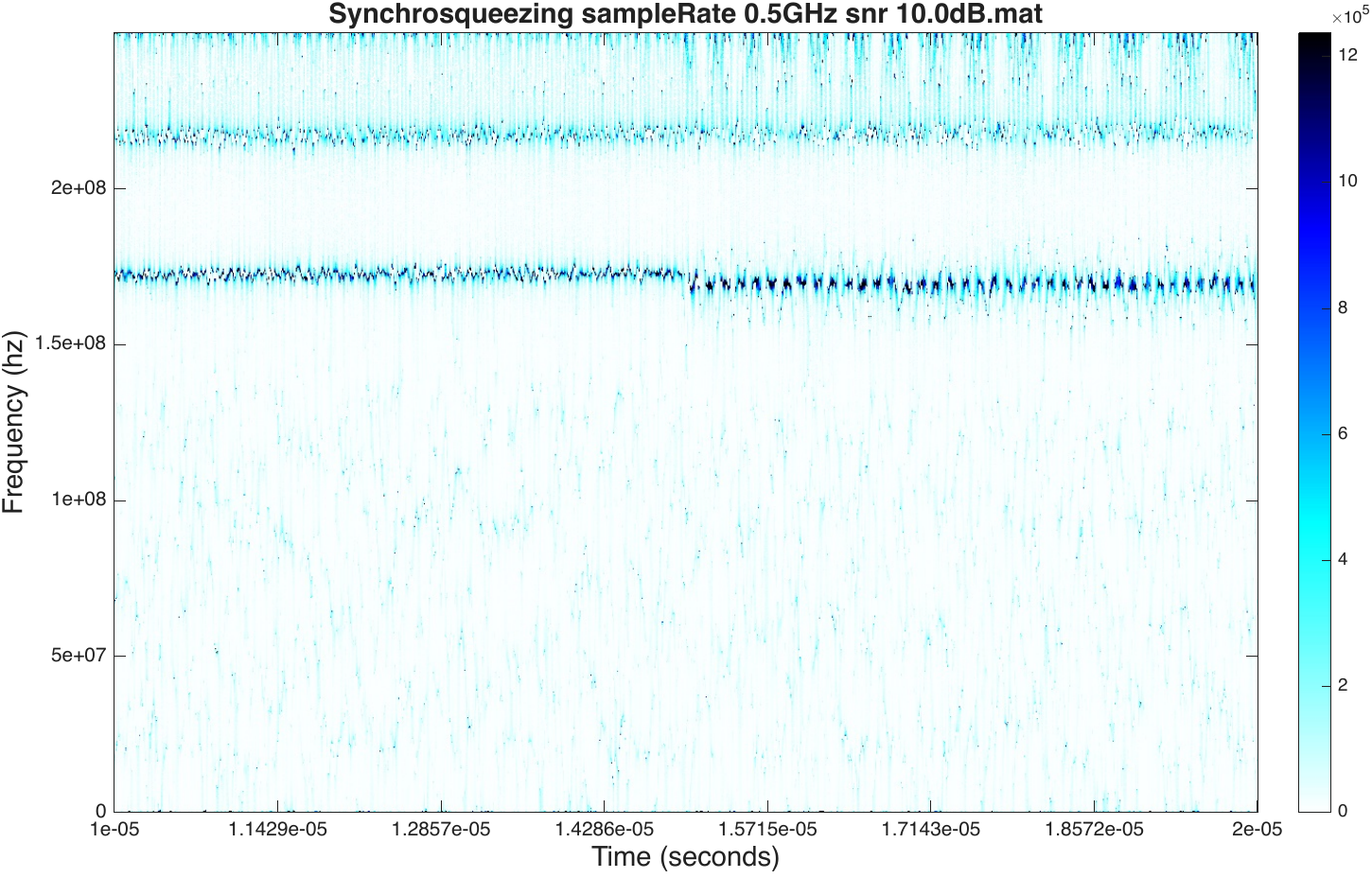}
\end{minipage}
\begin{minipage}{0.3\textwidth}
\includegraphics[width=\textwidth]{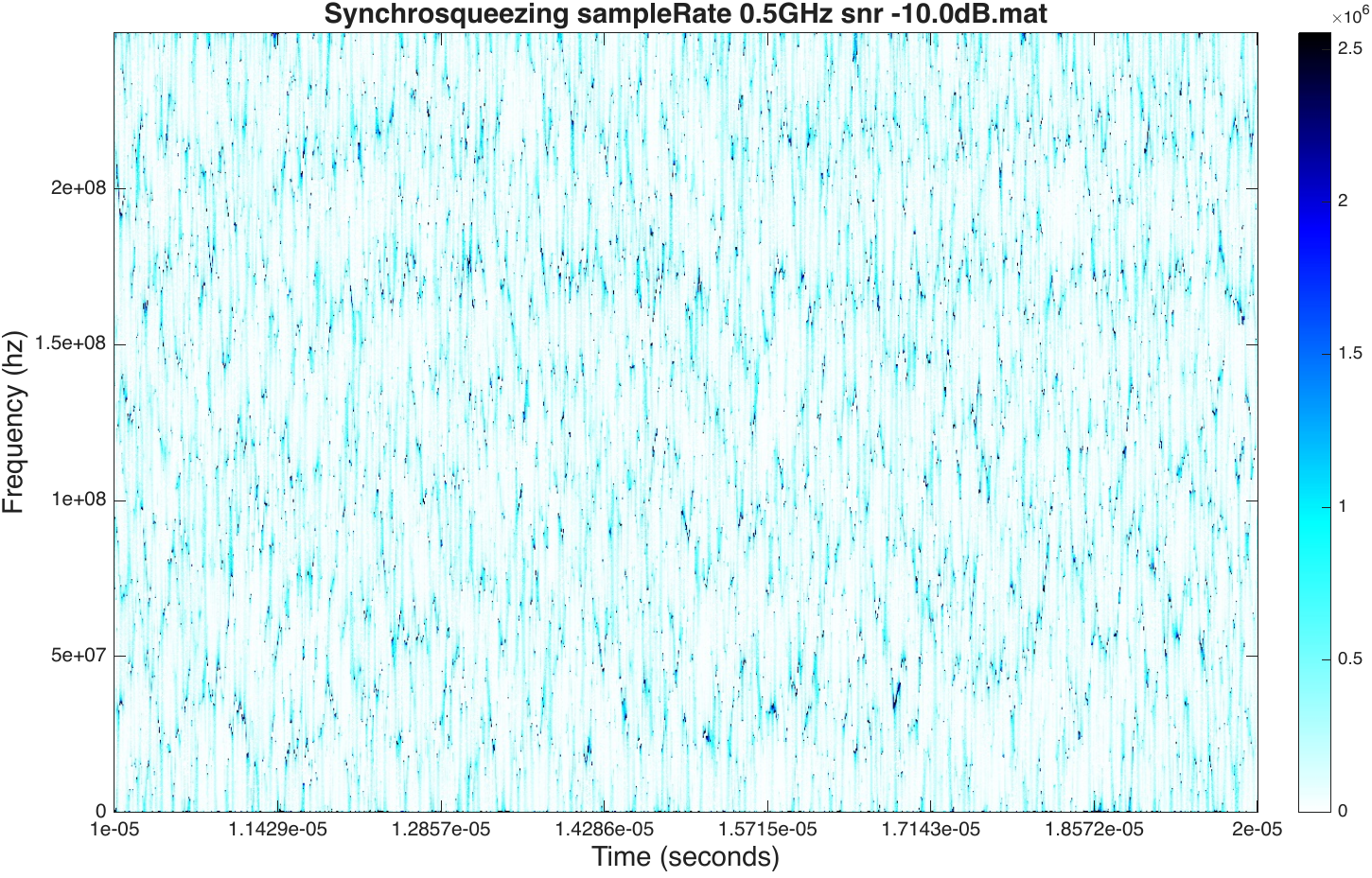}
\end{minipage}
\begin{minipage}{0.3\textwidth}
\includegraphics[width=\textwidth]{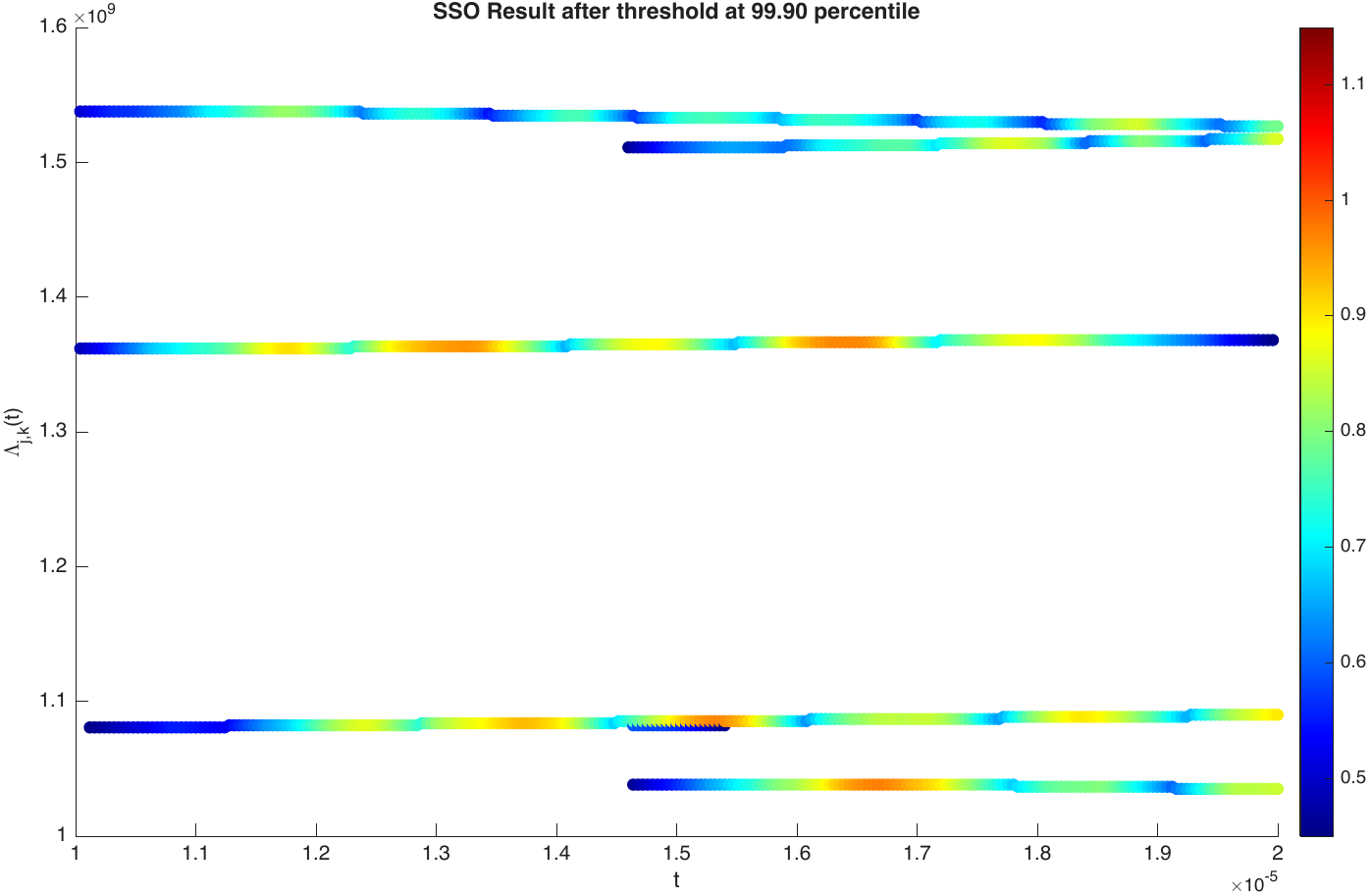}
\end{minipage}

\caption{This figure shows the comparison between SST and SSO method for $t$ between $1 \times 10^{-5}$ and $2 \times 10^{-5}$. The SST plots for example 1 with sampling rate 0.5 GHz at SNR 10,  -10 dB and the SSO plot for example 1 with sampling rate 0.5 GHz at SNR -10 dB. }
\label{fig:sst_0.5ghz}
\end{figure}

\begin{figure}[H]
\begin{minipage}{0.3\textwidth}
\includegraphics[width=\textwidth]{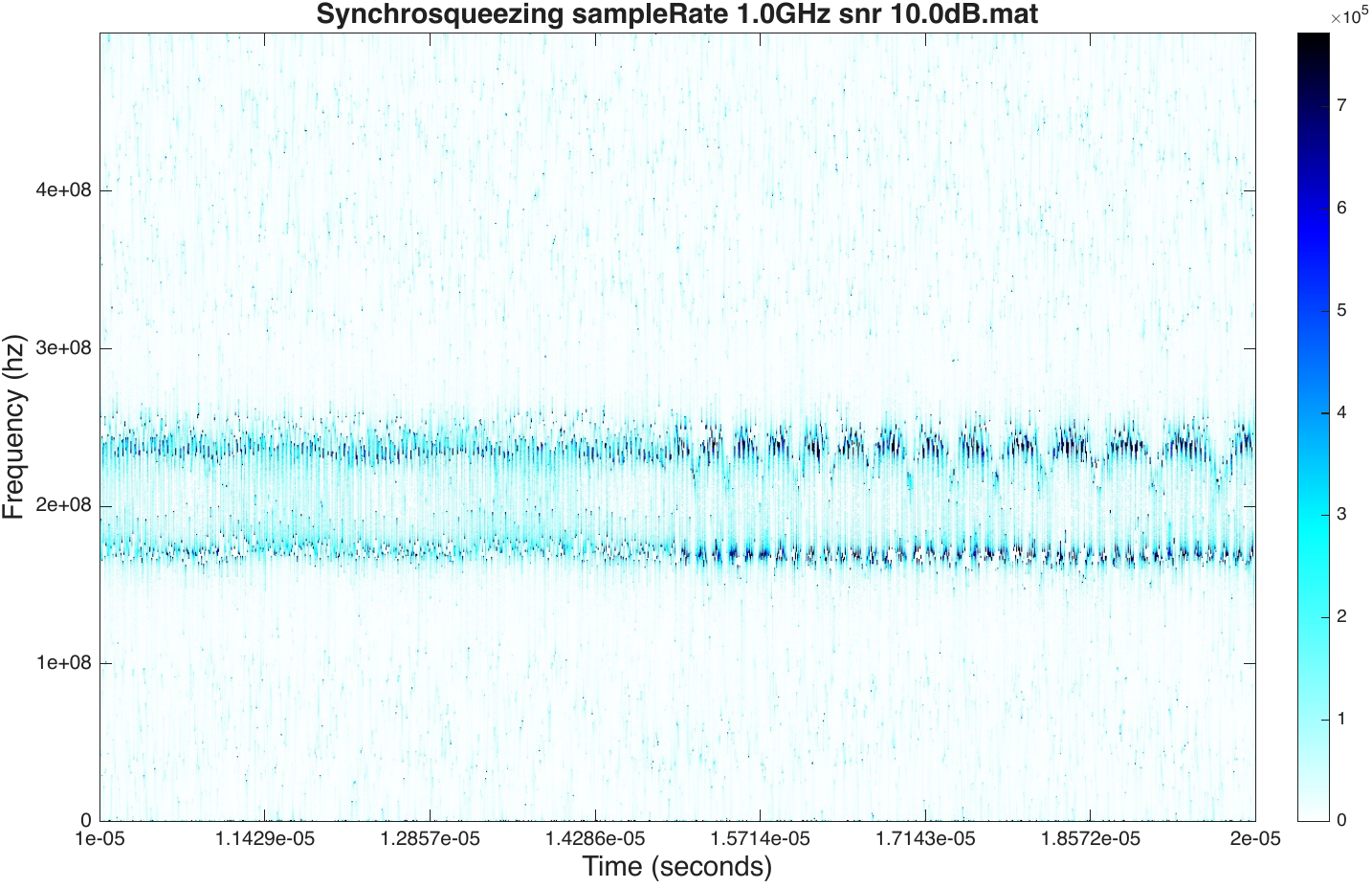}
\end{minipage}
\begin{minipage}{0.3\textwidth}
\includegraphics[width=\textwidth]{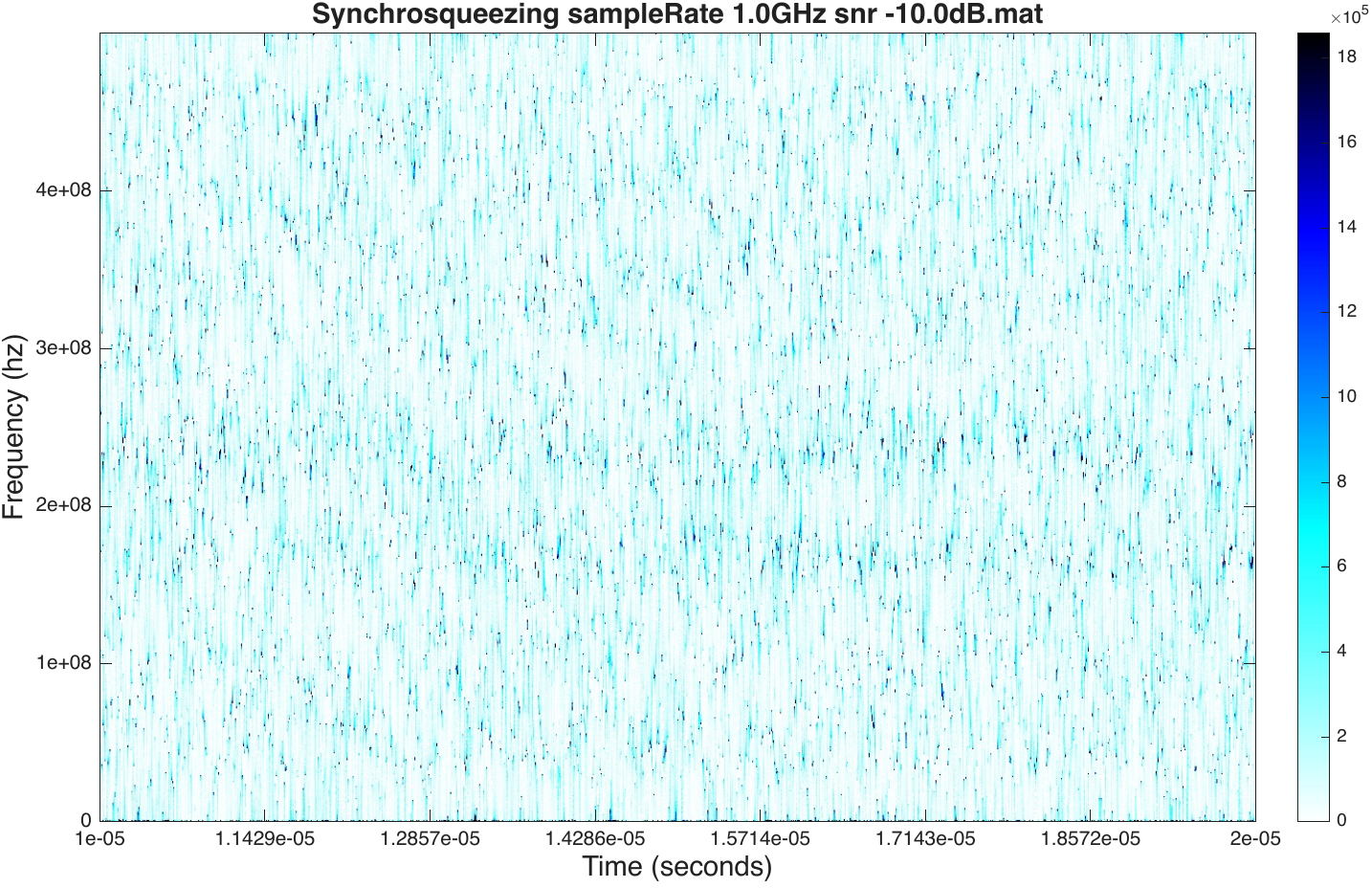}
\end{minipage}
\begin{minipage}{0.3\textwidth}
\includegraphics[width=\textwidth]{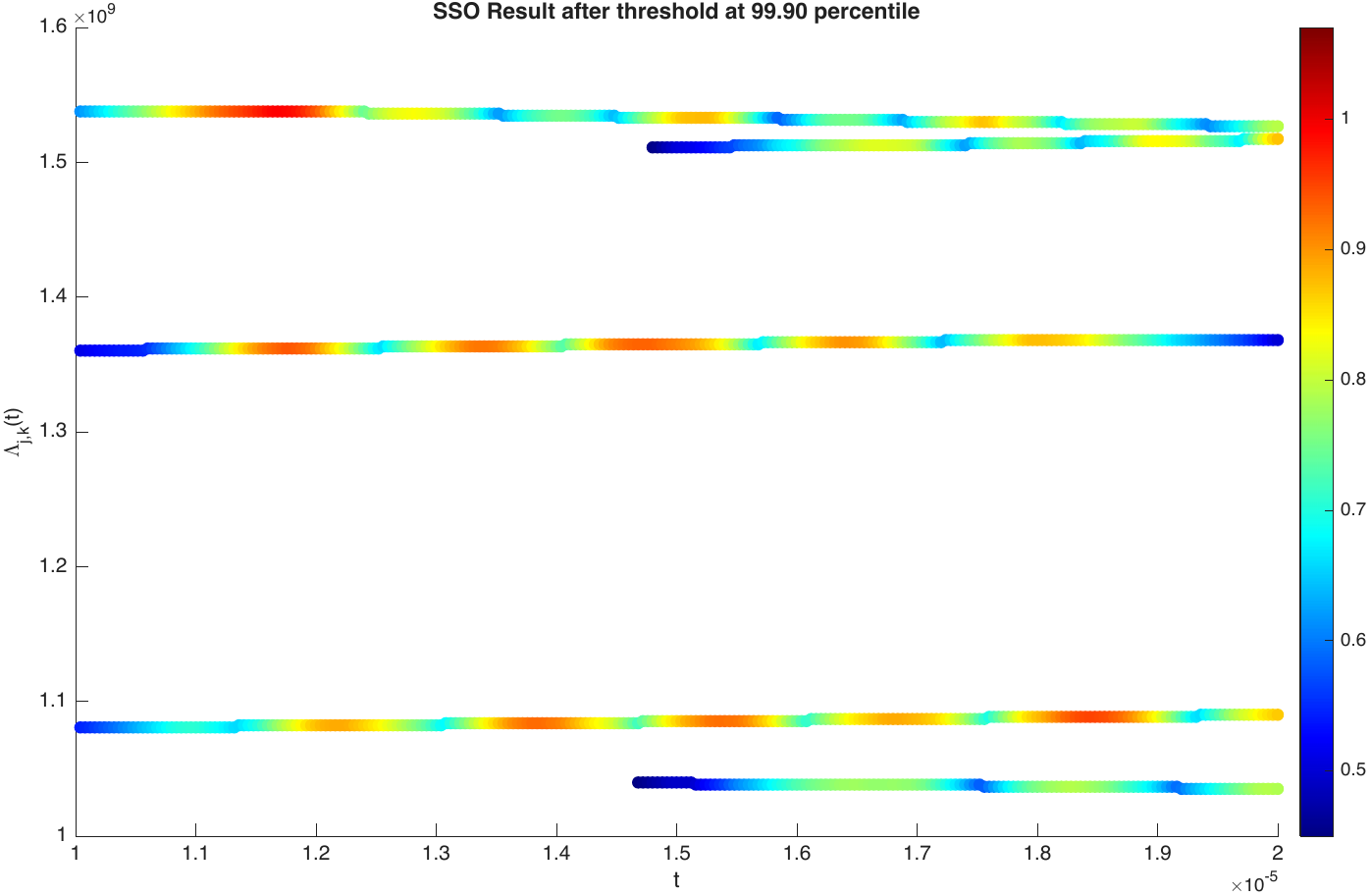}
\end{minipage}

\caption{This figure shows the comparison between SST and SSO method for $t$ between $1 \times 10^{-5}$ and $2 \times 10^{-5}$. The SST plots for example 1 with sampling rate 1 GHz at SNR 10,  -10 dB and the SSO plot for example 1 with sampling rate 1 GHz at SNR -10 dB.}
\label{fig:sst_1ghz}
\end{figure}

\color{black}
\bhag{Conclusions}\label{bhag:conclusion}
Separation of a superposition of blind source signals is an important problem in audio and radar signal processing, with many applications, including electronic warfare and electronic intelligence.
In this paper, we have modified a method proposed in \cite{bspaper} for solving this problem in the case when the instantaneous frequencies of the components are linear.
The method is based on a filtered FFT and hence is very fast.
While the paper \cite{bspaper} assumes the frequencies to be slowly and continuously varying, our modification includes the case when the signals are discontinuous and when there are crossover frequencies present. 
We analyze theoretically the behavior of SSO in the  presence of noise, giving insight into the relationship between the sampling frequency, desired  accuracy, and noise.
We have illustrated the effectiveness of our method with a few well chosen examples and tested it on a small database of 7 signals, demonstrating consistent behavior. 
Our method outperforms the SST algorithm given in \cite{thakur2013synchrosqueezing}.

 \printnomenclature

\bhag{Appendix}\label{bhag:appendix}
We report the statistics for 16 trials for each of the SNR and sampling rates for each signal, where the RMSE is computed using \eqref{eq:exptrmse}.
\begin{table}[H]
\begin{center}
\begin{tabular}{ |c|c|c|c|c|c|c|c|c| } 
 \hline
 SNR & Example & Sampling & Total & Detected  & RMSE & Standard \\
 (dB) & label & rate (GHz) & signals & signals & & Deviation \\
 \hline
10 & 1 & 0.500000 & 12 & 12 & 0.006295 & 0.001066 \\
10 & 1 & 1.000000 & 12 & 12 & 0.006095 & 0.000770 \\
10 & 1 & 25.000000 & 12 & 12 & 0.006325 & 0.001249 \\
0 & 1 & 0.500000 & 12 & 12 & 0.006610 & 0.001203 \\
0 & 1 & 1.000000 & 12 & 12 & 0.006133 & 0.000828 \\
0 & 1 & 25.000000 & 12 & 12 & 0.006375 & 0.001291 \\
-10 & 1 & 0.500000 & 12 & 12 & 0.007433 & 0.001214 \\
-10 & 1 & 1.000000 & 12 & 12 & 0.006211 & 0.000888 \\
-10 & 1 & 25.000000 & 12 & 12 & 0.006022 & 0.000690 \\
-20 & 1 & 25.000000 & 12 & 12 & 0.006012 & 0.000666 \\
-30 & 1 & 25.000000 & 12 & 12 & 0.006436 & 0.001336 \\
10 & 2 & 0.500000 & 12 & 12 & 0.004428 & 0.002676 \\
10 & 2 & 1.000000 & 12 & 12 & 0.004197 & 0.002619 \\
10 & 2 & 25.000000 & 12 & 12 & 0.003990 & 0.002588 \\
0 & 2 & 0.500000 & 12 & 12 & 0.004496 & 0.002685 \\
0 & 2 & 1.000000 & 12 & 12 & 0.004255 & 0.002624 \\
0 & 2 & 25.000000 & 12 & 12 & 0.004038 & 0.002593 \\
-10 & 2 & 0.500000 & 12 & 12 & 0.004565 & 0.002698 \\
-10 & 2 & 1.000000 & 12 & 12 & 0.004315 & 0.002633 \\
-10 & 2 & 25.000000 & 12 & 12 & 0.004086 & 0.002607 \\
-20 & 2 & 25.000000 & 12 & 12 & 0.004038 & 0.002590 \\
-30 & 2 & 25.000000 & 12 & 12 & 0.004088 & 0.002597 \\
10 & 3 & 0.500000 & 12 & 11 & 0.003915 & 0.002504 \\
10 & 3 & 1.000000 & 12 & 11 & 0.003859 & 0.002402 \\
10 & 3 & 25.000000 & 12 & 11 & 0.003818 & 0.002293 \\
0 & 3 & 0.500000 & 12 & 12 & 0.003934 & 0.002532 \\
0 & 3 & 1.000000 & 12 & 11 & 0.003877 & 0.002427 \\
0 & 3 & 25.000000 & 12 & 11 & 0.003834 & 0.002314 \\
-10 & 3 & 0.500000 & 12 & 11 & 0.003950 & 0.002561 \\
-10 & 3 & 1.000000 & 12 & 11 & 0.003895 & 0.002452 \\
-10 & 3 & 25.000000 & 12 & 11 & 0.003825 & 0.002355 \\
-20 & 3 & 25.000000 & 12 & 11 & 0.003814 & 0.002332 \\
-30 & 3 & 25.000000 & 12 & 11 & 0.003850 & 0.002335 \\
10 & 4 & 0.500000 & 12 & 9 & 0.003918 & 0.002293 \\
10 & 4 & 1.000000 & 12 & 9 & 0.004061 & 0.002310 \\
10 & 4 & 25.000000 & 12 & 10 & 0.004335 & 0.002404 \\
0 & 4 & 0.500000 & 12 & 9 & 0.003886 & 0.002292 \\
0 & 4 & 1.000000 & 12 & 9 & 0.004025 & 0.002306 \\
0 & 4 & 25.000000 & 12 & 10 & 0.004310 & 0.002406 \\
-10 & 4 & 0.500000 & 12 & 9 & 0.003861 & 0.002296 \\
-10 & 4 & 1.000000 & 12 & 9 & 0.003994 & 0.002305 \\
-10 & 4 & 25.000000 & 12 & 10 & 0.004132 & 0.002319 \\
-20 & 4 & 25.000000 & 12 & 10 & 0.004184 & 0.002339 \\
-30 & 4 & 25.000000 & 12 & 10 & 0.004283 & 0.002408 \\
 \hline
\end{tabular}
\end{center}
 	\caption{The tables above shows full performances of our algorithm.} \label{tab:result_table_1}
\end{table}

\begin{table}[H]
\begin{center}
\begin{tabular}{ |c|c|c|c|c|c|c|c|c| } 
 \hline
 SNR & Example & Sampling & Total & Detected  & RMSE & Standard \\
 (dB) & label & rate (GHz) & signals & signals & & Deviation \\
 \hline
10 & 5 & 0.500000 & 12 & 12 & 0.005752 & 0.002110 \\
10 & 5 & 1.000000 & 12 & 12 & 0.005199 & 0.002565 \\
10 & 5 & 25.000000 & 12 & 12 & 0.004693 & 0.002723 \\
0 & 5 & 0.500000 & 12 & 12 & 0.005926 & 0.001830 \\
0 & 5 & 1.000000 & 12 & 12 & 0.005325 & 0.002474 \\
0 & 5 & 25.000000 & 12 & 12 & 0.004781 & 0.002699 \\
-10 & 5 & 0.500000 & 12 & 10 & 0.006114 & 0.001440 \\
-10 & 5 & 1.000000 & 12 & 12 & 0.005460 & 0.002364 \\
-10 & 5 & 25.000000 & 12 & 12 & 0.004971 & 0.002684 \\
-20 & 5 & 25.000000 & 12 & 12 & 0.004873 & 0.002693 \\
-30 & 5 & 25.000000 & 12 & 12 & 0.004874 & 0.002666 \\
10 & 6 & 0.500000 & 12 & 10 & 0.004385 & 0.002373 \\
10 & 6 & 1.000000 & 12 & 10 & 0.004349 & 0.002319 \\
10 & 6 & 25.000000 & 12 & 10 & 0.004416 & 0.002305 \\
0 & 6 & 0.500000 & 12 & 10 & 0.004383 & 0.002387 \\
0 & 6 & 1.000000 & 12 & 9 & 0.004365 & 0.002332 \\
0 & 6 & 25.000000 & 12 & 10 & 0.004406 & 0.002313 \\
-10 & 6 & 0.500000 & 12 & 10 & 0.004380 & 0.002400 \\
-10 & 6 & 1.000000 & 12 & 9 & 0.004380 & 0.002345 \\
-10 & 6 & 25.000000 & 12 & 10 & 0.004337 & 0.002298 \\
-20 & 6 & 25.000000 & 12 & 10 & 0.004344 & 0.002290 \\
-30 & 6 & 25.000000 & 12 & 8 & 0.004395 & 0.002322 \\
10 & 7 & 0.500000 & 12 & 10 & 0.004440 & 0.002269 \\
10 & 7 & 1.000000 & 12 & 10 & 0.004427 & 0.002231 \\
10 & 7 & 25.000000 & 12 & 11 & 0.004435 & 0.002193 \\
0 & 7 & 0.500000 & 12 & 9 & 0.004436 & 0.002280 \\
0 & 7 & 1.000000 & 12 & 10 & 0.004441 & 0.002241 \\
0 & 7 & 25.000000 & 12 & 11 & 0.004436 & 0.002204 \\
-10 & 7 & 0.500000 & 12 & 9 & 0.004435 & 0.002290 \\
-10 & 7 & 1.000000 & 12 & 9 & 0.004453 & 0.002252 \\
-10 & 7 & 25.000000 & 12 & 10 & 0.004409 & 0.002211 \\
-20 & 7 & 25.000000 & 12 & 11 & 0.004409 & 0.002202 \\
-30 & 7 & 25.000000 & 12 & 9 & 0.004441 & 0.002214 \\
 \hline
\end{tabular}
\end{center}
 	\caption{The tables above shows full performances of our algorithm.} \label{tab:result_table_2}
\end{table}

\bibliographystyle{plain}
\bibliography{mason,refs,hrushikesh}

\end{document}